\documentclass[11pt,letterpaper]{article}
\usepackage{graphicx}
\usepackage{latexsym}
\usepackage{amssymb}
\usepackage{amsmath}
\usepackage{amsthm}
\usepackage{MnSymbol}
\usepackage{tikz-cd}
\usepackage{subcaption}
\usepackage[margin=1in]{geometry}
\usepackage{paralist}
\usepackage{nicefrac}
\usepackage[noend]{algpseudocode}
\usepackage{fullpage}
\usepackage{color}
\usepackage{ifpdf}
\usepackage{fullpage}
\ifpdf    
\usepackage[bookmarks=true, pdfpagelabels, plainpages=false]{hyperref}
\else    
\usepackage[hypertex]{hyperref}
\fi
\usepackage{times}

\usepackage{multirow}
\usepackage{nicefrac}

\usepackage[linesnumbered,ruled,vlined]{algorithm2e}

\begin{document}

\newcommand {\ignore} [1] {}

\newtheorem{theorem}{Theorem}[section]
\newtheorem{lemma}[theorem]{Lemma}
\newtheorem{fact}[theorem]{Fact}
\newtheorem{corollary}[theorem]{Corollary}
\newtheorem{definition}{Definition}[section]
\newtheorem{proposition}[theorem]{Proposition}
\newtheorem{observation}[theorem]{Observation}
\newtheorem{claim}[theorem]{Claim}
\newtheorem{assumption}[theorem]{Assumption}
\newtheorem{notation}[theorem]{Notation}
\newtheorem{reduction}{Reduction}

\newcommand{\EXT}{{\texttt{$0$-Extension}}}
\newcommand{\ExtGH}{{\text{Ext}(G,H)}}
\newcommand{\ExtGprimeH}{{\text{Ext}(G',H)}}
\newcommand{\LH}{{\text{Lab}_H}}
\newcommand{\LG}{{\text{Lab}_G}}
\newcommand{\LX}{{\text{Lab}_X}}
\newcommand{\MW}{{\texttt{Multiway Cut}}}
\newcommand{\ML}{{\texttt{Metric Labeling}}}
\newcommand{\VC}{{\texttt{Vertex Cover}}}

\pagenumbering{arabic}

\title{The Metric Relaxation for $0$-Extension Admits an $\Omega(\log^{\nicefrac[]{2}{3}}{k})$ Gap}

\author{
 Roy Schwartz\thanks{The Henry and Marilyn Taub Faculty of Computer Science, Technion. E-mail: \texttt{schwartz@cs.technion.ac.il}.}
 \and
Nitzan Tur\thanks{The Henry and Marilyn Taub Faculty of Computer Science, Technion. E-mail: \texttt{nitzan.tur@cs.technion.ac.il}.}
}

\date{}
\maketitle
\thispagestyle{empty}

\begin{abstract}
We consider the {\EXT} problem, where we are given an undirected graph $\mathcal{G}=(V,E)$ equipped with non-negative edge weights $w:E\rightarrow \mathbb{R}^+$, a collection $ T=\{ t_1,\ldots,t_k\}\subseteq V$ of $k$ special vertices called terminals, and a semi-metric $D$ over $T$.
The goal is to assign every non-terminal vertex to a terminal while minimizing the sum over all edges of the weight of the edge multiplied by the distance in $D$ between the terminals to which the endpoints of the edge are assigned.
{\EXT} admits two known algorithms, achieving approximations of $O(\log{k})$ [C{\u{a}}linescu-Karloff-Rabani {\em SICOMP} '05] and $O(\log{k}/\log{\log{k}})$ [Fakcharoenphol-Harrelson-Rao-Talwar {\em SODA} '03].
Both known algorithms are based on rounding a natural linear programming relaxation called the metric relaxation, in which $D$ is extended from $T$ to the entire of $V$.
The current best known integrality gap for the metric relaxation is $\Omega (\sqrt{\log{k}})$.
In this work we present an improved integrality gap of $\Omega(\log^{\nicefrac[]{2}{3}}k)$ for the metric relaxation.
Our construction is based on the randomized extension of one graph by another, a notion that captures lifts of graphs as a special case and might be of independent interest.
Inspired by algebraic topology, our analysis of the gap instance is based on proving no continuous section (in the topological sense) exists in the randomized extension.
\end{abstract}

\maketitle

\section{Introduction}
We consider the {\EXT} problem, where we are given an undirected graph $\mathcal{G}=(V,E)$, equipped with non-negative weights $w:E\rightarrow \mathbb{R}^+$ on the edges, a set of $k$ special vertices $T=\{ t_1,\ldots,t_k\}\subseteq V$ called terminals, and a semi-metric $D:T\times T \rightarrow \mathbb{R}^+$ over the terminals.
The goal is to partition the vertices of $\mathcal{G}$ into $k$ parts $\{ S_1,\ldots,S_k\}$, where $t_i\in S_i$ for every $ i=1,\ldots,k$, while minimizing the total cost of the partition.
Given a partition, each edge contributes to the cost the distance in the semi-metric $D$ between the terminals that represent the part each of its two endpoints belong to.
Formally, the goal is to find a labeling $f:V\rightarrow T$, where $f(t_i)=t_i$ for every $i=1,\ldots,k$, that minimizes:
$$ \sum _{(u,v)\in E}w_e \cdot D(f(u),f(v)).$$

The {\EXT} problem was first considered by Karzanov \cite{K98}, and it takes its name from the fact that the objective is to extend $D$ from $T$ to a semi-metric on the entire of $V$ subject to the constraint that every non-terminal vertex is required to be at distance $0$ from one of the terminals.
{\EXT} captures the classic {\MW} problem \cite{AMM17,BCKM19,BNS18,BSW17,BSW19,CKR00,CT06,DJPSY94,FK00,KKSTY04,SV13} for the special case where $D$ is a uniform metric, {\em i.e.}, $D(t_i,t_j)=1$ for all $i\neq j$ and $0$ otherwise.
The fact that the cost of an edge depends on the terminals to which its endpoints are assigned to, makes {\EXT} considerably more challenging than {\MW}.
Moreover, {\EXT} is also a special case of the {\ML} problem \cite{AFHKTT04,CKNZ04,CN07,GT00,KKMR09,KT02}, where there are no terminals but each vertex is associated with an assignment cost to each of the terminals.
It should also be noted that {\EXT}, alongside the algorithmic techniques developed to tackle it, are related to other topics, {\em e.g.}, Lipschitz extension in Banach spaces \cite{JLS86,LN04}, metric embedding \cite{KLMN05}, and the approximation of metrics by tree metrics \cite{FRT07}.

Since its introduction, many special cases of {\EXT} were studied, {\em e.g.}, planar and minor free graphs, bounded diameter $D$, and $D$ is the shortest path metric of a high girth expander graph.
When considering the general case, C{\u{a}}linescu, Karloff and Rabani \cite{CKR05} were the first to provide an approximation achieving a guarantee of $O(\log{k})$.
This bound was later improved by Fakcharoenphol, Harrelson, Rao, and Talwar \cite{FHRT03} to $O(\log{k}/\log{\log{k}})$ and is the current best known approximation for the general case.
Both these algorithms are based on a natural linear programming relaxation called the {\em metric relaxation}, which optimizes over all metric extensions of $D$ to the entire of $V$.
This metric relaxation was first given in \cite{K98}.
When considering lower bounds for the metric relaxation, an integrality gap of $\Omega(\sqrt{\log{k}})$ is known and was given by \cite{CKR05}.
This is the current best known integrality gap for the metric relaxation.

A different relaxation was given by Chekuri, Khanna, Naor and Zosin \cite{CKNZ04}, in which every vertex corresponds to a distribution over terminals and the distance between the distributions of two neighboring vertices is measured using earthmover distances. 
This relaxation is known as the {\em earthmover relaxation} for {\EXT}.
The current best known integrality gap for this relaxation also equals $\Omega(\sqrt{\log{k}})$ and was given by Karloff, Khot, Mehta and Rabani \cite{KKMR09} (one should note that the gap example of \cite{KKMR09} differs from that of \cite{CKR05}).
Moreover, it was proved by Manokaran, Naor, Raghavendra and Schwartz \cite{MNRS08} that assuming the unique games conjecture any integrality gap of the earthmover relaxation translates to hardness of the same value as the gap.
Since the earthmover relaxation finds the best transportation metric that extends $D$ to the entire of $V$, as opposed to the metric relaxation that finds the best (arbitrary) metric that extends $D$ to the entire of $V$, one can infer that the earthmover relaxation is at least as strong as the metric relaxation for {\EXT}.

Despite the above, which suggests that one should always favor the earthmover relaxation over the metric relaxation when focusing on {\EXT}, there are two important things to note.
First, both known algorithms for the general case of {\EXT} \cite{CKR05,FHRT03} are based on the metric relaxation and not the earthmover relaxation.
Hence, it is not known how to algorithmically exploit the fact that the metric in the relaxation is a transportation metric (as in the earthmover relaxation) as opposed to an arbitrary metric (as in the metric relaxation).
Second, both integrality gap instances of both relaxations equal $\Omega(\sqrt{\log{k}})$, a barrier that seems inherent in known approaches for designing integrality gap instances for {\EXT} (see also Section \ref{sec:Techniques}).
The above lead to the following questions regarding {\EXT}, that were raised in \cite{FHRT03}: can the gap between $O(\log{k}/\log{\log{k}})$ and $\Omega(\sqrt{\log{k}})$ for the metric relaxation be closed?
Is the earthmover relaxation indeed strictly stronger than the metric relaxation and what is its tight integrality gap?
Unfortunately, no progress has been made regarding these questions since the above mentioned works \cite{CKR05,FHRT03,KKMR09}.

\subsection{Our Result}\label{sec:Results}
In this work we make progress in answering the above questions of \cite{FHRT03}, and present an improved integrality gap of $\Omega(\log^{\nicefrac[]{2}{3}}k)$ for the metric relaxation for {\EXT}.
This improves the previous known gap of $\Omega(\sqrt{\log{k}})$, given by \cite{CKR05}.
The following theorem summarizes our main result.
\begin{theorem}\label{thrm:Main}
For every $k$ the metric relaxation for {\EXT} admits an integrality gap of $\Omega(\log^{\nicefrac[]{2}{3}}k)$.
\end{theorem}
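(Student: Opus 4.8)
The plan is to produce, for every $k$, a single instance on which the metric relaxation is cheap while every integral labeling is expensive by a factor $\Omega(\log^{\nicefrac[]{2}{3}}k)$. The instance will be a \emph{randomized extension} $\mathcal{G}=\mathrm{Ext}(G,H)$ of a base graph $G$ by a graph $H$: I take $H$ to be a constant-degree spectral expander on the terminal set $T$, with $|T|=k$ and girth $\Theta(\log k)$, and declare $D$ to be its shortest-path metric (suitably scaled); the base graph $G$ is a second bounded-degree expander, whose size we may take as large as we like. The vertex set of $\mathcal{G}$ is $V(G)\times V(H)$; inside each fiber $\{u\}\times V(H)$ we place a lightly weighted copy of $H$, and for each edge $(u,v)$ of $G$ we join the fibers over $u$ and $v$ through a random permutation $\pi_{uv}$ of $V(H)$ drawn compatibly with the structure of $H$, with the weights on these inter-fiber matchings tuned so that the terminal fiber $\{u_0\}\times V(H)$ realizes $D$ exactly and the two optima sit at the desired ratio. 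Graph lifts are the special case in which $H$ carries no structure beyond its vertex set, which is why the construction is phrased at this level of generality.

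For the fractional upper bound the point is that, because $H$ has girth $\Theta(\log k)$ and the inter-fiber matchings are random, balls of radius up to $\Theta(\log k)$ in $\mathcal{G}$ are trees, so $\mathcal{G}$ locally resembles the universal cover of $H$. I would take the semimetric $d$ on $V(\mathcal{G})$ obtained by ``spreading'' each non-terminal vertex over its fiber via a short diffusion in $H$ --- equivalently, a smoothed pull-back of the tree metric, or a convex combination of discretized sections of $\mathcal{G}\to G$ --- and bound $\sum_e w_e\,d$ edge by edge, using that one diffusion step moves mass only a bounded $D$-distance and that the inter-fiber weights are small. (The earthmover relaxation need not be this cheap on the same instance, consistent with the improvement being for the metric relaxation.) Once the parameters are fixed this step is a calculation.

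The heart of the proof is the integral lower bound, where I follow the topological intuition of the introduction. Suppose $f\colon V(\mathcal{G})\to T$ costs less than $c\log^{\nicefrac[]{2}{3}}k$ times the fractional optimum, for a small constant $c$. Since $D$ is a scaling of the path metric of a girth-$\Theta(\log k)$ graph, every edge on which $f$ changes value pays $\Omega(1)$, so $f$ is locally constant off a small set of edges; a Markov/averaging argument then shows that, on all but a small fraction of the fibers and of the edges of $G$, $f$ restricts to an $H$-isometry $\phi_u$ on each fiber (with $\phi_{u_0}=\mathrm{id}$) whose values across an edge $(u,v)$ are tied by $\phi_v\approx\phi_u\circ\pi_{uv}^{-1}$. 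Coarsening by these good pieces turns this data into a genuine combinatorial section of the covering $\mathcal{G}\to G$ over a large subgraph $G'\subseteq G$, consistent around all short cycles of $G'$. The final step rules such a section out: with high probability over the construction the monodromy $\{\pi_{uv}\}$ is cohomologically nontrivial on \emph{every} large subgraph of $G$ --- a first-moment estimate over the (exponentially many) candidate subgraphs and the at most $k$ candidate sections on each, the girth and expansion of $G$ and $H$ guaranteeing that any such $G'$ has large first Betti number --- so a section consistent around enough cycles cannot exist, a contradiction. The exponent $\nicefrac[]{2}{3}$ then emerges from optimizing the construction's free parameters: the girth of $H$ caps the relevant length scale at $\Theta(\log k)$, the coarsening scale needed to extract an \emph{exact} section trades against how many fibers $f$ is allowed to spoil, and it is this balance, not the topology itself, that leaves the gap at $\Theta(\log^{\nicefrac[]{2}{3}}k)$ rather than $\Theta(\log k)$.

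I expect the main obstacle to be exactly this quantitative passage: turning ``$f$ is cheap'' into ``$f$ induces a genuine continuous section'' with little loss, and proving the no-section statement robustly and uniformly over all large subgraphs $G'$ at once, so that the several scales balance to $\nicefrac[]{2}{3}$. A tighter extraction here is presumably what one would need to push the gap toward the conjectured $\Theta(\log k)$.
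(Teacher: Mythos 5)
Your high-level program — build the instance from a randomized extension $\mathrm{Ext}(G,H)$ of one expander by another, argue a cheap fractional solution, and show a cheap integral solution would force a combinatorial ``section'' of the extension that a first-moment bound over subgraphs and candidate sections rules out — is indeed the paper's program, and phrases like ``cohomologically nontrivial monodromy'' and ``large first Betti number'' track the paper's cycle-homeomorphism / $b_1(R)$ argument quite faithfully. But the concrete instance you propose is materially different from the paper's, and the differences are not cosmetic: they remove the very mechanism the paper uses to extract a section from a cheap labeling.

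The paper does \emph{not} put terminals on a single fiber of $\mathrm{Ext}(G,H)$. Instead it takes $X=V_G\times V_H$ and adjoins a disjoint copy $T$ of $X$ as terminals, with one pendant edge $(v,v_T)$ of weight $1/L$ for each $v\in X$; and $D$ is the shortest-path metric of the random graph $G_X$ itself, shifted by $2L$ (not the metric of $H$). The parameters are $\ell_G\equiv\log^{2/3}n$, $\ell_H\equiv\log^{1/3}n$, $L=\log n$, and $k=|X|=n^2$, so the gap in $n$ is the gap in $k$ up to constants. The pendant vertex--terminal edges are essential: they are what force a cheap $f$ to assign each cloud a representative whose cloud is within $\varepsilon\log n$ $G$-steps (second stage of Lemma~\ref{lemma:CldsHasReps}), which is condition~3 in Definition~\ref{def:Split}, which in turn is what lets the proof of Theorem~\ref{thrm:Split} bound the concatenated paths by the girth of $G$ and conclude cycle-homeomorphism. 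In your construction, with terminals on one fiber $\{u_0\}\times V_H$, nothing ties an arbitrary cloud $u$'s labels to $u$ itself, so a cheap integral $f$ is free to be (nearly) constant on all fibers and (nearly) constant globally, and no section is forced.

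Two further specific points. First, the claim that a cheap $f$ ``restricts to an $H$-isometry $\phi_u$ on each fiber'' is not justified and runs against what cost minimization pushes for: an intra-fiber edge pays $\geq 2L\cdot w$ whenever its endpoints receive different labels, so a cheap $f$ wants to be \emph{constant} on a fiber, not an automorphism. The paper proves exactly ``nearly constant per cloud'' and builds the representative map $\bar f$ on it; nothing in the argument is an isometry. Second, the girth hypothesis you place on $H$ (for local tree-likeness) is not the one the topological step needs. It is $G$'s girth $\Omega(\log n)$ that makes the concatenated cycle in the proof of Theorem~\ref{thrm:Split} shorter than girth and hence degenerate, which is the whole content of the cycle-homeomorphism claim; $H$'s girth plays only an auxiliary role in controlling non-inner connected components in Section~\ref{sec:NoSplit}. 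Finally, the fractional bound in the paper is trivial once the instance is set up — take $\delta$ to be the shortest-path metric of $\mathcal G$, so every edge contributes exactly $1$ and the fractional cost is $O(n^2)$ (Lemma~\ref{lem:fracsol}); the $+2L$ shift in $D$ is precisely what guarantees this $\delta$ restricts correctly to $T$. Your diffusion / convex-combination-of-sections argument is both harder than necessary and, with only one terminal fiber and random inter-fiber matchings creating shortcuts, not obviously feasible. In short: right program, but the instance as specified neither makes the fractional bound go through cleanly nor forces the section you need to refute.
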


\subsection{Our Approach}\label{sec:Techniques}
To better present our approach for designing an improved integrality gap, we start with an intuitive bird's-eye description of how both algorithms for {\EXT} operate \cite{CKR05,FHRT03}.
In what follows it is assumed that distances are partitioned by a logarithmic scale, and thus distances within the same scale are equal up to a multiplicative constant.
First, every non-terminal vertex $u$ chooses a scale that is comparable to the distance of $u$ to its closest terminal (a size that is denoted by $A_u$).
We note that this step must be randomized, such that a pair of close neighboring non-terminals should be in the same scale with a high enough probability.
Second, $u$ is randomly assigned to a terminal that is within a distance of (roughly) the chosen scale of $A_u$ from it. 
This is done independently for each scale.
The algorithm of \cite{CKR05} introduced the above approach and achieves an approximation of $O(\log{k})$.
The improvement to $O(\log{k}/\log{\log{k}})$ \cite{FHRT03} is obtained by observing that the above choice of scale is not the only possible one and choosing any scale in a wide range also provides the same worst case approximation guarantee of $O(\log{k})$.
Thus, \cite{FHRT03} exploited this observation and proved that a smart random choice of a scale suffices to obtain the improved result.

An important conclusion of the above bird's-eye description of how both known algorithms operate, is that a hard instance for {\EXT} should be hard in a {\em wide range} of scales.
A fact that is crucial when designing and analyzing integrality gap instances.
For example, both integrality gap instances \cite{CKR05,KKMR09} are based in their core on a {\em single} expander graph, which in turn (after some modifications) cannot produce a gap larger than $\Omega(\sqrt{\log{k}})$.
Intuitively, the reason for this is the following.
Fix an arbitrary scale.
If the scale is at most $O(\sqrt{\log{k}})$ then the algorithmic approach of \cite{CKR05} yields a loss of $O(\sqrt{\log{k}})$.
Otherwise, if the scale is $\Omega(\sqrt{\log{k}})$ one can always use instead a scale that is $\Theta(\log{k})$, which is the expander's diamater.
In this scale one can easily solve the problem with only a constant loss, {\em e.g.}, by assigning all non-terminals to the same terminal.
As before, there is an overall loss of $O(\sqrt{\log{k}})$ that originates from shifting to a larger scale.
Hence, the $\Omega(\sqrt{\log{k}})$ barrier seems to be inherent in known approaches for designing integrality gap instances for {\EXT}.

To ensure that our instance is hard enough in multiple scales, our improved $\Omega(\log^{\nicefrac[]{2}{3}}{k})$ integrality gap instance is based on the natural notion of the randomized extension of one graph by another.
Specifically, given two graphs $G$ and $H$, the randomized extension of $G$ by $H$ is constructed by placing a copy of $H$ for every vertex of $G$ (such a copy of $H$ is called a ``cloud'') and two neighboring (with respect to $G$) copies of $H$ are connected by a uniform random perfect matching.
This type of extension captures {\em  lifts of graphs} as a special case where $H$ contains no edges.
Lifts of graphs have attracted much attention in recent years \cite{AKM13,AKKSTV08,BL06,F03,LP10,MSS13,MO20,MOP20,OW20,RSW06} due to their applications to the construction of exapnder graphs and their relation to the unique games conjecture.
To the best of our knowledge, we are not aware of any prior use of the more general notion of the randomized extension of $G$ by $H$.
Moreover, it should be noted that the randomized extension of $G$ by $H$ is reminiscent of group extension, hence its name.
We believe this notion might be of independent interest.
In our instance, $G$ and $H$ will be equipped each with a uniform length function over the edges such that each length function corresponds to a different scale.
This enables us when considering the random extension of $G$ by $H$ to obtain better hardness for a wider range of scales.

For every possible realization of the randomized extension of $G$ by $H$, we construct an instance for \EXT~such that a small integrality gap implies a special property of the extension which we call a ``split''.
This property is a (much) weaker version of the property of having a representative vertex in each cloud such that neighboring ``clouds'' have close representatives.
It is worth mentioning that the definition of ``split'' is inspired by the notion of split extensions of groups when the graphs $G$ and $H$ are Cayley graphs. 

Before giving a more detailed overview of the proof, we only mention that our proof has two parts.
The first is that a cheap integral solution to our instance implies a ``split''.
The second is that a ``split'' does not occur with a positive probability (over the random choice of the extension of $G$ by $H$).
Thus, we can conclude that there is an instance for which there is no cheap integral solution.
This instance is the gap instance.
The intuition for our analysis of integral solutions of our gap instance, as well as the exact definition of ``split'', comes from algebraic topology.
However, the proof is (almost) self contained and no knowledge in algebraic topology is assumed.

\subsection{Proof Overview}

On a high level, the construction of the instance takes two copies of our randomized extension of $G$ by $H$ and uses one copy as terminals.
Thus, one can think of an integral solution as a map from the extension to itself.
We further prove that for most of the ``clouds'' this map must be almost a constant (when restricted to the ``cloud''). 
This induces the representative of a ``cloud''.
We additionally prove that neighbouring ``clouds'' have close representatives.

The only obstruction for this map to give us the desired ``split'' is that the image of a cloud may not lie inside the cloud.
Fortunately, while the latter may be true, one can use the high girth of $G$ to prove that the induced map from $G$ to itself (which takes a ``cloud'' to its representative's ``cloud'') is close enough (in some topological sense) to the identity.
This notion of closeness is inspired from algebraic topology: we essentially prove that the map is homotopic to the identity.
Thus, persevering the homologies of $G$. As we do not wish to assume any knowledge in topology, we instead directly prove that the map preserves the cycle structure of $G$ (which is in fact the first homology), {\em i.e.}, every cycle of $G$ is mapped, up to shrinkage, to itself.

To finalize the proof, we show that with high probability such a ``split'' does not exist.
We use a union bound argument, however there are too many possible realizations of the extension of $G$ by $H$.
Therefore, we introduce a combinatorial structure we call a certificate, which can be intuitively thought of as a description of a subgraph of a realization of the extension of $G$ by $H$ which contains all shortest paths between neighbouring representatives.
Building upon the cycle structure (mentioned above) and by using linear algebra, we show that this subgraph's cycle structure is at least as rich as the cycle structure of $G$ itself.
This idea comes again from algebraic topology as we essentially bound the Euler characteristic of the subgraph.
We conclude the union bound by proving that each cycle in the subgraph will give us a constraint which holds with low probability and by bounding the number of certificates.

\subsection{Related Work}\label{sec:RelatedWork}
Improved approximation guarantees for some special cases of \\{\EXT} were studied: $O(1)$ if the graph is planar \cite{CKR05}; \\$O(\sqrt{\text{diam}(D)})$ where $\text{diam}(D)$ is the ratio between the largest and smallest distance in $D$ and $O(1)$ if $D$ is the shortest path metric of a high girth expander \cite{KKMR09}.
Recalling that {\EXT} captures {\MW} as a special case, the fact that the latter is known to be APX-hard \cite{DJPSY94} implies that the former is APX-hard as well.
As previously mentioned, the works of \cite{KKMR09} and \cite{MNRS08} imply that assuming the unqiue games conjecture {\EXT} admits a hardness of $\Omega (\sqrt{\log{k}})$.
An additional and incomparable hardness of $\Omega (\log^{(\nicefrac[]{1}{4}-\varepsilon)}{n})$, for any $\varepsilon >0$, was given by \cite{KKMR09}, assuming $\text{NP}\nsubseteq \text{DTIME}(n^{\text{polylog}n})$.

Focusing on {\MW}, \cite{DJPSY94} introduced the problem and proved that a simple greedy algorithm provides an approximation of $2(1-\nicefrac[]{1}{k})$.
This was improved by \cite{CKR00} who suggested a geometric relaxation and used it to achieve an approximation of $(\nicefrac[]{3}{2}-\nicefrac[]{1}{k})$.\footnote{We note that the geometric relaxation of \cite{CKR00} for {\MW} is the earthmover relaxation when restricted to the case where $D$ is a uniform metric.}
A sequence of works, all based on the above geometric relaxation, provided improved approximations \cite{BNS18,KKSTY04,SV13}, culminating in an approximation of $1.2965$ \cite{SV13}.
Similarly to {\EXT}, the work of \cite{MNRS08} implies that any integrality gap for the geometric relaxation translates into hardness of the same value as the gap assuming the unique games conjecture.
\cite{FK00} provided an integrality gap of $8/(7+1/(k-1))$, which was subsequently improved to $1.2$ by \cite{AMM17} and to $1.20016$ by \cite{BCKM19}.
Special cases and variants of {\MW} were also studied.
For $k=3$ a tight approximation with a matching lower of $\nicefrac[]{12}{11}$ was given by \cite{CT06,KKSTY04}, whereas improved approximations for $k=4,5$ were provided by \cite{KKSTY04}.
In the case the graph is dense and unweighted \cite{AKK99,FK96} provided polynomial time approximation schemes.
The node variant of {\MW} was studied by \cite{GVY04}, who presented an approximation of $2(1-\nicefrac[]{1}{k})$ and proved that improving the $2$ factor in their approximation yields an approximation better than $2$ for {\VC}, which assuming the unique games conjecture is impossible \cite{KR08}.
Additionally, the directed variant of {\MW} was studied by \cite{NZ01}, who presented an approximation of $2$, improving the previous approximation of $O(\log{k})$ \cite{GVY04}.

Focusing on {\ML}, \cite{KT02} introduced the problem and presented an approximation of $2$ for the case $D$ is a uniform metric and $O(\log{k})$ for general metrics (the latter is based on the approximation of metrics by tree metrics \cite{B96,B98,FRT07}).
Using the earthmover relaxation of \cite{CKNZ04}, \cite{AFHKTT04} presented an approximation of $O(\log{n})$.
An integrality gap of $\Omega (\log{k})$ for the earthmover relaxation was given by \cite{KKMR09}, which in conjunction with  \cite{MNRS08} also translates into hardness of the same value assuming the unique games conjecture.
An additional and incomparable hardness result of $\Omega (\log^{(\nicefrac[]{1}{2}-\varepsilon)}k)$, for every $\varepsilon >0$, assuming $\text{NP}\nsubseteq \text{DTIME}(n^{\text{polylog}n})$ was given by \cite{CN07}.
Special cases of {\ML} also admit improved approximations:
$O(1)$ if $D$ is a planar metric \cite{AFHKTT04}; $4$ if $D$ is a truncated linear metric \cite{GT00} (this was improved to $2+\sqrt{2}$ by \cite{CKNZ04}); and an exact solution if $D$ is a linear metric \cite{CKNZ04}.

Lifts of graphs have been studied extensively in recent years.
The main motivation for studying lifts comes from the generation of random {\em regular} expander graphs \cite{AKM13,BL06,LP10,RSW06}, and specifically Ramanujan graphs \cite{F03,MSS13,MO20,MOP20,OW20}.
Lifts of graphs are related to additional topics, {\em e.g.}, the unique games conjecture \cite{AKKSTV08}.

\noindent {\bf{Paper Organization.~~~~~}}
Section \ref{sec:Prelim} contains the formal definition of the metric relaxation for {\EXT}.
Section \ref{sec:GapInstance} defines the notion of randomized extension of one graph by another and our integrality gap instance.
Section \ref{sec:FracSol} is dedicated to analyzing the fractional solution, whereas Section \ref{sec:IntegralSol} contains the basic notions we introduce towards analyzing integral solutions: cycle-homeomorphism and split.
Section \ref{sec:NoSplit}, building upon Section \ref{sec:IntegralSol}, introduces the concept of a certificate and utilizes it to finalize the analysis of integral solutions.

\section{Preliminaries}\label{sec:Prelim}
Recall that a semi-metric space $(V,\delta)$ is comprised of a ground set $V$, and a semi-metric function $\delta:V\times V\rightarrow \mathbb{R}^+$ satisfying: $(1)$ $\delta (u,u)=0$ for every $u\in V$; $(2)$ $\delta (u,v)=\delta (v,u)$ for every $u,v\in V$; and $(3)$ $\delta (u,v)+\delta(v,w)\geq \delta (u,w)$ for every $u,v,w\in V$.\footnote{We note that if condition $(1)$ is changed to $\delta (u,v)=0$ if and only if $u=v$ then $(V,\delta)$ is a metric space.}
The metric relaxation for {\EXT} is denoted by $(MET)$ and is defined as follows:
\begin{align}
(MET)~~~~~\min ~~~& \sum _{e=(u,v)\in E} w_e \cdot \delta(u,v) & \nonumber \\
s.t. ~~~& \text{$(V,\delta)$ is a semi-metric space} &\\
& \delta(t_i,t_j)=D(t_i,t_j) & \forall t_i,t_j\in T, i\neq j
\end{align}
Clearly, $(MET)$ can be formulated as a linear program.
Additionally, in the paper we denote by $\log$ the natural logarithm.

\section{Integrality Gap Instance}\label{sec:GapInstance}
In order to present our integrality gap instance, we introduce the notion of {\em randomized extension of $G$ by $H$}, for given two graphs $G$ and $H$.
Informally, given $G$ and $H$, the randomized extension of $G$ by $H$ is a random graph whose vertices are obtained by inflating every vertex of $G$ into a copy of $H$.
Each inflated vertex of $G$ is called a {\em cloud}, and two clouds (which correspond to neighboring vertices of $G$) are connected by a uniformly random perfect bipartite matching.

As previously mentioned in Section \ref{sec:Techniques}, the above definition is inspired by the groups extensions.
Specifically, if $G$ and $H$ are Cayley graphs of two groups (which for simplicity of presentation we also denote by $G$ and $H$), then {$\ExtGH$} is a distribution over graphs that contains in its support the Cayley graphs of all group extensions of the group $G$ by the group $H$.

The following definition formally introduces the above notion and extends it to the case where $G$ and $H$ are equipped with edge lengths.
One can view the definition in Figure \ref{fig:ExtGH}.

\begin{definition}\label{def:SemiDirect}
Given two graphs $G=(V_G,E_G)$ and $H=(V_H,E_H)$ denote by ${\ExtGH}$ the {\em randomized extension of $G$ by $H$} which is the following distribution over graphs whose vertex set is $V_G\times V_H$ and edge set $E_{\ExtGH}$ is sampled in the following manner:
\begin{enumerate}
    \item For every $g\in V_G$: $((g,h_1),(g,h_2))\in E_{{\ExtGH}}$ if and only if $(h_1,h_2)\in E_H$ (intra-cloud edges).
    
    \item For every $(g_1,g_2)\in E_G$ add to $ E_{{\ExtGH}}$ a uniformly random perfect bipartite matching between the following two sets of vertices of $V_{{\ExtGH}}$: $\{ (g_1,h):h\in V_H\}$ and $ \{ (g_2,h):h\in V_H\}$ (inter-cloud edges). 
\end{enumerate}
Moreover, if $\ell_G:E_G\rightarrow \mathbb{R}^+$ and $\ell_H:E_H\to \mathbb{R}^+$ are non-negative lengths on the edges of $G$ and $H$ respectively, then ${\ExtGH}$ is equipped with the following length function $\ell _{{\ExtGH}}$:
\begin{enumerate}
    \item $\ell _{{\ExtGH}}(((g,h_1),(g,h_2)))=\ell _H ((h_1,h_2))$ for every\\ $ ((g,h_1),(g,h_2))\in E_{{\ExtGH}}$ (intra-cloud lengths).
    \item $\ell _{{\ExtGH}}(((g_1,h_1),(g_2,h_2)))=\ell _G ((g_1,g_2))$ for every\\ $ ((g_1,h_1),(g_2,h_2))\in E_{{\ExtGH}}$ where $g_1\neq g_2$ (inter-cloud lengths).
\end{enumerate}
\end{definition}

\begin{figure}[t]
  \centering
    \includegraphics[width=0.7\linewidth]{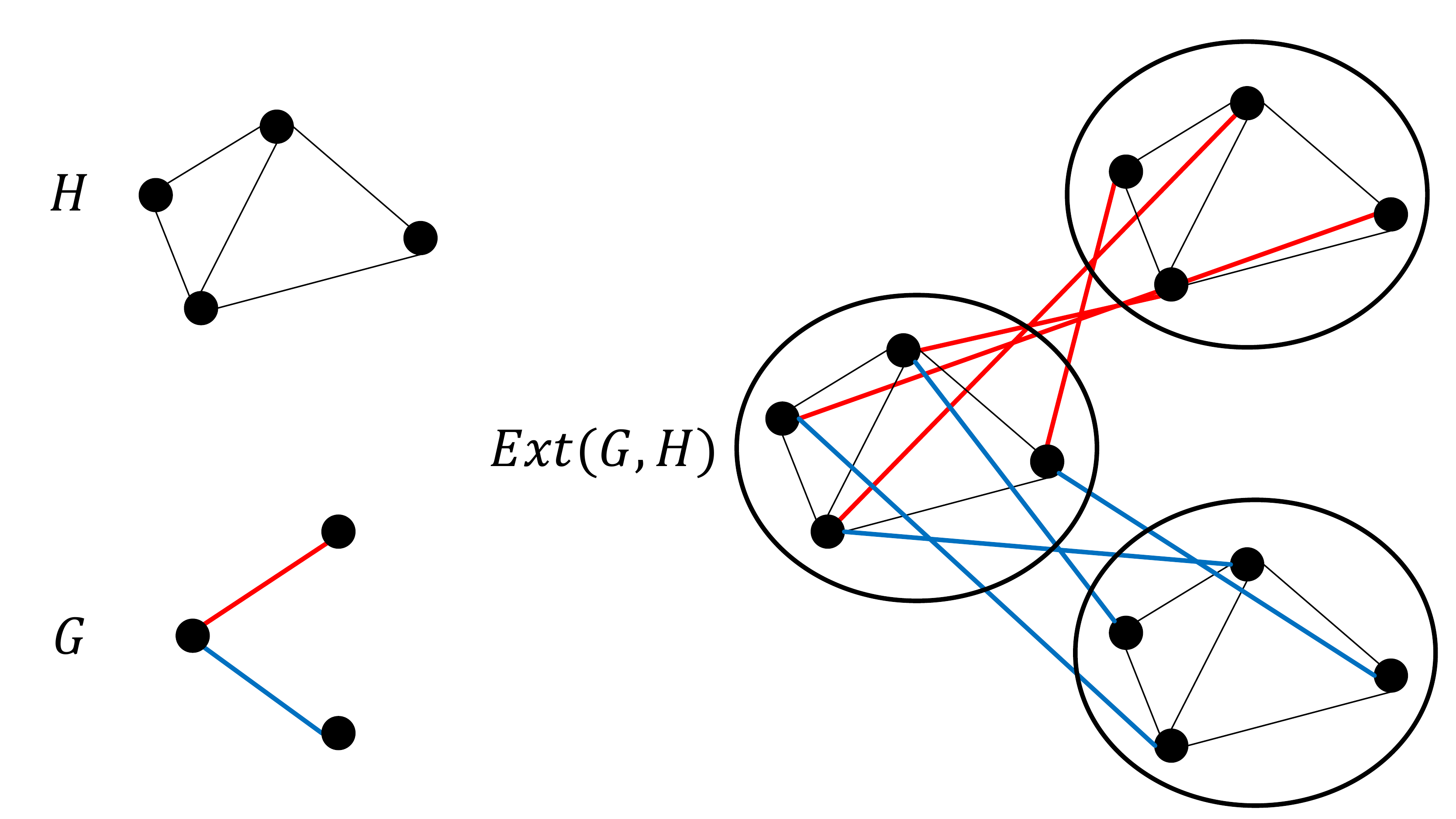}
    \caption{A random sample of $\ExtGH$}
    \label{fig:ExtGH}
  \end{figure}
  
  \begin{figure}[t]
    \includegraphics[width=0.7\linewidth]{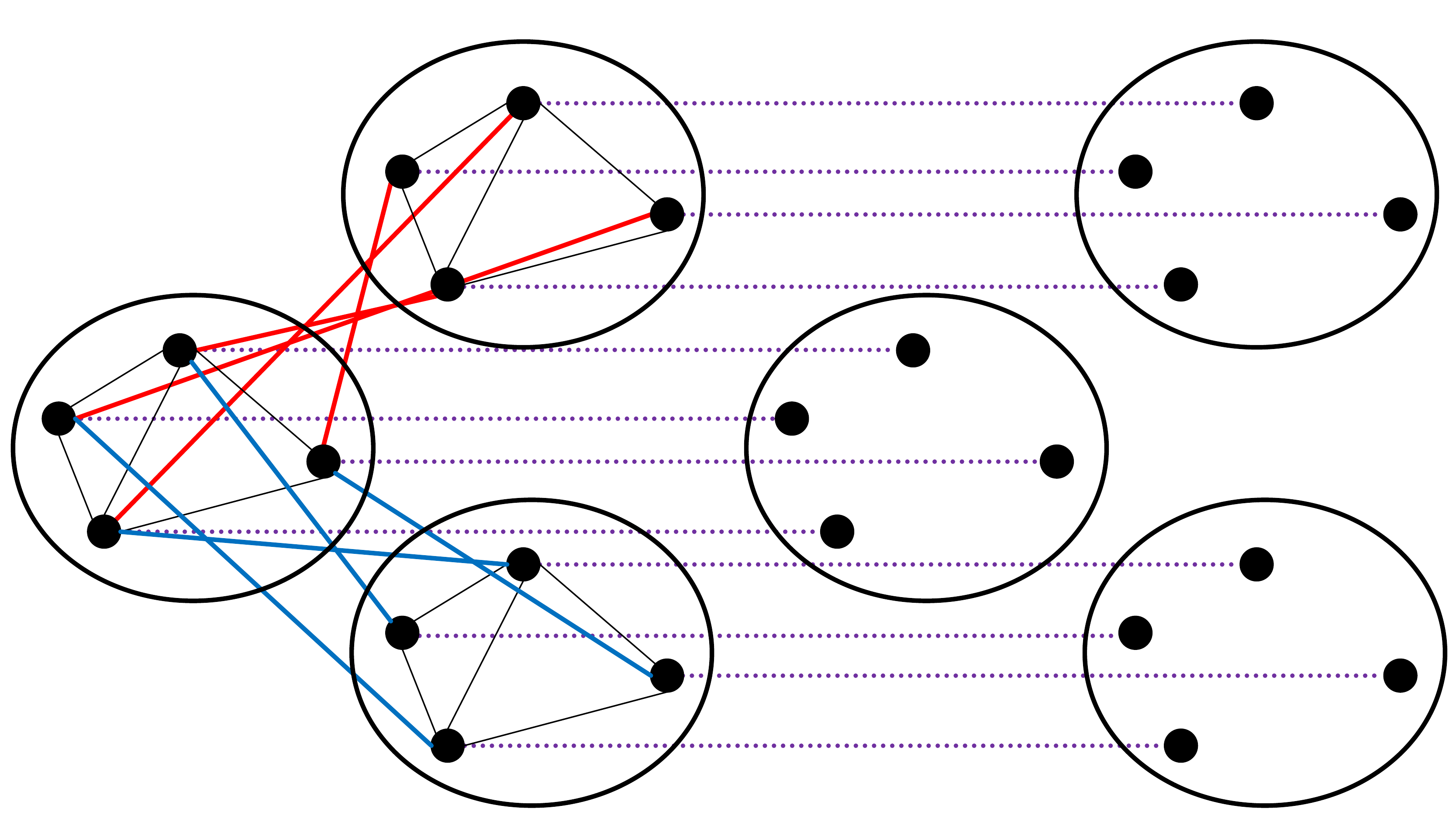}
    \caption{The corresponding instance of $\mathcal{I}(G_X,\ell,L)$}
    \label{fig:Inst}
\end{figure}

Let us now focus on our integrality gap instance for {\EXT}.
Our construction is parameterized by a weighted graph $G_X=(X,E_X)$ (over some collection of points $X$), equipped with non-negative edge lengths $\ell:E_X\rightarrow \mathbb{R}^+$, and a non-negative number $L$.
We denote this instance by $\mathcal{I}(G_X,\ell,L)$.

In what follows, we formally define $\mathcal{I}(G_X,\ell,L)$, given $G_X$, $\ell$ and $L$.
The graph $\mathcal{G}=(V,E)$ has vertices $V\triangleq X \cupdot T$ (where $T$ is a disjoint copy of $X$).
$T\subseteq V$ is set to be the terminals.
For simplicity of presentation, we use the notation $v$ for a point in $X$ and $v_T$ for its counterpart in $T$.
The metric $D$ on $T$ is defined as follows, where $D_X:X\times X\rightarrow \mathbb{R}^+$ is the shortest path metric of the given graph $G_X$ equipped with the given length function $\ell$:
$$ D(u_T,v_T)\triangleq \begin{cases} D_X(u,v)+2L & \text{if }u\neq v\\ 0 & \text{if }u=v.\end{cases}$$
All that remains is to define the edges $E$ of $\mathcal{G}$ and the weight function $w$ over $E$.
An edge $e=(u,v)$ is added to $E$ with weight $w_e=1/\ell(u,v)$ for every $(u,v)\in E_X$, and an edge $e=(v,v_T)$ is added to $E$ with weight $w_e=1/L$ for every $v\in X$.

There are two things to note, regarding the above instance definition.
First, one can easily verify that $D$ is a metric space over $T$.
The reason is that $D$ equals $D_X$ with an additive term of $2L$.
Second, $D$ can be seen as the shortest path metric over $\mathcal{G}$, when restricted only to distances between pairs of points in $T$, where $\mathcal{G}$ is equipped with the following edge lengths: $(u,v)\in E_X$ whose length is $D_X(u,v)$ and $(v,v_T)\in E$ whose length is $L$.

In order to conclude the construction of our instance, we are required to choose $G_X$, $\ell$, and $L$.
Let $G=(V_G,E_G)$ and $H=(V_H,E_H)$ be two graphs equipped with non-negative edge lengths $\ell_G:E_G\rightarrow \mathbb{R}^+$ and $\ell_H:E_H\rightarrow \mathbb{R}^+$, respectively, such that:
$(1)$ $G$ and $H$ are both expanders each with $n$ vertices and a constant bounded degree $d \geq 3$ (to be determined later); $(2)$ $G$ has girth $\Omega(\log{n})$; $(3)$ $\ell _G(e)\equiv \log^{\nicefrac[]{2}{3}}n$ for every $e\in E_G$; and $(4)$ $\ell _H(e)\equiv \log^{\nicefrac[]{1}{3}}n$ for every $e\in E_H$.
We choose $G_X$ to be an edge weighted random graph sampled from {$\ExtGH$}, and $\ell$ to be $ \ell _{{\ExtGH}}$, according to Definition \ref{def:SemiDirect} applied to the above $G$, $H$, $\ell_G$ and $\ell_H$ (thus $X=V_G\times V_H$ and $ E_X=E_{\ExtGH}$).
Finalizing the construction we set $L=\log{n}$.
One can view the above construction in Figure \ref{fig:Inst}.

For convenience of analysis we assume that both $G$ and $H$ are a Cayley graph of some group with respect to $d$ generators (recall that the degree of a Cayley graph is the number of generators).
For example, one can choose the group appearing in \cite{LPS88} which satisfies all the above properties.
In what follows we focus on $H$, but the discussion applies also to $G$.
Specifically, given an edge, it has two labels which correspond to two operations: if $e=(u,v)\in E_H$ then $(u,e)$ has label $x$ if $u\cdot x=v$ in the group (and thus $(v,e)$ has label $x^{-1}$ in the group since $v\cdot x^{-1}=u$).
Moreover, every vertex in $H$ is uniquely identified with an element of the group and vice versa.

It is important to note that our construction is probabilistic.
Therefore, in order to prove our main result, Theorem \ref{thrm:Main}, we prove that with a positive probability that instance satisfies several properties that imply it has a sufficiently large integrality gap.
Not surprisingly, the bulk of the analysis is dedicated to analyzing the integral solution. 
Moreover, since $k=|X|=n^2$ in $\mathcal{I}(G_X,\ell,L)$, in the remainder of the paper for simplicity of presentation all guarantees are stated with respect to $n$ and not $k$.

\section{The Fractional Solution}\label{sec:FracSol}

In this section we upper bound the value of the optimal fractional solution of the relaxation $ (MET)$ for $\mathcal{I}(G_X,\ell,L)$.
The following lemma proves that always, no matter which graph was sampled from the distribution {$\ExtGH$}, the value of a fractional solution is at most $O(n^2)$.

\begin{lemma}\label{lem:fracsol}
With a probability of $1$ over the distribution defined by {$\ExtGH$}, there is a feasible fractional solution to $(MET)$ and instance $\mathcal{I}(G_X,\ell,L)$ of value at most $O(n^2)$. 
\end{lemma}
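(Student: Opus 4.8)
The plan is to exhibit an explicit feasible semi-metric $\delta$ on $V = X \cupdot T$ extending $D$ and bound its cost on both types of edges. The natural guess is the "uniform blur" solution: put every non-terminal vertex $v \in X$ at distance roughly $L$ from its own terminal $v_T$, and let the distance between two non-terminals $u,v \in X$ be governed by $D_X(u,v)$. Concretely, I would define $\delta$ to be the shortest-path metric of $\mathcal{G}$ under the edge lengths already singled out at the end of Section~3: each edge $(u,v)\in E_X$ gets length $D_X(u,v)$ and each edge $(v,v_T)$ gets length $L$. Since $D$ was observed to be exactly this shortest-path metric restricted to $T\times T$, the constraint $\delta(t_i,t_j)=D(t_i,t_j)$ holds automatically, and $\delta$ is a semi-metric by construction (shortest-path metrics always are). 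So feasibility is essentially free once we use the length assignment the paper already flagged.

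Next I would bound the objective $\sum_{e=(u,v)\in E} w_e \cdot \delta(u,v)$ by splitting the sum over the two edge classes. For a terminal-attaching edge $e=(v,v_T)$ we have $w_e = 1/L$ and $\delta(v,v_T)\le L$ (the single edge of length $L$ is an upper bound on the shortest-path distance), so each such edge contributes at most $1$; there are $|X| = n^2$ of them, giving $O(n^2)$. For an edge $e=(u,v)\in E_X$ we have $w_e = 1/\ell(u,v)$, where $\ell(u,v)$ is the $\ExtGH$-length — either $\log^{1/3}n$ (intra-cloud) or $\log^{2/3}n$ (inter-cloud). Here I need $\delta(u,v) = O(\ell(u,v))$. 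One route to $\delta(u,v)$: go $u \to u_T \to v_T \to v$, which costs $L + D(u_T,v_T) + L = 2L + (D_X(u,v)+2L) = 4L + D_X(u,v)$; but this is too large, since $D_X(u,v)$ can be as big as the diameter $\Theta(\log n \cdot \log^{2/3}n)$. The better route is to stay inside $X$: $\delta(u,v) \le D_X(u,v) \le \ell(u,v)$, using the direct $E_X$-edge of length $\ell(u,v)$. That gives $w_e\cdot\delta(u,v)\le 1$ per edge, and since $|E_X| = \frac{dn^2}{2}$ with $d$ constant, the total over $E_X$ is again $O(n^2)$. Summing the two classes yields the claimed $O(n^2)$, and this bound holds for every realization in the support of $\ExtGH$, hence with probability $1$.

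The only subtle point — and the one I would write out carefully rather than wave through — is confirming the identity $D(u_T,v_T) = D_X(u,v) + 2L$ is consistent with $\delta$ being a genuine semi-metric that simultaneously equals $D$ on $T$ and is small on the $E_X$ edges; i.e., that the shortest path between $u_T$ and $v_T$ in $\mathcal{G}$ under the chosen lengths really is $2L + D_X(u,v)$ and not something shorter through other terminals. This follows because any $T$-to-$T$ path must use at least two of the length-$L$ attaching edges (one to leave $T$, one to re-enter), and the cheapest way to traverse the $X$ portion between the two footpoints $u$ and $v$ costs exactly $D_X(u,v)$ by definition of the shortest-path metric; a path touching $T$ multiple times only adds more length-$L$ edges. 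So no shortcut exists, the $T$-constraints are met with equality, and the cost bound above is valid. I do not expect any real obstacle here — the lemma is the easy half of the gap analysis, and the entire difficulty of the paper lies in the matching integral lower bound.
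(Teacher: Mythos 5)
Your proof is correct and takes essentially the same route as the paper's: both define $\delta$ to be the shortest-path metric of $\mathcal{G}$ under the edge lengths $\ell(e)$ on $E_X$-edges and $L$ on terminal-attaching edges, observe feasibility follows because $D$ was already identified with this metric restricted to $T$, and bound the objective by noting each edge $e$ contributes $w_e\cdot\delta(e)\le 1$ so the total is at most $|E|=O(n^2)$. The only difference is that you spell out the (elementary) check that no $T$-to-$T$ shortcut exists through other terminals, which the paper leaves to the reader.
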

\begin{proof}
We define the following solution to $(MET)$, no matter which graph was sampled from the distribution {$\ExtGH$}: set $\delta(u,v)$ to be the shortest path metric for the graph $\mathcal{G}$ where the length of an edge $e=(u,v)\in E$ equals $\ell(e)$ and the length of an edge $e=(v,v_T)\in E$ equals $L$.

We start by proving that the above solution is feasible.
First, note that $(V,\delta)$ is a semi-metric space.
Second, one can verify that $\delta(u_T,v_T)=D(u_T,v_T)$, for every $u_T,v_T\in T$.
Thus, the above solution is feasible for $(MET)$.

Now, let us bound the cost of the above feasible solution.
Recalling the definition of weights $w$ of edges of $\mathcal{G}$, one can see that for every edge $e\in E$ its contribution $w_e\cdot \delta(e)$ to the objective of $(MET)$ equals $1$.
Hence, since the degree of $G$ and $H$ is a constant $d$, the cost of the above solution is at most the number of edges in $\mathcal{G}$, which is upper bounded by $O(n^2) $.
\end{proof}

\section{Cycle-Homeomorphism, Split, and the Analysis of Integral Solutions}\label{sec:IntegralSol}
In this section we introduce the key definitions we require for analyzing integral solutions.
Our ultimate goal is the following lemma, which proves that with a non-zero probability, over the distribution defined by {$\ExtGH$}, any solution to {\EXT} for the instance $\mathcal{I}(G_X,\ell,L)$ has a large enough value.
One can easily observe that our main result, Theorem \ref{thrm:Main}, follows from lemmas \ref{lem:fracsol} and \ref{lem:integsol}.

\begin{lemma}\label{lem:integsol}
With a non-zero probability over the distribution defined by {$\ExtGH$} the value of any integral solution to {\EXT} for the instance $\mathcal{I}(G_X,\ell,L)$ has value of at least $\Omega(n^2\log^{\nicefrac[]{2}{3}}n)$.
\end{lemma}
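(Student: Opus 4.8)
\textbf{Proof plan for Lemma \ref{lem:integsol}.}

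The plan is to show that a cheap integral solution to $\mathcal{I}(G_X,\ell,L)$ forces the existence of the combinatorial structure the paper calls a ``split'', and then to argue that a split fails to exist with positive probability over the random extension $\ExtGH$. First I would fix an integral labeling $f:V\to T$ and suppose, for contradiction, that its cost is $o(n^2\log^{2/3}n)$. Because each terminal edge $(v,v_T)$ carries weight $1/L$ and each $G_X$-edge $(u,v)$ carries weight $1/\ell(u,v)$, a charging/Markov argument shows that for all but a vanishing fraction of vertices $v$ the label $f(v)$ is a terminal $u_T$ with $D_X(u,v) \le O(L)$ — intuitively, $f(v)$ lies in (or very near) the same cloud of $\ExtGH$ as $v$, since the additive $2L$ penalty in $D$ makes it prohibitively expensive to send the bulk of the vertices of a cloud to distinct clouds. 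Similarly, the cheapness of the intra-cloud $H$-edges (which have $\ell_H\equiv\log^{1/3}n$, hence weight $\log^{-1/3}n$) forces $f$ to be \emph{almost constant} on most clouds: a typical cloud $C_g$ has a well-defined representative $\rho(g)\in V_G$ such that $f$ maps most of $C_g$ into $C_{\rho(g)}$, and cheapness of the inter-cloud $G$-edges (weight $\log^{-2/3}n$) forces $\rho(g)$ and $\rho(g')$ to be close in $G$ whenever $(g,g')\in E_G$. This is where I would invoke the expander structure of both $G$ and $H$ to turn ``most vertices'' statements into ``all of a large induced subgraph'' statements and to clean up the exceptional sets.

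Next I would show that $\rho:V_G\to V_G$, which a priori could collapse cycles of $G$ (e.g.\ be constant), is in fact forced by the high girth of $G$ to preserve the cycle structure — this is the ``cycle-homeomorphism'' notion promised in Section \ref{sec:IntegralSol}. The point is that along a cycle of $G$ of length $\Omega(\log n)$, the representatives $\rho(g)$ trace out a closed walk in $G$; if this walk were null-homotopic (contractible to a point) one could use it to build a consistent ``section'' of the extension along the cycle, contradicting the randomness of the matchings. So $\rho$ maps each cycle of $G$, up to shrinkage, onto itself. Combined with the preceding paragraph, this is exactly a split: a choice, for most clouds, of how to lift the (essentially identity-like) map $\rho$ back into $\ExtGH$ in a way that is locally consistent across edges.

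The final and, I expect, the main obstacle is proving that such a split occurs with probability $<1$ over the random perfect matchings. A naive union bound over all realizations of $\ExtGH$ is hopeless (there are $(n!)^{|E_G|}$ of them), so I would introduce the \emph{certificate} abstraction of Section \ref{sec:NoSplit}: a bounded-description subgraph of the realization that records the shortest paths between representatives of neighboring clouds. Using linear algebra over $\mathbb{F}_2$ on the cycle space, together with an Euler-characteristic count, I would argue that any certificate consistent with a split must contain at least as many independent cycles as $G$ does, i.e.\ $\Omega(n)$; each such cycle imposes a ``closing-up'' constraint on the random matchings that holds with probability at most $1/n^{\Omega(1)}$, and these are independent enough to multiply. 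Bounding the number of certificates by $n^{O(n)} \cdot (\text{something subexponential})$ — small because a certificate is a sparse subgraph with bounded degree and total size $\tilde O(n)$ — the expected number of split-inducing certificates is $o(1)$, so by Markov with positive (indeed $1-o(1)$) probability no split exists. Chaining back: on this event every integral solution costs $\Omega(n^2\log^{2/3}n)$, which is the claim. The delicate points I anticipate are (i) making the ``almost constant on most clouds'' reduction quantitatively tight enough that the leftover error does not swamp the $\log^{2/3}n$ gap, and (ii) getting the certificate count and the per-cycle failure probability to balance, which is presumably where the exponents $2/3$ and $1/3$ in $\ell_G,\ell_H$ are tuned.
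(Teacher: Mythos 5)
Your overall architecture matches the paper exactly: Lemma~\ref{lem:integsol} is proved by combining a deterministic statement (every cheap integral solution yields an $(\varepsilon\log^{4/3}n,\varepsilon)$-split, Theorem~\ref{thrm:Split}) with a probabilistic one (with positive probability no such split exists, Theorem~\ref{thrm:NoSplit}), and your sketch of the ``representative'' extraction via the three edge-weight classes, the certificate abstraction, the $\mathbb{F}_2$ cycle-space/Euler-characteristic count, the $n^{-\Omega(1)}$-per-cycle matching constraint, and the final union bound all correspond to what the paper does.

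The one step where your proposed reasoning would fail is the justification for cycle-homeomorphism. You argue that if the closed walk traced by the representatives were null-homotopic, one ``could use it to build a consistent section, contradicting the randomness of the matchings.'' This cannot be right as stated: Theorem~\ref{thrm:Split} is a deterministic claim that holds for \emph{every} realization of $\ExtGH$, so no appeal to randomness is available there, and in any case the existence of a split (a weak section) is precisely what the theorem is trying to \emph{establish}, not refute. The paper's actual argument is purely combinatorial: for each edge $(g_1,g_2)$ of a cycle $C\subseteq E_{G'}$, concatenating the short connector paths $g_i\rightsquigarrow\pi\circ\bar f(g_i)$ with the edge itself and with the projected shortest path $P_{\pi\circ\bar f}((g_1,g_2))$ produces a closed walk in $G$ of length $O(\varepsilon\log n)$, strictly below the girth of $G$; hence it has trivial parity on every edge. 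Summing these parity identities over all edges of $C$, the connector paths cancel in pairs and one reads off that exactly the edges of $C$ appear an odd number of times in $\bigcup_{e\in C}P_{\pi\circ\bar f}(e)$. This parity/girth argument is what should replace the ``contradicting randomness'' step; randomness is only used later, inside Theorem~\ref{thrm:NoSplit}. A second, smaller imprecision: the cheapness of terminal edges bounds the $G$-distance between a cloud $g$ and its representative's cloud by $\varepsilon\log n$ \emph{edges} of $G$, i.e.\ a $D_X$-distance of order $\varepsilon\log^{5/3}n$, not $O(L)=O(\log n)$ as you wrote; the looser bound is what is needed and what the weights actually give.
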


Recall that an integral solution $f:V\rightarrow T$ assigns to every vertex a terminal in $T$.
To simplify the presentation, we abuse notations and refer to a terminal $v_T\in T$ as belonging to the same cloud as $v$ (recall that the only edge in $\mathcal{G}$ that touches $v_T$ is $(v,v_T)\in E$).
Thus, for ease of presentation we use $v$ instead of $v_T$ when possible, and denote by $D(u,v)$ the distance in $D$ between $u_T$ and $v_T$, {\em i.e.}, $D(u_T,v_T)$.

In this section we prove the following crucial key insight, which for simplicity of presentation we currently present in its qualitative form: every cheap integral solution $f$ assigns most of the vertices of a cloud to the same terminal and this holds for most of the clouds.
We denote this terminal, if it exists, as the {\em representative} of the cloud, with respect to the given $f$.
Moreover, we prove that the map from the clouds to their representatives keeps some structure of $G$ that is required for our analysis. This is captured by the following definition, which is central to our analysis.

\begin{definition}\label{Def:CycleHomomorphism}
Let $G = (V_G, E_G)$ be a graph and $G'=(V_{G'},E_{G'})$ a subgraph of $G$ and $\tilde{f}: V_{G'} \to V_G$ a mapping.
Moreover, $\tilde{f}$ is associated with a mapping $P_{\tilde{f}}: E_{G'}\to E_G^*$, where for every $(u,v)\in E_{G'}$ $P_{\tilde{f}}(e)$ is a path in $G$ between $\tilde{f}(u)$ and $\tilde{f}(v)$.
We say that $(\tilde{f},P_{\tilde{f}})$ is a {\em cycle-homeomorphism of $G$ with respect to $G'$} if for every simple cycle $C\subseteq E_{G'}$:
$$\left\{ e\in E_G:e\text{ appears an odd number of times in } \{P_{\tilde{f}}(e')\}_{e'\in C}\right\}=C .$$
\end{definition}
The above definition leads to the definition of a {\em split} of a graph, which plays an essential role in our analysis.
\begin{definition}\label{def:Split}
Let $G_X$ and $\ell$ be a weighted graph in the support of {$\ExtGH$}, and let $(T,D)$ be the metric space defined in Section \ref{sec:GapInstance}.
A map $\bar{f}: V_{G'}  \to X$ is an {\em $(\alpha, \varepsilon)$-split} if there exists a subgraph $G'=(V_{G'},E_{G'})$ of $G$, where $|E_{G'}|\geq (1-\varepsilon)|E_G|$, such that the following two conditions hold:
\begin{enumerate}
\item Let $\pi:X\rightarrow V_G$ be the projection of the vertices of $G_X$ to the clouds: $\pi ((g,h))=g$ for every $(g,h)\in X$.
Given $\pi\circ \bar{f}$, we define $P_{\pi\circ \bar{f}}$ to be the image under $\pi$ of the shortest paths in $G_X$: for every $(g_1,g_2)\in E_{G'}$, $P_{\pi\circ \bar{f}}((g_1,g_2))$ equals the image under $\pi$ of the shortest path with respect to $\ell$ in $G_X$ between $\bar{f}(g_1)$ and $\bar{f}(g_2)$.
Then $(\pi\circ \bar{f},P_{\pi\circ \bar{f}})$ is a cycle-homeomorphism of $G$ with respect to $G'$.
\item For all edges $(g_1,g_2)\in E_{G'}$ of $G'$: $D_X(\bar{f}(g_1),\bar{f}(g_2))< \alpha$.
\item For all $g\in V_{G'}$: the shortest path in $G$ between $g$ and $\pi \circ \bar{f}(g)$ has at most $\varepsilon \log{n}$ edges.
\end{enumerate}

\end{definition}

One important thing to note is that in the first condition of the above definition, $P_{\pi\circ \bar{f}}$ is defined for every edge of $G'$, however the shortest path $P_{\pi\circ \bar{f}}$ assigned to an edge in $G'$ is the image under $\pi$ of a path in the entire of $G_X$ which is in the support of {$\ExtGH$} (and not {$\ExtGprimeH$}).

Intuitively, we prove that for every cheap solution $f$ it is possible to throw away a fraction of $O(\varepsilon)$ of the clouds such that all remaining clouds have a representative with respect to $f$.
Additionally, the map from the remaining clouds to their representatives is a cycle-homeomorphism.
This is summarized in the following theorem.

\begin{theorem}\label{thrm:Split}
There exists a small absolute constant $c$ such that for every small enough absolute constant $\varepsilon >0$, every large enough $n$ (that might depend on $\varepsilon$), and every weighted $G_X$ and $\ell$ in the support of {$\ExtGH$} and every integral solution $f:V\rightarrow T$ for $\mathcal{I}(G_X,\ell,L)$ whose value is at most $c\varepsilon^2n^2\log^{\nicefrac[]{2}{3}}n$, there exists a map $\bar{f}:V_{G'}\rightarrow X$ such that $\bar{f}$ is an $(\varepsilon \log ^{\nicefrac[]{4}{3}} n,\varepsilon)$-split.
\end{theorem}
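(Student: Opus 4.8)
The plan is to start from a cheap integral solution $f$ and extract the representative of each cloud, then verify the three split conditions. The first step is a \emph{concentration-per-cloud} argument. Fix a cloud $g \in V_G$, i.e.\ the set $\{(g,h): h \in V_H\}$, together with the $n$ terminal edges $(v,v_T)$ hanging off it, each of weight $1/L = 1/\log n$, and the $\Theta(n)$ intra-cloud edges of $H$, each of weight $1/\ell_H = 1/\log^{1/3} n$. The contribution of the terminal edges to the objective is $\frac{1}{L}\sum_{v \in \text{cloud}(g)} D(v, f(v))$, and since $D(u_T,v_T) \ge 2L$ whenever $u \ne v$, this is at least $2 \cdot \#\{v : f(v) \ne v\text{ in the }X\text{-coordinate, up to the }2L\text{ offset}\}$ — more precisely, the terminal edges already cost $\Omega(1)$ per vertex whose assigned terminal sits in a different cloud than a fixed majority choice is \emph{not} automatic, so instead I would argue as follows. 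Because $H$ is an expander, if a constant fraction of the vertices in cloud $g$ were assigned to terminals in clouds other than some single cloud $r(g)$, then either (a) many vertices are assigned to terminals far (in $D_X$, hence by at least $2L$ in $D$) from each other, which the terminal edges punish at rate $1/L$ per unit of $D$, charging $\Omega(1)$ per such vertex; or (b) the assignment $f$ restricted to the cloud, composed with $\pi$, is a non-constant function on the expander $H$ whose "level sets" form a sparse-ish partition, so the edge-expansion of $H$ forces $\Omega(n)$ intra-cloud edges to be cut, each contributing $\frac{1}{\log^{1/3}n}\cdot D(\cdot,\cdot) \ge \frac{2L}{\log^{1/3}n} = 2\log^{2/3}n$. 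Summing either way, a cloud without a $(1-O(\varepsilon))$-majority representative costs $\Omega(\varepsilon^2 n \log^{2/3} n)$ (the $\varepsilon^2$ coming from an isoperimetric/expander-mixing bound on how unbalanced the partition must be). Since the total budget is $c\varepsilon^2 n^2 \log^{2/3}n$, at most an $O(\varepsilon)$-fraction of clouds can fail to have a representative; discard those and let $\bar f(g) \in X$ be the representative for the surviving set $V_{G'}$, which satisfies $|E_{G'}| \ge (1-\varepsilon)|E_G|$ after also discarding edges incident to discarded vertices.

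Next I would handle Condition 2 and Condition 3, which are the "local consistency" conditions. For an inter-cloud edge $(g_1,g_2) \in E_{G'}$: the random perfect matching between clouds $g_1$ and $g_2$ has $n$ edges of weight $1/\ell_G = 1/\log^{2/3}n$ and length $\ell_G = \log^{2/3}n$; since both clouds have $(1-O(\varepsilon))$-majority representatives $\bar f(g_1), \bar f(g_2)$, at least $(1 - O(\varepsilon))n$ of these matching edges connect a vertex assigned (terminal) to $\bar f(g_1)$ with one assigned to $\bar f(g_2)$, so their total contribution is $\ge (1-O(\varepsilon)) n \cdot \frac{1}{\log^{2/3}n}\cdot D_X(\bar f(g_1),\bar f(g_2))$. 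Comparing against the budget $c\varepsilon^2 n^2 \log^{2/3}n$ divided appropriately among the $\Theta(n)$ surviving clouds/edges gives $D_X(\bar f(g_1),\bar f(g_2)) = O(\varepsilon^2 n \log^{4/3}n)$, which is far weaker than the claimed $\varepsilon \log^{4/3}n$ — so I would need to be more careful and instead charge \emph{per edge}: averaging over clouds, the typical cloud has cost $O(c\varepsilon^2 n \log^{2/3}n)$, and a Markov/averaging argument lets us additionally discard an $O(\varepsilon)$-fraction of clouds whose incident cost exceeds $O(c\varepsilon n \log^{2/3}n)$; on the remaining clouds the per-edge bound becomes $D_X(\bar f(g_1),\bar f(g_2)) = O(c\varepsilon \log^{4/3}n) < \varepsilon\log^{4/3}n$ for $c$ small. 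Condition 3 (the shortest $G$-path from $g$ to $\pi\bar f(g)$ has $\le \varepsilon\log n$ edges) follows similarly: a long detour would force, along any shortest $G_X$-path realizing $D_X(\bar f(g_1),\bar f(g_2))$, a total inter-cloud length of at least $(\#\text{edges})\cdot \log^{2/3}n$, again too expensive; alternatively it is a direct consequence of Condition 2 once we note $D_X(\bar f(g), (g, \cdot)) \ge 2L + (\text{$G$-distance})\cdot\log^{2/3}n$ whenever $\pi\bar f(g)\ne g$, so a cheap solution forces $\pi\bar f(g)$ to be within $O(\varepsilon\log^{1/3}n) \ll \varepsilon\log n$ clouds of $g$.

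Finally, Condition 1 — that $(\pi\circ\bar f, P_{\pi\circ\bar f})$ is a cycle-homeomorphism — is the main obstacle and the place where the girth of $G$ enters. The idea is: take any simple cycle $C \subseteq E_{G'}$; since $G$ has girth $\Omega(\log n)$, $C$ has length $\ge \Omega(\log n)$, and by Condition 3 each $\bar f(g)$ lies in a cloud within $\varepsilon\log n$ steps of $g$ while by Condition 2 consecutive representatives are within $D_X$-distance $\varepsilon\log^{4/3}n$, hence (dividing by the inter-cloud edge length $\log^{2/3}n$) within $O(\varepsilon\log^{2/3}n)$... I want them within $O(1)$ or at least $o(\log n)$ \emph{$G$-steps} of each other, which needs $D_X(\bar f(g_1),\bar f(g_2)) < \text{(something)}\cdot \log^{2/3}n$ with the "something" small — this is exactly why the split parameter is $\varepsilon\log^{4/3}n$ and the matching lengths are $\log^{2/3}n$: the ratio $\varepsilon\log^{4/3}n / \log^{2/3}n = \varepsilon\log^{2/3}n$ must still be small compared to the girth $\Omega(\log n)$, which holds. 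So the closed walk $g \mapsto \pi\bar f(g)$ together with the short correcting paths $P_{\pi\bar f}$ forms a closed walk in $G$ that is "within a bounded homotopy" of $C$; since $G$ has girth exceeding the total length of all the correction segments around $C$ (here one must be careful: the correction segments have total length $\le |C|\cdot O(\varepsilon\log^{2/3}n)$ which is \emph{not} below the girth — so the right statement is local, edge-by-edge, using that each individual correction is $o(\log n)$ and bounded-degree graphs of high girth are locally forests), the symmetric difference of $\{P_{\pi\bar f}(e')\}_{e'\in C}$ over $\mathbb F_2$ collapses exactly to $C$: each edge $e'=(g_1,g_2)$ of $C$ gets "covered" once by the near-identity path $P_{\pi\bar f}(e')$ which, being a geodesic between two points each $o(\log n)$-close to its cloud endpoint, must (by girth/local-tree-ness of $G$) pass through $e'$ itself or a parallel short detour that cancels with neighbouring corrections. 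Making this cancellation argument precise — essentially a discrete $\mathbb F_2$-homology / Euler-characteristic bookkeeping on a high-girth graph, which the introduction flags as "the first homology" — is where the real work lies, and I would set it up by fixing, for each $g\in V_{G'}$, a fixed geodesic $Q_g$ in $G$ from $g$ to $\pi\bar f(g)$, writing $P_{\pi\bar f}(e') \triangleq Q_{g_1} \cdot (\text{geodesic } \pi\bar f(g_1)\to\pi\bar f(g_2)) \cdot Q_{g_2}^{-1}$ up to the $\mathbb F_2$ identifications, so that around $C$ the $Q_g$'s telescope and one is left with a closed walk of length $o(|C|\cdot\log n)$ built from geodesics; high girth then forces this walk's $\mathbb F_2$-cycle to equal $C$ exactly. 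The degree-$d\ge3$ expander hypotheses on $H$ and $G$ and the girth lower bound on $G$ are each used exactly once in the three pieces above, which is the structural sanity check that the parameters were chosen correctly.
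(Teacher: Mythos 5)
Your proposal follows the paper's proof quite closely in structure: extract a per-cloud representative by charging the cost of cut intra-cloud edges (using the expansion of $H$ and the $2L$ offset in $D$) and the terminal-edge cost against the budget $c\varepsilon^2 n^2\log^{\nicefrac{2}{3}}n$; build $G'$ by discarding the small fraction of clouds and edges that violate the distance bounds; verify Conditions 2 and 3 by further per-edge charging; and prove Condition 1 by a girth argument. This is essentially the paper's plan, so most of what follows is a matter of tightening two spots where the arithmetic or the assembly is off.

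First, the per-cloud cost of a cloud without a $(1-O(\varepsilon))$-majority is $\Omega(\varepsilon n\log^{\nicefrac{2}{3}}n)$, not $\Omega(\varepsilon^2 n\log^{\nicefrac{2}{3}}n)$: expansion of $H$ gives $\Omega(\varepsilon n)$ cut intra-cloud edges (linear, not quadratic, in the imbalance), each paying at least $2L/\ell_H = 2\log^{\nicefrac{2}{3}}n$. With your stated $\varepsilon^2$ the budget only caps the number of bad clouds at $O(cn)$, and since $c$ is a fixed absolute constant that does not shrink with $\varepsilon$, this does not give $|E_{G'}|\geq(1-\varepsilon)|E_G|$; with the corrected $\varepsilon$ one gets $O(c\varepsilon n)$ bad clouds as required. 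Second, for Condition 1 you correctly observe that a global girth bound fails, since the correction segments around $C$ total $|C|\cdot O(\varepsilon\log^{\nicefrac{2}{3}}n)$ edges, which exceeds the girth. But your closing sentence (telescope the $Q_g$'s, obtain a closed walk of length $o(|C|\log n)$, then invoke high girth on that walk) re-introduces exactly the global problem you just flagged, since that walk is not shorter than the girth. The argument must stay local: for each $e'=(g_1,g_2)\in C$ the closed walk $Q_{g_1}\cup\{e'\}\cup Q_{g_2}\cup P_{\pi\circ\bar f}(e')$ has at most $1+2\varepsilon\log n+\varepsilon\log^{\nicefrac{2}{3}}n$ edges, which is below the girth, so every edge in it has even multiplicity. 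Only after establishing this per-edge fact does one sum over $e'\in C$ in $\mathbb F_2$: each $Q_g$ then appears exactly twice (once per $C$-edge at $g$) and cancels, leaving $C+\sum_{e'\in C}P_{\pi\circ\bar f}(e')\equiv 0$, which is precisely the cycle-homeomorphism identity. Girth is never applied to the global concatenation.
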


It is important to note that the above theorem holds for every possible realization of the distribution of {$\ExtGH$}. 
Theorem \ref{thrm:Split} stands in contradiction to the following theorem, whose proof Section \ref{sec:NoSplit} is dedicated to.

\begin{theorem}\label{thrm:NoSplit}
For every small enough absolute constant $\varepsilon >0$, and for every large enough $n$ (that might depend on $\varepsilon$),
with a non-zero probability there is no $\bar{f}:V_{G'}\rightarrow X$
such that $\bar{f}$ is an $(\varepsilon \log ^{\nicefrac[]{4}{3}} {n},\varepsilon)$-split.
\end{theorem}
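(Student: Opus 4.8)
The plan is to argue by a union bound over a carefully compressed description of a potential split, which we call a \emph{certificate}. Suppose, for the sake of contradiction, that with probability bounded away from zero a $(\varepsilon\log^{\nicefrac[]{4}{3}}n,\varepsilon)$-split $\bar f:V_{G'}\to X$ exists. I would first extract from such an $\bar f$ a compact combinatorial object: for each vertex $g\in V_{G'}$ a representative $\bar f(g)=(\bar f(g)_G,\bar f(g)_H)\in X$, and for each edge $(g_1,g_2)\in E_{G'}$ the shortest path in $G_X$ between $\bar f(g_1)$ and $\bar f(g_2)$. Because condition~2 of the split forces $D_X(\bar f(g_1),\bar f(g_2))<\varepsilon\log^{\nicefrac[]{4}{3}}n$ and edges of $G_X$ have length $\log^{\nicefrac[]{2}{3}}n$ (inter-cloud) or $\log^{\nicefrac[]{1}{3}}n$ (intra-cloud), such a path uses at most $O(\varepsilon\log^{\nicefrac[]{2}{3}}n)$ inter-cloud edges and at most $O(\varepsilon\log n)$ intra-cloud edges; condition~3 similarly bounds the $G$-distance from $g$ to $\pi\circ\bar f(g)$ by $\varepsilon\log n$. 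The certificate records the (short) sequences of generator labels needed to describe all these paths, together with the subgraph of $G_X$ they trace out. The first key step is to bound the number of certificates: since $G'$ has $O(n)$ edges and each edge contributes a path described by $O(\varepsilon\log n)$ labels from a constant-size alphabet, the number of certificates is at most $2^{O(\varepsilon n\log n)}$ — crucially with the constant in front controllable by shrinking $\varepsilon$.

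The second, and conceptually central, step is to show that \emph{conditioned on a fixed certificate}, the probability that the random matchings of $\ExtGH$ are consistent with it is extremely small — small enough to beat the count above. This is where the cycle-homeomorphism property (condition~1) does the work. I would argue that the certificate's traced-out subgraph $\Gamma\subseteq G_X$ must, by the cycle-homeomorphism condition, carry at least as much cycle space as $G$ itself: every simple cycle $C$ of $G'$ is realized, up to shrinkage, by the projected paths $\{P_{\pi\circ\bar f}(e')\}_{e'\in C}$, so the first Betti number of $\Gamma$ is at least $\dim$ of the cycle space of $G'$, which is $\Omega(n)$ since $G'$ is a bounded-degree graph on $n$ vertices with $(1-\varepsilon)|E_G|$ edges. (This is the ``bounding the Euler characteristic'' idea flagged in the proof overview; I would phrase it purely as linear algebra over $\mathbb F_2$ using the incidence structure, avoiding topology.) Now each independent cycle in $\Gamma$ forces a nontrivial constraint on the random perfect matchings: traversing the cycle and composing the corresponding matching permutations (together with the fixed generator operations in the $H$-coordinate recorded by the certificate) must return to the starting $H$-coordinate. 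For a uniformly random perfect matching on $n$ vertices this composed-permutation constraint holds with probability $O(1/n)$ (more carefully, each fresh matching edge along an independent cycle contributes an independent factor of roughly $1/n$, after conditioning appropriately on previously revealed matching edges). Multiplying over $\Omega(n)$ independent cycles gives a per-certificate probability of at most $n^{-\Omega(n)} = 2^{-\Omega(n\log n)}$.

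The final step is to combine the two estimates: the union bound gives
\[
\Pr[\exists\ (\varepsilon\log^{\nicefrac[]{4}{3}}n,\varepsilon)\text{-split}]
\;\le\; 2^{O(\varepsilon n\log n)}\cdot 2^{-\Omega(n\log n)},
\]
and by choosing $\varepsilon$ a small enough absolute constant the exponent is negative and the whole expression tends to $0$ as $n\to\infty$; in particular it is strictly less than $1$ for all large enough $n$, which is exactly the claim. The main obstacle I anticipate is making the per-cycle $1/n$ gain rigorous: one must reveal the random matching edges in an order that keeps the relevant constraints independent (or at least a martingale-type product bound applies), and one must ensure the $\Omega(n)$ cycles chosen are ``edge-disjoint enough'' in the matchings they constrain — this is precisely why the certificate is defined to capture a subgraph rather than arbitrary paths, so that a spanning-tree/fundamental-cycle basis of $\Gamma$ yields constraints each introducing a genuinely new matching edge not yet conditioned upon. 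A secondary subtlety is bookkeeping the intra-cloud ($H$-coordinate) operations, which are deterministic given the certificate and therefore do not cost probability but do complicate the statement of each cycle constraint; I would absorb them into a single group element per path segment so that each constraint reads cleanly as ``a product of random matching permutations equals a fixed element''.
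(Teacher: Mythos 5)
Your overall architecture matches the paper's: compress a putative split into a combinatorial certificate, show the certificate count is bounded, show each fixed certificate has small probability via a spanning-tree/fundamental-cycle argument on the traced subgraph, use cycle-homeomorphism to force $\Omega(n)$ independent cycles, and union-bound. However, there is a real numerical gap in the certificate count that collapses the argument as written, and it is not a bookkeeping detail but the exact place where the paper has to work hardest.

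You claim the number of certificates is $2^{O(\varepsilon n\log n)} = n^{O(\varepsilon n)}$, justified by the edge labels of $O(n)$ paths of length $O(\varepsilon\log n)$ over a constant alphabet. But your certificate also records, for each $g\in V_{G'}$, the representative $\bar f(g) = (\bar f(g)_G, \bar f(g)_H)\in X$. Condition 3 of the split bounds the choice of the $G$-coordinate $\pi\circ\bar f(g)$ to $n^{O(\varepsilon)}$ options, but the $H$-coordinate is a free choice among $|V_H|=n$ elements: the intra-cloud part of the path description gives only \emph{relative} information within each cloud, not the absolute identity of the endpoint. Over $\Theta(n)$ clouds this contributes a factor of $n^{n}$ that dominates and is not controllable by shrinking $\varepsilon$. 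The paper's count (Lemma~\ref{Lemma:CertCount}) is $n^{\,n+O(\varepsilon(s_{\text{tot}}+n))}$, and the $n^n$ term is precisely why the Betti-number lower bound $b_1(R)\ge(1-O(\varepsilon))(\tfrac d2-1)n$ (Lemma~\ref{Lemma:BettiBound}) must have a constant strictly bigger than $1$, which forces the degree $d$ to be chosen large enough — not merely $\varepsilon$ small. With your claimed $2^{O(\varepsilon n\log n)}$ count, any $\Omega(n)$ Betti number would suffice and $d\ge 3$ would do, which is too good.

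You cannot escape this by simply dropping the absolute $H$-coordinates from the certificate: without them you cannot pin down \emph{which} matching edges are constrained, so the per-cycle event becomes ``the composed permutation $\sigma$ has a fixed point,'' whose probability is $\approx 1-1/e$, not $1/n$. To recover $1/n$ per cycle you must anchor the cycle at a specific $H$-vertex, which costs a factor of $n$ per connected component of the traced subgraph, reintroducing the same $n^{\Theta(n)}$ (or at least $n^{\#\text{components}}$) blow-up. The paper resolves this tension exactly by putting the absolute representatives into the certificate (ingredient~\ref{cert3} of Definition~\ref{def:cert}), absorbing the $n^n$ cost into the count, and then paying for it with a Betti number whose leading constant exceeds $1$. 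So the structure of your argument is on target, but as stated the exponent bookkeeping does not close: you need either to include the $n^n$ in your count and strengthen your cycle lower bound to beat it (requiring $d$ large), or to give a genuinely different probability argument that avoids anchoring — neither of which your sketch does.

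A secondary, smaller issue: your per-cycle ``fresh matching edge'' factor also implicitly requires that each fundamental cycle of the traced subgraph be closed by an \emph{inter-cloud} (matching) edge, not an intra-cloud one. The paper arranges this by passing to the quotient graph $R$ of inner connected components (where intra-cloud paths are contracted) and computing $b_1(R)$, not $b_1$ of the raw traced subgraph. You should make the same reduction explicit, though in the present regime the two Betti numbers happen to agree because $H$ has girth larger than the path lengths involved, so clouds contribute forests.
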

In the above two theorems, the reader should recall that $G$ and $H$, along with $\ell_G$ and $\ell_H$, need to satisfy all the conditions as described in Section \ref{sec:GapInstance}.
One can easily note that Theorems \ref{thrm:Split} and \ref{thrm:NoSplit} imply Lemma \ref{lem:integsol}.
\begin{proof}[Proof of Lemma \ref{lem:integsol}]
Follows immediately from Theorems \ref{thrm:Split} and \ref{thrm:NoSplit}.
\end{proof}

In the remainder of this section, we prove Theorem \ref{thrm:Split}.
For simplicity of presentation, from this point onward we fix a weighted graph $G_X$ sampled from {$\ExtGH$} (along with its weight function $\ell$).
Let $f:V\rightarrow T$ be an integral solution to $\mathcal{I}(G_X,\ell,L)$ whose cost is at most $c \varepsilon^2 n^2\log^{\nicefrac[]{2}{3}}n$. 
The following lemma states that for such an integral solution $f$, most of the clouds have a representative, {\em i.e.}, at least a fraction of $(1-O(\varepsilon))$ of the vertices of the cloud are assigned by $f$ to the same terminal.
Moreover, almost all of these representatives are not far from the cloud they represent.
We note that in the following lemma we use the notation of $ \bar{f}(g)$ to denote the representative of a cloud $g$ and by $\pi \circ \bar{f}(g)$ the cloud this representative resides in.
We intentionally chose this notation (as in Definition \ref{def:Split}) since in the proof of Theorem \ref{thrm:Split} it is shown that this $\bar{f}$ is indeed a split.

\begin{lemma}\label{lemma:CldsHasReps}
Let 
\begin{align*}
S\triangleq \{ g\in V_G:&\exists v_T\in T \text{ s.t. }\\
& \left|\{ (g,h):f((g,h))=v_T\}\right| \geq \left(1-O(\varepsilon)\right)\cdot |V_H|\}
\end{align*}
be the collection of clouds that have a representative, and for each $g\in S$ we denote its representative by $\bar{f}(g)$ 
and by $\pi \circ \bar{f}(g)$ 
the cloud this representative resides in.
Then the set
\begin{align*}
 \big\{ g\in S: & {\text{ the shortest path in $G$ between $g$ and $ \pi \circ \bar{f}(g)$  }}\\
 & {\text{ has at most $\varepsilon \log n$ edges }}\big\}
\end{align*}
is of size at least $ (1-O(c\varepsilon))\cdot |V_G|$.
\end{lemma}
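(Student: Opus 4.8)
The plan is to use an averaging argument on the cost of the cheap integral solution $f$, exploiting the structure of the instance $\mathcal{I}(G_X,\ell,L)$. Recall that the total cost of $f$ is at most $c\varepsilon^2 n^2 \log^{2/3} n$, and that the objective splits into two types of contributions: intra-cloud edges (copies of $E_H$, each of weight $1/\ell_H = 1/\log^{1/3} n$), inter-cloud edges (copies of $E_G$, matched across clouds, each of weight $1/\ell_G = 1/\log^{2/3} n$), and the terminal edges $(v,v_T)$ of weight $1/L = 1/\log n$ which pin a vertex to the terminal it is labeled with. The first step is to bound the number of \emph{bad} clouds — those $g$ for which no terminal captures a $(1-O(\varepsilon))$ fraction of $(g,h)$'s under $f$. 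For such a cloud, I would argue that the labeling $f$ restricted to the cloud, viewed together with the terminal edges, forces $\Omega(\varepsilon |V_H|)$ vertices to pay the additive ``$+2L$'' penalty built into $D$: if two vertices in the same cloud receive terminals $u_T \neq v_T$, then $D(u_T,v_T) = D_X(u,v) + 2L \geq 2L = 2\log n$. Combining this with an expansion/edge-boundary argument on $H$ (the cloud is an expander, so if the most popular label has frequency $\leq (1-\Omega(\varepsilon))|V_H|$ there is an edge boundary of size $\Omega(\varepsilon |V_H|)$ inside the cloud, each such $H$-edge paying $\geq 2L/\log^{1/3} n = 2\log^{2/3} n$), I get that each bad cloud costs $\Omega(\varepsilon \log^{2/3} n)$ in the objective. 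Hence the number of bad clouds is $O(c\varepsilon^2 n^2 \log^{2/3} n / (\varepsilon \log^{2/3} n)) = O(c\varepsilon n^2)$, which — since $|V_G| = n$ and $|V_H| = n$ so $|X| = n^2$, but we're counting clouds, $|V_G| = n$ — wait, I should be careful: the number of clouds is $|V_G| = n$, so I want bad clouds to be an $O(c\varepsilon)$ fraction of $n$; the per-bad-cloud cost $\Omega(\varepsilon \log^{2/3} n)$ times $\Omega(c\varepsilon n)$ bad clouds must not exceed $c\varepsilon^2 n^2 \log^{2/3} n$, which balances correctly once one tracks the $n$ versus $n^2$ bookkeeping (each cloud has $n$ vertices, so ``$\Omega(\varepsilon |V_H|)$ penalized vertices'' is $\Omega(\varepsilon n)$, giving per-bad-cloud cost $\Omega(\varepsilon n \cdot \log^{2/3} n / \log^{1/3} n)$ or similar). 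So $|V_G \setminus S| = O(c\varepsilon) \cdot |V_G|$, though the statement as written only needs $S$ large enough implicitly; what we actually need is the second, sharper count.

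The second and main step is to bound, among clouds in $S$, how many have a representative that lives far away — more than $\varepsilon \log n$ edges away in $G$. Here the key observation is that the terminal edge $(v, v_T)$ has weight $1/L = 1/\log n$, and if $g \in S$ has representative $\bar f(g)$ residing in cloud $\pi\circ\bar f(g)$ at $G$-distance $r$ from $g$, then for the $\geq (1-O(\varepsilon))|V_H|$ vertices $(g,h)$ labeled $\bar f(g)$, each terminal edge $((g,h), (g,h)_T)$ contributes $w_e \cdot D((g,h)_T, \bar f(g)) \geq (1/\log n)\cdot(D_X(g\text{-cloud point}, \bar f(g)) + 2L)$... actually more to the point: the distance in $D_X$ between a point in cloud $g$ and a point in cloud $\pi\circ\bar f(g)$ is at least (number of $G$-edges on the shortest path) $\times \ell_G = r \cdot \log^{2/3} n$, because any path in $G_X$ between these clouds must traverse at least $r$ inter-cloud edges. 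So each of the $\Omega(|V_H|) = \Omega(n)$ vertices in cloud $g$ labeled $\bar f(g)$ pays at least $(1/\log n)(r\log^{2/3} n + 2\log n) \geq r\log^{2/3}n/\log n = r/\log^{1/3} n$ via its terminal edge. Summing over a cloud at distance $r \geq \varepsilon \log n$ gives cost $\geq \Omega(n) \cdot \varepsilon \log n / \log^{1/3} n = \Omega(\varepsilon n \log^{2/3} n)$ per such far cloud. Therefore the number of far clouds in $S$ is $O(c\varepsilon^2 n^2 \log^{2/3} n / (\varepsilon n \log^{2/3} n)) = O(c\varepsilon n) = O(c\varepsilon)|V_G|$. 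Combining with the bound on $|V_G \setminus S|$ from step one (which is also $O(c\varepsilon)|V_G|$), the set of clouds that are in $S$ \emph{and} have a nearby representative is at least $(1 - O(c\varepsilon))|V_G|$, as claimed.

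The main obstacle I anticipate is the intra-cloud averaging in step one: making precise the claim that a bad cloud must pay $\Omega(\varepsilon \log^{2/3} n)$ (per vertex, or appropriately scaled) requires carefully choosing the threshold inside the definition of $S$ (the hidden constant in ``$(1-O(\varepsilon))|V_H|$'') and invoking the edge-expansion of $H$ to convert ``the label distribution is spread out'' into ``many intra-cloud edges have endpoints with different labels, hence pay $\geq 2L$.'' One subtlety is that different-labeled neighbors $(g,h_1) \mapsto u_T$, $(g,h_2) \mapsto v_T$ pay $D_X(u,v) + 2L$, and $D_X(u,v)$ could be large or small — but it is always $\geq 0$ and the $+2L$ term alone suffices, so this is fine. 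A second subtlety is bookkeeping the two roles of $n$: $|V_G| = |V_H| = n$ so clouds number $n$ and total vertices number $n^2$; I must make sure ``a fraction $O(c\varepsilon)$ of clouds'' is what comes out, not a fraction of all $n^2$ vertices. A third point is that in step two I use that the shortest $G_X$-path between two clouds projects (under $\pi$) to a $G$-walk of the same number of inter-cloud edges, hence has $\ell$-length $\geq r\log^{2/3}n$ where $r = d_G(g, \pi\circ\bar f(g))$; this is immediate from the length function in Definition \ref{def:SemiDirect} since each inter-cloud edge has length exactly $\ell_G \equiv \log^{2/3} n$. Everything else is routine averaging, and the choice of $c$ small relative to the hidden constants closes the argument.
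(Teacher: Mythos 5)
Your proof is correct and follows the same two-stage averaging argument as the paper: bound the number of clouds without a representative using the edge-expansion of $H$ on intra-cloud edges (each paying at least $2L/\ell_H = \Omega(\log^{2/3}n)$), and then bound the number of clouds in $S$ with far representatives via the terminal edges $(v,v_T)$ of weight $1/L$, using that the $G_X$-distance between a vertex in cloud $g$ and one in cloud $g'$ is at least $d_G(g,g')\cdot \ell_G$. The mid-proof wobble over $n$ versus $n^2$ resolves to the correct per-bad-cloud cost $\Omega(\varepsilon n\log^{2/3}n)$, matching the paper's accounting.
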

\begin{proof}
Our proof is in two stages: $(1)$ we prove that $|S|\geq (1-O(c\varepsilon))|V_G|$; and $(2)$ we use the latter lower bound on the size of $S$ to conclude the proof.

Let us focus on the first stage, proving a lower bound on $|S|$.
Fix a cloud $g$, and assume that $g\notin S$, {\em i.e.}, there is no terminal $v_T$ such that at least a fraction of $(1-O(\varepsilon))$ of the vertices in the cloud are assigned to $v_T$.
Since the cloud $g$ is a copy of $H$ and $H$ is an expander, we can assume that there are at least $\Omega (\varepsilon |V_H|)$ intra-cloud edges inside $g$ whose two endpoints are assigned to different terminals.
Recalling the definition of the metric $D$ over the terminals, we know that for any $v_T\neq v'_T$: $ D(v_T,v'_T)\geq 2L=\Omega (\log{n})$.
The weight $w$ of an intra-cloud edge equals $\log^{-\nicefrac[]{1}{3}}n$, thus we can conclude that such a cloud $g$ contributes to the value of the solution $f$ at least $\Omega (\varepsilon n \log ^{\nicefrac[]{2}{3}}n)$.
If there are too many such clouds $g$, {\em i.e.}, at least $\Omega (c\varepsilon n)$, this causes the value of the solution $f$ to exceed $c\varepsilon ^2 n^2 \log ^{\nicefrac[]{2}{3}}n$, a contradiction.
Thus, we can can assume that there are at most $O(c\varepsilon n)$ such clouds.
This proves that $|S|\geq (1-O(c\varepsilon))n$.
This concludes the first stage of the proof.

Let us now focus on the second stage of the proof.
Fix a cloud $g\in S$, and assume that the number of edges in the shortest path in $G$ between $g$ and $\pi \circ \bar{f}(g)$
is more than $\varepsilon \log{n}$.
For every vertex $v$ in the cloud $g$ such that $ f(v)=\bar{f}(g)$
we consider the edge $(v,v_T)\in E$.
The distance with respect to $D$ between the terminals assigned to the two endpoints of this edge, {\em i.e.}, $v$ and $v_T$, is at least $\varepsilon \log ^{\nicefrac[]{5}{3}}n$.
The weight $w$ of such an edge equals $1/\log{n}$, thus its contribution to the value of $f$ is at least $\varepsilon \log^{\nicefrac[]{2}{3}}n$.
Since there are at least $(1-O(\varepsilon))|V_H|$ vertices in the cloud $g$ that $f$ assigns to  $\bar{f}(g)$,
we can conclude that this cloud contributes to the value of $f$ at least $(1-O(\varepsilon))\varepsilon n \log ^{\nicefrac[]{2}{3}}n$.

Assume to the contrary that there are at least $\Omega (c\varepsilon n)$ clouds $g$ such that the number of edges in the shortest path in $G$ between $g$ and $ \pi \circ \bar{f}(g)$
is more than $\varepsilon \log{n}$.
This implies that the cost of the solution $f$ exceeds $c\varepsilon ^2 n^2 \log^{\nicefrac[]{2}{3}}n$, which is a contradiction.
Hence, we can conclude that there are only at most $O(c\varepsilon n)$ such clouds.
\end{proof}

Equipped with Lemma \ref{lemma:CldsHasReps}, we are now ready to prove Theorem \ref{thrm:Split}.

\begin{proof}[Proof of Theorem \ref{thrm:Split}]
First let us define $G'$ to be the subgraph of $G$ that contains all clouds that have a close representative, {\em i.e.}, $V_{G'}$ contains all clouds $g\in S$, where $S$ and $\bar{f}$ (as well as $\pi \circ \bar{f}$) are as in Lemma \ref{lemma:CldsHasReps}, such that the shortest path in $G$ between $g$ and $ \pi \circ \bar{f}(g)$ 
contains at most $\varepsilon \log{n}$ edges.
Additionally, we have in $G'$ all edges $(g_1,g_2)\in E_G$ such that $g_1,g_2\in V_{G'}$ and $D_X(\bar{f}(g_1),\bar{f}(g_2)) < \varepsilon \log^{\nicefrac[]{4}{3}}n$.

First, let us prove that $G'$ contains enough edges as required by Definition \ref{def:Split}, 
{\em i.e.}, $|E_{G'}|\geq (1-\varepsilon)|E_G|$.
Recall that Lemma \ref{lemma:CldsHasReps} implies that $|V_{G'}|\geq (1-O(c\varepsilon))n$.
Let us denote by $B$ the number of edges $(g_1,g_2)\in E_G$ such that $g_1,g_2\in V_{G'}$, however $ D(\bar{f}(g_1),\bar{f}(g_2)) \geq D_X (\bar{f}(g_1),\bar{f}(g_2))\geq \varepsilon \log^{\nicefrac[]{4}{3}}n$.
We prove that there are at most $(1+O(\varepsilon))c\varepsilon n$ such edges.
Every such edge contributes to the cost of the solution $f$ at least:
 $(\varepsilon \log ^{\nicefrac[]{4}{3}}n) \cdot ( \log ^{-\nicefrac[]{2}{3}}n)\cdot \left( 1-O(\varepsilon)\right) n $, where the $\varepsilon \log ^{\nicefrac[]{4}{3}}n$ term is the lower bound on the distance $D$ between the terminals assigned to the endpoints, the $\log ^{-\nicefrac[]{2}{3}}n$ term is the weight $w$, and the $\left( 1-O(\varepsilon)\right) n$ term is the number of matching edges between the clouds $g_1$ and $g_2$ that both their endpoints are assigned to the corresponding representative.
Thus, since the total cost of $f$ is at most $c\varepsilon ^2 n^2 \log ^{\nicefrac[]{2}{3}}n$, one can deduce that $B\leq (1+O(\varepsilon))c\varepsilon n$.
Thus, for a small enough constant $c$ we have that $|E_{G'}|\geq (1-\varepsilon) |E_G|$.

Second, let us focus on the three conditions of Definition \ref{def:Split},
and prove that $\bar{f}$ is indeed an $(\varepsilon \log^{\nicefrac[]{4}{3}}n,\varepsilon)$-split.

Let us start with the second condition of Definition \ref{def:Split}, which states that for every $(g_1,g_2)\in E_{G'}$ it holds that $D_X(\bar{f}(g_1),\bar{f}(g_2)) < \alpha$ where $\alpha = \varepsilon \log^{\nicefrac[]{4}{3}}n$.
Note that this condition holds trivially by definition of $G'$.
Moreover, one can easily see that the third condition of Definition \ref{def:Split} also trivially holds.

Let us consider the first condition of Definition \ref{def:Split}.
We prove that $\bar{f}$ satisfies that $(\pi\circ \bar{f},P_{\pi\circ \bar{f}})$ is a cycle-homeomorphism of $G$ with respect to $G'$.
Fix a simple cycle $C\subseteq E_{G'}$ in $G'$, and consider an arbitrary edge $(g_1,g_2)\in C$.
We examine now two paths between $\pi\circ \bar{f}(g_1)$ and $\pi\circ \bar{f}(g_2)$ in $G$.
The first is defined as the concatenation of the shortest path in $G$ between $\pi\circ \bar{f}(g_1)$ and $g_1$, with the edge $(g_1,g_2)$, and with the shortest path in $G$ between $g_2$ and $\pi \circ \bar{f}(g_2)$.
Note that this path contains at most $1+2\varepsilon \log{n}$ edges of $G$.
The second path is defined by taking the shortest path in $G_X$ between $\bar{f}(g_1)$ and $ \bar{f}(g_2)$ and projecting it to $G$ via $\pi$.
Recall that this is exactly $P_{\pi\circ \bar{f}}((g_1,g_2))$.
Note that this path contains at most $\varepsilon \log ^{\nicefrac[]{2}{3}}n$ edges of $G$ (recall that $D_X(\bar{f}(g_1),\bar{f}(g_2))\leq \varepsilon \log^{\nicefrac[]{4}{3}}n$ and the length of every edge in $G$ equals $\log^{\nicefrac[]{2}{3}}n$).

Equipped with the above two paths in $G$ between $\pi\circ \bar{f}(g_1)$ and $\pi\circ \bar{f}(g_2)$, we examine their concatenation.
This results in a cycle that contains at most $1+2\varepsilon \log{n}+\varepsilon \log^{\nicefrac[]{2}{3}}n$ edges in $G$.
Hence, for a small enough $\varepsilon$ this concatenated cycle contains less edges than the girth of $G$.
This implies that this concatenated cycle is not simple and every edge in it appears an even number of times (one can view these paths in Figure \ref{fig:CycleHomeo}).

\begin{figure}[t]
    \centering
    \includegraphics[trim={6cm 0 6cm 0},clip,width=0.9\linewidth]{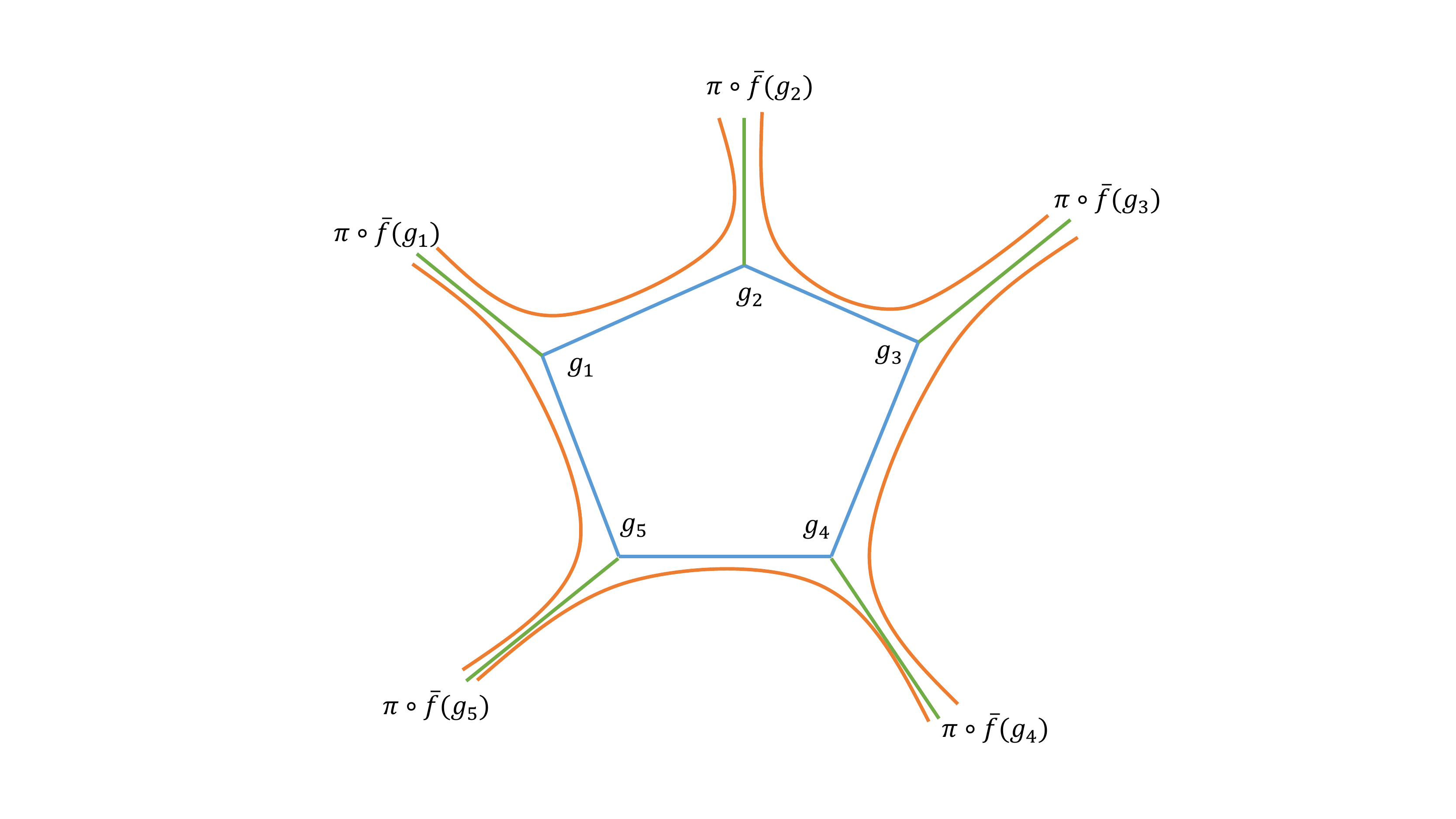}
    \caption{$C$ colored blue, short paths between $\pi\circ \bar{f}(g_i)$ and $g_i$ colored green, and $P_{\pi\circ \bar{f}}$ colored orange}
    \label{fig:CycleHomeo}
\end{figure}

Adding up over all edges of $C$, the concatenated cycle of each such edge, we obtain a collection of edges (counting multiplicities) that: $(1)$ every edge in the collection appears an even number of times; $(2)$ the collection is comprised of adding up the cycle $C$, $\cup _{e\in C}P_{\pi\circ \bar{f}}(e) $ (with multiplicities), and twice each of the shortest path in $G$ between every vertex $g\in C$ and $ \pi \circ \bar{f}(g)$.
Using $(1)$ and $(2)$ above we can conclude that all edges appearing an odd number of times in $ \cup _{e\in C}P_{\pi \circ \bar{f}}(e)$ are exactly all edges of $C$.
This proves that $(\pi\circ \bar{f},P_{\pi \circ \bar{f}})$ is a cycle-homeomorphism of $G$ with respect to $G'$.
This concludes the proof as $\bar{f}$ is an $(\varepsilon \log^{\nicefrac[]{4}{3}}n,\varepsilon)$-split.
\end{proof}

\section{Certificates and the Proof of Theorem \ref{thrm:NoSplit}}\label{sec:NoSplit}
In this section we introduce the notion of a certificate. which is used to prove Theorem \ref{thrm:NoSplit}.
In order to prove Theorem \ref{thrm:NoSplit}, we are required to upper bound the probability that there are short paths between our representatives (recall the second requirement in Definition \ref{def:Split} of a split).
The above is achieved by carefully ``counting'' all possible paths, to which end the notion of a certificate is useful.

More specifically, given a sampled $G_X$ and a $(\varepsilon \log^{\nicefrac[]{4}{3}}n,\varepsilon)$-split $\bar{f}$ (recall that $\bar{f}$ is also associated with an appropriate  subgraph $G'$ of $G$), we construct a combinatorial structure which we call a certificate.
Intuitively, this certificate contains only {\em partial} information given $G_X$ and the split $\bar{f}$, which is enough to reconstruct the shortest path in $G_X$ between the representatives of neighboring clouds.
Next, we prove two important facts.
First, given $G$ and $H$ there is an upper bound on the number of possible certificates.
Second, given a certificate the probability (over sampling from $\ExtGH$) of obtaining the given certificate from $G_X$ and any $(\varepsilon \log^{\nicefrac[]{4}{3}}n,\varepsilon)$-split $\bar{f}$ is sufficiently low.
To conclude our argument we simply employ the union bound and prove that with a positive probability we sampled a $G_X$ that cannot produce any certificate, thus this $G_X$ does not have any $(\varepsilon \log^{\nicefrac[]{4}{3}}n,\varepsilon)$-split $\bar{f}$.

The reader should note that in this section, since we are focusing on Theorem \ref{thrm:NoSplit}, the instance $\mathcal{I}(G_X,\ell,L)$ does not play a role whereas $G_X$ and $\ell$ do.

\subsection{Building Certificates}

We start with how a certificate is constructed given $G_X$ and a split $\bar{f}$.
To simplify the definition of a certificate, we recall that both $G$ and $H$ are Cayley graphs of some group, and that every vertex in $V_G$ and $V_H$ is uniquely identified with an element of the corresponding group.
The property (which is stated for $H$ but also applies to $G$) we require in our analysis is the following: given a path $P$ in $H$ that starts at $u$ and ends in $v$, we can determine:
$(1)$ if there is an edge connecting $u$ and $v$ in $H$; and $(2)$ what are the two labels of this edge, assuming it exists.
Moreover, it is useful to note that a path $P$ from $u$ to $v$ in $H$ corresponds to $x=u^{-1}\cdot v$ in the group, or equivalently, that multiplying the labels of the edges of $P$ in direction from $u$ to $v$ equals to $x$.
Thus, for example, if $P$ is a cycle then $x$ is the identity (recall Definition \ref{Def:CycleHomomorphism} of cycle-homeomorphism).

We denote the labeling of an edge $e\in E_H$ and one of its two end points $u\in V_H$ by $\LH(u,e)$ and the labeling of an edge $e\in E_G$ and one of its two end points $u\in V_G$ as $\LG(u,e)$.
One can note that this induces a labeling of $G_X$, where every vertex $(g,h)\in X$ is associated with the ordered pair of group elements $(g,h)$ and every edge in $E_X$ is associated with either an element from the group $G$ is its Cayley graph, or the group $H$ is its Cayley graph (the former is for inter-cloud edges whereas the latter is for intra-cloud edges).
Hence, vertex $(g,h)\in X$ has exactly $2d$ different labels on the $2d$ edges touching it in $G_X$, $d$ due to $H$ (intra-cloud edges) and $d$ due to $G$ (inter-cloud edges).
We denote this labeling by $\LX((g,h),e)$.
For simplicity of presentation, for an edge $e\in E_X$ we denote by $\LX(e)$ the label $\LX((g,h),e)$, where $(g,h)$ is one of the two end vertices of $e$, where it is clear from the context which of the two end vertices is chosen.
Alternatively, one can associate a direction for every edge $e\in E_X$ that is clear from the context and dictates which of the two end vertices of $e$ is chosen for $\LX(e)$.

\vspace{5pt}
\noindent {\bf{Formal Transformations.~~~~~}}
Given a sampled $G_X$ and a $(\varepsilon \log^{\nicefrac[]{4}{3}}n,\varepsilon)$-split $\bar{f}$, we consider the collection of all shortest paths in $G_X$ between representatives of neighboring clouds (as given by $G'$).
The first step in constructing a certificate is, intuitively, to strip information from these paths: the identities of vertices inside each cloud are removed while the labels on the intra-cloud edges remain (thus {\em absolute} information inside the cloud is erased but {\em relative} information inside the cloud remains).
This is achieved by the notion of a formal transformation and is captured by Definition \ref{def:FormTrans} and Algorithm \ref{alg:FormalTrans}.
In what follows, the reader should keep in mind that the collection of paths $\{ P_{\ell}\} _{\ell=1}^M$ that will be plugged into Algorithm \ref{alg:FormalTrans} is the collection of shortest paths in $G_X$ between the representatives of neighboring clouds whose existence is ensured by the split $\bar{f}$ and $G'$ (recall Definition \ref{def:Split}).
In what follows we assume every path is directed in an arbitrary direction which is fixed, thus $\LX(e)$ is well defined in the following definition of a formal transformation.

\begin{definition}\label{def:FormTrans}
The {\em formal transformation} of a collection of paths $\{ P_{\ell}\} _{\ell = 1}^M$ in a sampled graph $G_X$ from $ \ExtGH$ is a collection $\{ Q_{\ell}\} _{\ell = 1}^M$ of paths where:
\begin{enumerate}
    \item every vertex $(g,h)$ in $P_{\ell}$ is given in $Q_{\ell}$ only by an index $\text{ind}((g,h))=(g,i)$ where $i$ is an index in $\{ 1,\ldots,|V_H|\}$. 
    \item if a vertex $(g,h)$ appears in paths $P_{\ell}$ and $P_{\ell'}$ then both in $Q_{\ell}$ and $Q_{\ell'}$ the vertex $(g,h)$ is given by the same index $\text{ind}((g,h))=(g,i)$ for some $i$.
    \item every edge $e=((g,h),(g',h'))$ in a path $P_{\ell}$ is given in $Q_{\ell}$ also by $\LX(e)$. 
\end{enumerate}
\end{definition}

Let us now focus on Algorithm \ref{alg:FormalTrans}.
Its output is the {\em formal transformation} $\{ Q_{\ell}\} _{\ell =1}^M$ of the input $\{ P_{\ell}\} _{\ell=1}^M$, where the indices $i$ are given sequentially to each vertex according to the order they are exposed.
In Algorithm \ref{alg:FormalTrans}, $V_{\text{exposed}}$ is the collection of vertices seen so far, $i$ is a running index that produces the sequential numbering for each cloud, and $\text{ind}:X \rightarrow V_G\times \mathbb{N}$ is the indexing $(g,i)$ the algorithm produces in the formal transformation.
Observe that if one knows a sampled $G_X$ from $\ExtGH$, the output $\{ Q_{\ell}\} _{\ell =1}^M$ of Algorithm \ref{alg:FormalTrans} (for an unknown input $\{ P_{\ell}\}_{\ell = 1}^M$), and the {\em true} identity of at least one of the endpoints of each path in $\{ Q_{\ell}\} _{\ell = 1}^M$ (by true identity we mean that if $(g,h)$ is the start or end vertex of a path $P_{\ell}$ then $h$ is also known in addition to the information given by the formal transformation of $P_{\ell}$), one can reconstruct $\{ P_{\ell}\}_{\ell = 1}^M$.
The following lemma summarizes the guarantee of Algorithm \ref{alg:FormalTrans} along with an upper bound on the number of vertices that can appear in every cloud $g\in V_G$.

\SetKwInput{KwInput}{Input}                
\SetKwInput{KwOutput}{Output}              

\begin{algorithm}[ht]
\caption{FormalTransAlg}
\DontPrintSemicolon
\label{alg:FormalTrans}
\KwInput{$\{ P_{\ell} \}_{\ell =1}^M$}
\KwOutput{formal transformation $\{ Q_{\ell}\} _{\ell =1}^M$ of $\{ P_{\ell} \}_{\ell =1}^M$}
 {\bf{initialization:}}  $i_g\leftarrow 1$ $\forall g\in V_G$ and $V_{\text{exposed}} \leftarrow \emptyset$\;
    \For{$\ell=1$ to $M$}{
    initialize $Q_{\ell}$ to be an empty path\;
    let $(g,h)$ be the first vertex in $P_{\ell}$\;
    \uIf{$(g,h)\notin V_{\text{exposed}}$ }{
            $V_{\text{exposed}}\leftarrow V_{\text{exposed}} \cup \{ (g,h)\}$\;
            $\text{ind}((g,h))\leftarrow (g,i_g)$\;
            $i_g\leftarrow i_g+1$\;
        }
     \For{$e = ((g,h),(g',h'))$ the next edge of $P_{\ell}$}{
        \uIf{$(g',h')\notin V_{\text{exposed}}$ }{
            $V_{\text{exposed}}\leftarrow V_{\text{exposed}} \cup \{ (g',h')\}$\;
            $\text{ind}((g',h'))\leftarrow (g',i_{g'})$\;
            $i_{g'}\leftarrow i_{g'}+1$\;
        }
        in $Q_{\ell}$ add the edge $(\text{ind}((g,h)),\text{ind}((g',h')))$ and label it $\LX(e)$\; 
     }
 }
\Return  $Q_1,\ldots,Q_M$\;
\end{algorithm}

\begin{lemma}\label{lem:versInCloud}
Let $G_X$ be a graph in the support of $ \ExtGH$, and let $\bar{f}$ be a $(\varepsilon \log ^{\nicefrac[]{4}{3}}n,\varepsilon)$-split and let $G'$ be the subgraph of $G$ associated with $\bar{f}$.
Let $\{ P_{\ell}\} _{\ell=1}^M$ be the collection of shortest paths in $G_X$ between $\bar{f}(g_1) $ and $ \bar{f}(g_2)$, for every $(g_1,g_2)\in E_{G'}$.
Then Algorithm \ref{alg:FormalTrans} when applied to $\{ P_{\ell}\} _{\ell =1}^M$ outputs the formal transformation of $ \{ P_{\ell}\} _{\ell = 1}^M$.
Moreover, every cloud $g\in V_G$ has at most $n^{O(\varepsilon)}$ vertices in $\{ Q_{\ell}\} _{\ell = 1}^M$.

\end{lemma}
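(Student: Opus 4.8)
The plan is to prove Lemma~\ref{lem:versInCloud} in two parts, exactly matching its two assertions, both of which are really bookkeeping consequences of the structure already set up.

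\textbf{Part 1: Algorithm~\ref{alg:FormalTrans} outputs the formal transformation.} First I would simply verify that the output $\{Q_\ell\}_{\ell=1}^M$ satisfies the three defining properties of Definition~\ref{def:FormTrans}. Property~(3) (every edge $e$ carries the label $\LX(e)$) is immediate from the last line inside the inner \texttt{for} loop. Property~(1) (every vertex appears only via an index $(g,i)$ with $i\in\{1,\dots,|V_H|\}$) holds because $\text{ind}((g,h))$ is always assigned as $(g,i_g)$, and $i_g$ starts at $1$ and increments once per newly exposed vertex in cloud $g$; since a cloud has only $|V_H|$ vertices, $i_g$ never exceeds $|V_H|$ — though in fact this is subsumed by Part~2, which gives the much stronger bound $n^{O(\varepsilon)}$. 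Property~(2) (a vertex appearing in two paths gets the same index in both) holds because $V_{\text{exposed}}$ persists across iterations of $\ell$ and $\text{ind}$ is assigned only once, the first time a vertex is exposed; on any later encounter the \texttt{if} guard fails and the previously stored $\text{ind}$ value is reused. I would also note the key observation (already flagged in the text preceding the lemma) that, together with $G_X$ and the true identity of one endpoint of each $Q_\ell$, the paths $\{P_\ell\}$ can be reconstructed: walking along $Q_\ell$ from the known endpoint, each successive label $\LX(e)$ together with the already-recovered vertex determines the next vertex in $G_X$ (using the Cayley-graph labeling for intra-cloud steps and the sampled matching for inter-cloud steps).

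\textbf{Part 2: Each cloud contains at most $n^{O(\varepsilon)}$ vertices among the $Q_\ell$.} This is the substantive part. The number of distinct vertices in cloud $g$ appearing in $\{P_\ell\}$ is at most the number of paths $P_\ell$ that enter cloud $g$ times the maximum number of vertices of $g$ that a single path can use. A single $P_\ell$ is a \emph{shortest} path in $G_X$ between $\bar f(g_1)$ and $\bar f(g_2)$; by condition~(2) of the split, $D_X(\bar f(g_1),\bar f(g_2)) < \varepsilon\log^{\nicefrac[]{4}{3}}n$, and since every edge of $G_X$ has length at least $\log^{\nicefrac[]{1}{3}}n$ (intra-cloud) — actually the relevant bound is that inter-cloud edges have length $\log^{\nicefrac[]{2}{3}}n$, so a shortest path makes $O(\varepsilon\log^{\nicefrac[]{2}{3}}n)$ inter-cloud steps; the total number of edges, and hence vertices, on $P_\ell$ is at most $O(\varepsilon\log n)$ once one accounts for intra-cloud stretches (each cloud traversal costs $\geq \log^{\nicefrac[]{1}{3}}n$ length, so at most $O(\varepsilon\log n)$ clouds are visited and within each the path length inside is $O(\varepsilon\log n)$ as well). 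So each $P_\ell$ contributes $O(\varepsilon\log n)$ vertices to cloud $g$ — in particular a fixed polylogarithmic number. It then remains to bound how many $P_\ell$ can touch a given cloud $g$. Here I would argue that a shortest path $P_\ell$ between $\bar f(g_1)$ and $\bar f(g_2)$ projects (via $\pi$) to a walk in $G$ from $g_1$ to $g_2$ of length $O(\varepsilon\log^{\nicefrac[]{2}{3}}n)$ edges in $G$; combined with condition~(3) of the split (representatives lie within $\varepsilon\log n$ edges of their clouds), every cloud $g$ visited by $P_\ell$ lies within $O(\varepsilon\log n)$ hops in $G$ of both $g_1$ and $g_2$. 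Since $G$ has bounded degree $d$, the ball of radius $O(\varepsilon\log n)$ in $G$ around $g$ has at most $d^{O(\varepsilon\log n)} = n^{O(\varepsilon)}$ vertices, so there are at most $n^{O(\varepsilon)}$ choices for $g_1$ (and the edge $(g_1,g_2)\in E_{G'}$ is then determined up to $d$ choices). Hence at most $n^{O(\varepsilon)}$ paths $P_\ell$ can touch cloud $g$, and multiplying by the $O(\varepsilon\log n)$ vertices each such path contributes gives the claimed $n^{O(\varepsilon)}$ bound.

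\textbf{Anticipated main obstacle.} The delicate point is the counting in Part~2: making rigorous the claim that only $n^{O(\varepsilon)}$ paths $P_\ell$ can pass through a fixed cloud $g$. The clean way is the radius-ball argument above — for $P_\ell$ to use cloud $g$, the $G$-walk obtained by projecting $P_\ell$ must pass through $g$, and this walk has $O(\varepsilon\log^{\nicefrac[]{2}{3}}n)$ edges, so both its endpoints $g_1,g_2$ lie in a $G$-ball of radius $O(\varepsilon\log n)$ about $g$ (using condition (3) to pass from $g_i$ to the cloud of $\bar f(g_i)$, and noting the $G_X$-to-$G$ projection at most preserves hop-distance up to the $\log^{\nicefrac[]{2}{3}}n$ edge-length scaling plus intra-cloud detours). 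Since $|V_G|$-balls of that radius have size $d^{O(\varepsilon\log n)} = n^{O(\varepsilon)}$ by bounded degree, and each edge of $G'$ gives rise to exactly one $P_\ell$, the count follows. I would make sure the exponents of $\varepsilon$ in all the $O(\cdot)$'s are genuinely absolute constants independent of $n$ (they are — $d$ and the girth constant are fixed), so that $n^{O(\varepsilon)}$ is a legitimate bound and, crucially, stays $n^{o(1)}$ as $\varepsilon\to 0$, which is what the union bound in the next subsection will need.
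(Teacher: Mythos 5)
Your proposal is correct and follows essentially the same two-step structure as the paper's proof: verify the algorithm's output against Definition~\ref{def:FormTrans} directly, then bound per-cloud vertex count by (i) using the split's distance bound and the $\log^{\nicefrac[]{1}{3}}n$ minimum edge length to cap each $P_\ell$ at $O(\varepsilon\log n)$ edges, and (ii) using condition~(3) of Definition~\ref{def:Split} plus bounded degree to show only a $d^{O(\varepsilon\log n)} = n^{O(\varepsilon)}$-sized $G$-ball's worth of edges $(g_1,g_2)\in E_{G'}$ can produce a path through a fixed cloud $g$. One small slip in your phrasing — the projected walk runs between $\pi\circ\bar f(g_1)$ and $\pi\circ\bar f(g_2)$, not $g_1$ and $g_2$ — but you correct for this yourself in the last paragraph by invoking condition~(3), which is exactly the paper's move.
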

\begin{proof}
First, it is clear that Algorithm \ref{alg:FormalTrans} produces a formal transformation, according to Definition \ref{def:FormTrans}, of $\{ P_{\ell}\} _{\ell = 1}^M$.
Second, given an edge $(g_1,g_2)\in 
E_{G'}$, recall that by Definition \ref{def:Split} $ D_X(\bar{f}(g_1),\bar{f}(g_2)) < \varepsilon \log^{\nicefrac[]{4}{3}}n$.
Since the shortest length of an edge in $G_X$ equals $\log ^ {\nicefrac[]{1}{3}} n $, one can conclude that the shortest path in $G_X$ between $\bar{f}(g_1)$ and $\bar{f}(g_2)$ has at most $\varepsilon \log n$ edges.
Given a cloud $g\in V_G$, we now aim to upper bound the number of paths in $\{ P_{\ell}\}_{\ell =1}^M$ that contain at least one vertex from $g$.
Every such path corresponds to an edge $(g_1,g_2)\in E_{G'}$ satisfying: $\pi \circ \bar{f}(g_1)$ or $\pi \circ \bar{f}(g_2)$ is within $\varepsilon \log{n}$ edges away from $g$ in $G$.
Moreover, $\pi \circ \bar{f}(g_1)$ (or alternatively $ \pi \circ \bar{f}(g_2)$) is within at most $\varepsilon \log{n}$ edges away from $g_1$ in $G$ (or alternatively $g_2$ in $G$), this follows from the third condition in Definition \ref{def:Split}.
Thus, the total number of $(g_1,g_2)\in E_{G'}$ edges whose path between $\bar{f}(g_1)$ and $\bar{f}(g_2)$ passes through the cloud $g$ is upper bounded by $d\cdot d^{2\varepsilon \log{n}} = n^{O(\varepsilon)}$.
To conclude the proof, we need to recall again that each path in $\{ P_{\ell}\} _{\ell = 1}^M$ contains at most $\varepsilon \log{n}$ edges of $G_X$ and thus the number of vertices in each cloud that belong to $\{Q_{\ell} \} _{\ell = 1}^M$ is at most $\varepsilon \log{n} \cdot n^{O(\varepsilon)} = n^{O(\varepsilon)}$.
\end{proof}

\vspace{5pt}
\noindent {\bf{Inner Connected Components.~~~~~}}
In this section we define when a connected component of $\{ Q_{\ell}\} _{\ell =1}^M$, when restricted to a cloud $g$, is significant to our analysis.
We recall that given $\{Q_{\ell} \} _{\ell = 1}^M$, every vertex in $\{Q_{\ell} \} _{\ell = 1}^M$ is given by $\text{ind}(g,h)=(g,i)$ (for some $i$) and not by $(g,h)$, {\em i.e.}, the cloud $g$ it belongs to and a serial number $i$ given to it by Algorithm \ref{alg:FormalTrans}.
Fix a cloud $g$ and examine the connected components of the graph whose vertices are all the vertices of $\{ Q_{\ell}\} _{\ell =1}^M$ which belong to cloud $g$ and all the edges appearing in $\{ Q_{\ell}\} _{\ell =1}^M$ whose two endpoints are in cloud $g$.
We define the {\em degree} of a connected component as the number of distinct inter-cloud edges that appear in $\{ Q_{\ell}\} _{\ell =1}^M$ and whose one of their end vertices belongs to the connected component.
We say that a connected component is an {\em inner connected component} if its degree is at least three or if it contains a representative of a cloud, {\em i.e.}, it contains a starting vertex or end vertex of a path $P_{\ell}$.

We note that a connected component which is not inner must be a path.
The reason for that is that every vertex in the connected component belongs to at least one path $Q_{\ell}$.
Since there are no representatives in the component, it must be the case that for each such $Q_{\ell}$ and for each maximal subpath of $Q_{\ell}$ that is contained as a whole in the component, there are two distinct inter-cloud edges in $Q_{\ell}$ (as $Q_{\ell}$ is simple), one immediately preceding it and the other immediately following it.
This implies that the degree of the component is at least two.
Since the degree of the component is at most two (recall it is not inner), we can conclude that its degree is exactly two.
Since $H$ has girth greater than $2\varepsilon \log{n}$ (for every small enough constant $\varepsilon >0$), and recalling that every $Q_{\ell}$ contains at most $\varepsilon \log{n}$ edges, one can prove that the connected component is exactly the unique shortest path (inside the cloud) between the endpoints that are in the cloud of the two inter-cloud edges touching the component.

Let us define the {\em inner connected components graph} $R=(V_R,E_R)$ as follows.
The vertices $V_R$ of this graph are the inner connected components as defined above.
For every subpath of $\{ Q_{\ell}\} _{\ell =1}^M$ starting from an inter-cloud edge leaving one inner connected component $C_1$ and ending in an inter-cloud edge entering an inner connected component $C_2$ (and not passing through another inner connected component in between) we add an edge between $C_1$ and $C_2$ (which might be a self loop in case $ C_1=C_2$). Note that the degree of an inner connected component in the graph coincides with our previous definition of its degree.
We call an edge in $\{ Q_{\ell}\} _{\ell =1}^M$ {\em surprising} if it is the first or the last edge in a subpath corresponding to an edge in $E_R$.
For the remainder of the paper we denote by $s_{\text{tot}}$ the sum of the degrees of the inner connected components.
Note that $s_{\text{tot}}=2|E_R|$.

We say that a vertex in $\{ Q_{\ell}\} _{\ell = 1}^M$ (which is given only by the cloud $g$ it belongs to and the serial number $i_g$ given to it by Algorithm \ref{alg:FormalTrans}) that belongs to an inner connected component is {\em distinguished} if it satisfies one of the following two conditions: $(1)$ the vertex is a representative, {\em i.e.} it is the first or last vertex in some path in $\{ P_{\ell}\}_{\ell=1}^M$; and $(2)$ the vertex touches a surprising edge.
We note that every inner connected component has at least one distinguished vertex that belongs to it.

Let us now define what a {\em representation} of an inner connected component is.
Given an inner connected component, its representation is comprised of two things: $(1)$ the collection of distinguished vertices that belong to the inner connected component and their $\text{ind}(g,h)$ as given in $\{ Q_{\ell}\}_{\ell=1}^M$ (recall that only the cloud $g$ and a running index is given by $\text{ind}(g,h)$ and not the true identity of the vertex in the group $H$ is its Cayley graph);
and $(2)$ for every ordered pair of distinguished vertices in the inner connected component the difference between them in the group whose Cayley graph is $H$, {\em i.e.}, the sum of $ \LX(e)$ over edges $e$ in the appropriate subpath of $Q_{\ell}$.
The following lemma provides an upper bound on the number of representations, and it will enable us to eventaully upper bound the number of certificates.

\begin{lemma}\label{Lemma:CompRepCount}
Denote by $N(s,r)$ the number of distinct representations of an inner connected component that: $(1)$ has $r$ representatives; $(2)$ has $s$ surprising edges touching it in $R$; amd $(3)$ can be obtained with a positive probability over the sampling of $G_X$ from $\ExtGH$.
Then $N(s,r)\leq n^{O(\varepsilon (s+r))}$.
\end{lemma}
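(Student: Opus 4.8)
The plan is to bound $N(s,r)$ by counting the number of ways to describe an inner connected component up to the information stored in its representation. Recall that a representation consists of: (1) the indices $\text{ind}(g,h) = (g,i)$ of the distinguished vertices, and (2) for every ordered pair of distinguished vertices, their difference in the group $H$. First I would observe that a distinguished vertex is either a representative or an endpoint of a surprising edge, so there are at most $r + s$ distinguished vertices in the component. Each distinguished vertex has an index of the form $(g,i)$; the cloud $g$ is fixed (we are inside a single cloud), and by Lemma \ref{lem:versInCloud} the running index $i$ ranges over at most $n^{O(\varepsilon)}$ values. Hence the number of ways to choose the set of distinguished vertices together with their indices is at most $\left(n^{O(\varepsilon)}\right)^{r+s} = n^{O(\varepsilon(s+r))}$.

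Next I would handle the pairwise group differences. The key point is that the component is connected, so it suffices to record the difference in $H$ between a fixed base distinguished vertex and every other distinguished vertex — the remaining differences are determined by the cocycle/telescoping identity $(u^{-1}v)(v^{-1}w) = u^{-1}w$. So we only need $r + s - 1$ group elements. Naively each such element ranges over all of $H$, giving a factor $n^{r+s}$, which is already of the right form $n^{O(\varepsilon(s+r))}$ only if $r+s$ is itself $O(\varepsilon)$ — which it is not. So the real content is that these group elements are \emph{not} arbitrary: each difference is realized by a subpath of some $Q_\ell$, and every $Q_\ell$ has at most $\varepsilon \log n$ edges (as established in the proof of Lemma \ref{lem:versInCloud}), hence each intra-cloud subpath between two distinguished vertices has length at most $\varepsilon \log n$. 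The difference in $H$ is the product of the labels along that subpath, and since $H$ is $d$-regular, the number of group elements reachable by a path of length at most $\varepsilon \log n$ from a given vertex is at most $d^{\varepsilon \log n} = n^{O(\varepsilon)}$. Therefore each of the $r + s - 1$ differences has at most $n^{O(\varepsilon)}$ possibilities, contributing another factor of $n^{O(\varepsilon(s+r))}$.

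Multiplying the two counts gives $N(s,r) \le n^{O(\varepsilon(s+r))} \cdot n^{O(\varepsilon(s+r))} = n^{O(\varepsilon(s+r))}$, as claimed. I would phrase the final bound carefully so that the implicit constant in the exponent absorbs the constant from the degree bound $d^{2\varepsilon \log n} = n^{O(\varepsilon)}$ and from Lemma \ref{lem:versInCloud}. One subtlety to be careful about: the representation also implicitly fixes which distinguished vertices are representatives versus surprising-edge endpoints, but this is at most a $2^{r+s}$ factor, which is dominated by $n^{O(\varepsilon(s+r))}$ for large $n$; I would mention this in passing rather than belabor it.

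The main obstacle I anticipate is getting the two reductions exactly right: namely, (a) justifying that \emph{every} pairwise difference lies in the $n^{O(\varepsilon)}$-sized ball — this requires knowing that for any two distinguished vertices in the same inner connected component there is a path between them inside the component of length $O(\varepsilon \log n)$, which follows because the component is a union of subpaths of the $Q_\ell$'s and, being connected with bounded "complexity" (degree and number of distinguished vertices both $n^{O(\varepsilon)}$), still has small diameter — but one must check the diameter bound, perhaps by noting that each $Q_\ell$ contributes a subpath of length $\le \varepsilon \log n$ and there are few of them through the cloud; and (b) making sure the "positive probability" hypothesis (condition (3) in the lemma) is actually used or is harmless — I expect it is only needed to rule out degenerate representations and does not affect the counting bound, but I would state explicitly why we may ignore it (it only restricts, never enlarges, the set of representations). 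Everything else is routine bookkeeping with the already-established $n^{O(\varepsilon)}$ bounds from Lemma \ref{lem:versInCloud}.
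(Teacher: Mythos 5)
Your approach fixes a base distinguished vertex $v_0$ and records the $r+s-1$ group differences $v_0^{-1}v$ to the remaining distinguished vertices $v$, arguing that each such difference lies in a ball of radius $O(\varepsilon\log n)$ around the identity in $H$. That is where the argument has a genuine gap, and it is exactly the point you flagged in (a) but did not close: what is actually available is that every vertex of the component lies within $\varepsilon\log n$ intra-cloud edges of \emph{some} distinguished vertex, which implies only that \emph{adjacent} distinguished vertices (in the appropriate auxiliary graph) are at distance $O(\varepsilon\log n)$, not that \emph{all pairs} are. The component can be a long chain of $n^{O(\varepsilon)}$ subpaths of length $\varepsilon\log n$ concatenated end to end, so its diameter can be of order $n^{O(\varepsilon)}\log n$; the star from $v_0$ would then have edges of that length, and a single difference would range over $d^{n^{O(\varepsilon)}\log n}$ group elements, which is far larger than $n^{O(\varepsilon)}$ and kills the bound. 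The paper avoids this by \emph{not} using a star: it forms the ``distinguished graph'' on the $p\le s+r$ distinguished vertices, joining two when they are at most $2\varepsilon\log n$ apart, observes it is connected, and encodes the differences along a spanning tree of this graph. Each tree edge is short (giving $n^{O(\varepsilon)}$ options per edge, $n^{O(\varepsilon(p-1))}$ overall), and the extra cost of not knowing which spanning tree was used is handled by Cayley's formula: at most $p^{p-2}$ trees, absorbed since $p\le n^{O(\varepsilon)}$ (by Lemma~\ref{lem:versInCloud}) gives $p^{p-2}\le n^{O(\varepsilon p)}$. The rest of what you wrote — counting the $\text{ind}$ indices via the $n^{O(\varepsilon)}$ per-cloud bound, the harmless $2^{r+s}$ labeling factor, and noting that condition $(3)$ only restricts the count — matches the paper and is fine; only the difference-encoding step needs the spanning-tree device in place of the star.
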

\begin{proof}
First, we note that every vertex in the inner connected component is at most $\varepsilon \log{n}$ edges away (using only edges in $E_H$, {\em i.e.}, intra cloud edges) from a distinguished vertex in the same inner connected component.
The above follows from the fact that every path in $\{ Q_{\ell}\}_{\ell=1}^M$ contains at most $\varepsilon \log{n}$ edges and from the definition of a distinguished vertex.

For a fixed inner connected component, we define its {\em distinguished graph} whose vertices are the distinguished vertices of the inner connected component, and two such vertices are connected by an edge if and only if they are at most $2\varepsilon \log{n} $ edges away (using only edges in $E_H$, {\em i.e.}, intra cloud edges).
From the above and the fact that the inner connected component is connected it follows that this graph is connected.

Let us take an arbitrary spanning tree of the distinguished graph, and use it to bound the number of different representations. It is enough to describe the structure of the tree and the differences (in the group whose Cayley graph is $H$) corresponding to the tree's edges, since this is all the information appearing in the formal transformation $\{ Q_{\ell}\} _{\ell = 1}^M$.
Every edge in the distinguished graph corresponds to a concatenation of at most $2\varepsilon \log{n}$ edges from the formal transformation $\{ Q_{\ell}\} _{\ell = 1}^M$.
Thus, the difference in the group $H$ is its Cayley graph between the two endpoints of the given edge has at most the following number of options:
$$ \sum _{i=1}^{2\varepsilon \log{n}} d^i \leq d^{O(\varepsilon \log{n})} = n^{O(\varepsilon)}.$$

Let us denote by $p$ the number of distinguished vertices in the distinguished graph.
Every spanning tree of the distinguished graph has exactly $p-1$ edges.
Thus, the total number of possible differences on the edges of any given spanning tree is at most: $n^{O(\varepsilon (p-1))}$.
Recalling that Cayley's formula for counting the number of spanning trees provides an upper bound of $ p^{p-2}$ on the number of spanning trees when there are $p$ vertices present, yields that the number of options for edge labeled spanning trees is at most: $ n^{O(\varepsilon (p-1))}\cdot p^{p-2}$.

Let us now focus on the number of options for choosing the indices $ \text{ind}$ (as provided by Algorithm \ref{alg:FormalTrans}) for the distinguished vertices. 
Lemma \ref{lem:versInCloud} implies that there are at most $n^{O(\varepsilon)}$ vertices in the inner connected component, thus providing the same upper bound on the number of distinguished vertices, $p$.
Therefore, the total number of options for the indices $\text{ind}$ (as provided by Algorithm \ref{alg:FormalTrans}) can be upper bounded by $ (n^{O
(\varepsilon)})^p = n^{O(\varepsilon p)}$.
Thus, we can conclude that the total number of options for a representation, given $p$, is at most: 
\begin{align*}
 n^{O(\varepsilon (p-1))}\cdot p^{p-2}\cdot  n^{O(\varepsilon p)} = & n^{O(\varepsilon p)}\cdot p^{p-2}   \\
 =& n^{O(\varepsilon p)} \cdot n^{O(\varepsilon) (p-2)} \\
 =& n^{O(\varepsilon p)}
\end{align*}

One can note that $p$ ranges from $1$ to at most $s+r$, recalling that $r$ is the number of representatives in the inner connected component and $s$ is the number of surprising edges touching the inner connected component.
Plugging everything together yields an upper bound of: $$ \sum _{p=1}^{s+r}n^{O(\varepsilon p)}\cdot p^{p-2} \leq n^{O(\varepsilon (s+r))}.$$
This concludes the proof.
\end{proof}

\vspace{5pt}
\noindent {\bf{Definition of Certificates.~~~~~}}
We are now ready to define what a certificate is.
To this end we define the {\em skeleton} of a formal transformation $\{ Q_{\ell}\} _{\ell =1}^M$ by going over all paths starting from $\ell=1$ up to $\ell = M$ and removing all indices $\text{ind}$ (as given by Algorithm \ref{alg:FormalTrans}) of the vertices of $Q_{\ell}$, starting with the first vertex and scanning $Q_{\ell}$ towards its end vertex, except for: $(1)$ the first vertex of  $Q_{\ell}$; $(2)$ the last vertex of $Q_{\ell}$; and $(3)$ the target of a surprising edge if it leads to an inner connected component and it is the first occurrence of this edge in the skeleton (with respect to all previous paths scanned).
We denote the skeleton of $\{ Q_{\ell}\} _{\ell =1}^M$ by $\text{skel}(\{ Q_{\ell}\} _{\ell =1}^M)$.

\begin{definition}\label{def:cert}
Given $G_X$ sampled from $\ExtGH$ and a split $\bar{f}$, let us denote by $\{ P_{\ell}\}_{\ell=1}^M$ the collection of all shortest paths in $G_X$ between $\bar{f}(g_1)$ and $\bar{f}(g_2)$ for every $(g_1,g_2)\in E_{G'}$, by $\{ Q_{\ell}\}_{\ell=1}^M$ the formal transformation of $\{ P_{\ell}\}_{\ell=1}^M$, and by $R$ the inner connected component graph obtained from $\{ Q_{\ell}\}_{\ell=1}^M$.
The {\em certificate} of $G_X$ and $\bar{f}$ is a quadruplet that consist of:
\begin{enumerate}
    \item $G'$. \label{cert0}
    \item The representations of all inner connected components of $R$.\label{cert1}
    \item $\text{skel}(\{ Q_{\ell}\}_{\ell=1}^M) $. \label{cert2}
    \item $\{ \bar{f}(g)\}_{g\in V_{G'}}$.\label{cert3}
\end{enumerate}
We denote the above by $\text{cert}(G_X,\bar{f})$.
We call a certificate {\em proper} if it can be obtained from a $G_X$ in the support of $\ExtGH$ and an $(\varepsilon \log ^{\nicefrac[]{4}{3}}n,\varepsilon)$-split $\bar{f}$.
\end{definition}
Two notes regarding the above definition.
First, the reader should recall that $\bar{f}(g)\in X=V_G\times V_H$, for every $g\in V_{G'}$.
Therefore, for every representative $\bar{f}(g)$ the certificate contains both the cloud of the representative and its true identity within the cloud, {\em i.e.}, the element in the group $H$ is its Cayley graph.
Second, the reader should note that given only the fourth ingredient of the certificate (see \ref{cert3} in the above definition) and only the edge labels of $\text{skel}(\{ Q_{\ell}\} _{\ell=1}^M)$ (see \ref{cert2} in the above definition), it is possible to reconstruct the cloud of each vertex in $\text{skel}(\{ Q_{\ell}\} _{\ell=1}^M)$.

Our goal is to show that given a certificate $\text{cert}(G_X,\bar{f})$ one can reconstruct $R$ that corresponds to $G_X$ and $\bar{f}$.
On the other hand, one can easily note that from a given certificate $\text{cert}(G_X,\bar{f})$ there are objects that cannot be reconstructed, {\em e.g.}, $G_X$.
To this end we show the following lemma.
\begin{lemma}\label{lem:Rreconstruction}
Given any certificate $\text{cert}(G_X,\bar{f})$ one can reconstruct the inner connected component graph $R$  that corresponds to $G_X$ and $\bar{f}$.
\end{lemma}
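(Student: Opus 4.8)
The plan is to describe a deterministic procedure that, given the quadruplet $\text{cert}(G_X,\bar{f})$ — that is, $G'$, the representations of the inner connected components, $\text{skel}(\{Q_\ell\}_{\ell=1}^M)$, and $\{\bar{f}(g)\}_{g\in V_{G'}}$ — outputs a graph, and then to check that this graph must equal the inner connected component graph $R=(V_R,E_R)$ of whatever pair $G_X,\bar{f}$ produced the certificate. The procedure runs in three stages: (i) recover the cloud of every vertex occurring in the skeleton, together with the element of the group of $H$ relating any two same-cloud vertices of a common $Q_\ell$; (ii) recover the inner connected components and the locations of their distinguished vertices inside the skeleton; and (iii) read off $E_R$ from the subpaths of the skeleton joining distinct inner components, with $V_R$ already being the list supplied by ingredient~\ref{cert1}.

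Stage (i) is precisely the remark following Definition~\ref{def:cert}: knowing $\{\bar{f}(g)\}_{g\in V_{G'}}$ (ingredient~\ref{cert3}) pins down the true identity in $V_G\times V_H$ of the first vertex of each $Q_\ell$ — its retained index reveals its cloud $g_1$, and $\bar{f}(g_1)$ supplies the $V_H$-coordinate — and then scanning $Q_\ell$ while multiplying the recorded labels $\LX(e)$ on the right, in the group of $G$ for inter-cloud steps and in the group of $H$ for intra-cloud steps, determines the cloud of every vertex of $Q_\ell$ and the intra-$H$ displacements along any same-cloud stretch of $Q_\ell$. In particular every edge of the skeleton is now known to be intra- or inter-cloud. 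For stage (ii), ingredient~\ref{cert1} supplies the list of inner connected components, each described by its distinguished vertices (through their indices $(g,i)$) together with all their pairwise differences in the group of $H$; what remains is to match these against the vertices of the skeleton. Recall the skeleton retains the index of every representative (the first or last vertex of each $Q_\ell$) and of the target of every surprising edge that enters an inner component, at its first occurrence. The crucial structural fact is that every maximal intra-cloud segment of a path $Q_\ell$ lying inside an inner component contains at least one vertex of known index: its first vertex in scan order is either an endpoint of $Q_\ell$ or the target of an inter-cloud edge that enters that inner component, hence a surprising edge. Anchored at such a vertex and using the intra-$H$ displacements from stage (i), the segment is located absolutely inside its cloud, and comparison with the recorded pairwise $H$-differences identifies precisely which inner component it lies in; conversely, segments lying in no inner component are recognised by being flanked on both sides by inter-cloud edges and carrying no retained index internally. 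The one delicate bookkeeping point is that an inter-cloud edge may recur in a later $Q_{\ell'}$ with its target's index stripped there, so the scan must be organised so that a recurrence is identified with its first occurrence and the retained index carried over; granting this, stage (ii) determines, for every maximal intra-cloud segment of every $Q_\ell$, whether it lies in an inner component and, if so, which.

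Stage (iii) is then immediate: a subpath of the skeleton that begins with an inter-cloud edge leaving an inner component $C_1$, proceeds only through non-inner segments and inter-cloud edges, and ends with an inter-cloud edge entering an inner component $C_2$, contributes exactly the edge $\{C_1,C_2\}$ of $R$ (a self-loop when $C_1=C_2$); ranging over all $M$ paths reconstructs $E_R$ exactly as in the definition of $R$, so the reconstructed graph equals $R$. I expect stage (ii) to be the main obstacle: one must argue that the deliberately lossy skeleton — whose surviving indices sit only at representatives and at first occurrences of surprising-edge targets — together with the pairwise $H$-differences recorded in the representations, still pins every in-component segment to a unique absolute location, so that no fragment of an inner component can be misassigned to a path component, to the wrong cloud, or to the wrong inner component; everything else is bookkeeping layered on top of the Cayley-graph tracking established in stage (i).
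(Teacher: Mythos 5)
Your proposal takes essentially the same route as the paper: both reconstruct $R$ by scanning $\text{skel}(\{Q_\ell\}_{\ell=1}^M)$ path by path, using ingredient~\ref{cert3} to seed the true identity of each path's start vertex, propagating clouds and intra-$H$ displacements through the retained edge labels, and exploiting the retained indices (at $Q_\ell$ endpoints and at first-occurrence targets of surprising edges into inner components) together with the representations of ingredient~\ref{cert1} to decide, at each step, whether the scan is inside an inner component (and which one) or on a subpath corresponding to an edge of $E_R$ (and whether that subpath has already been seen). The paper phrases this as a single inductive claim on the scan order, while you organize it into explicit stages; the substance — in particular the key observation that the first vertex of every in-component intra-cloud segment is distinguished, so it is either a retained index or a recognizable recurrence — is the same, and your flagged ``delicate bookkeeping'' point about recurrences is exactly what the paper's case~(2) discussion handles.
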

\begin{proof}
We claim that when one scans the paths in $\text{skel}(\{ Q_{\ell}\} _{\ell = 1}^M)$, from $\ell=1$ to $\ell=M$, and each path from its start vertex to its end vertex, then one can always distinguish between the following cases:
$(1)$ the current vertex belongs to an inner connected component; and
$(2)$ the current vertex belongs to a subpath that corresponds to an edge in $E_R$.

Moreover, we know the following things.
For case $(1)$ above, we also know which inner connected component the vertex belongs to, and the quotient between the current vertex and all distinguished nodes of the inner connected component (with respect to the group $H$ is its Cayley graph).
For case $(2)$ above, we also know the starting vertex index $\text{ind}$ (as given by Algorithm \ref{alg:FormalTrans}) of the corresponding subpath and the label of its first edge, which can determine whether we encountered this subpath before, and if we did, which subpath it was and our relative position in this subpath.
If this subpath was not encountered before, once the end vertex of this subpath will be reached and the scan will be back in case $(1)$, we will know the end vertex of the edge in $E_R$ that corresponds to this subpath.

The above follows by induction on the order of the scan and follows from the definition of a certificate and the fact that the collection of all inner connected component in $R$ is exactly the collection of all connected component that have degree at least three or contain a representative.
\end{proof}

\subsection{Counting Number of Certificates}
In this section we aim to upper bound the possible number of certificates $\text{cert}(G_X,\bar{f})$ that can be obtained, for any $G_X$ in the support of $\ExtGH$ and $(\varepsilon \log^{\nicefrac[]{4}{3}}{n},\varepsilon)$-split $\bar{f}$ ,{\em i.e.}, we upper bound the number of proper certificates.
In the following two lemmas, all bounds are given as a function of $s_{\text{tot}}$ (recall that $s_{\text{tot}}$ is the sum of the degrees of the inner connected components graph $R$).
Thus, for any given fixed value of $s_{\text{tot}} $ we upper bound the number of certificates.

\begin{lemma}\label{Lemma:CompDivCount}
There are at most $n^{O(\varepsilon(s
_{\text{tot}}+n))} $ options for choosing the number of inner connected components for all the clouds $g\in V_G$, and for each inner connected component its $s$ (the degree of the inner connected component) and $r$ (the number of representatives in the inner connected component).
\end{lemma}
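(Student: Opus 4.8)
The plan is to bound the number of ways to specify, for every cloud $g \in V_G$, how many inner connected components live inside $g$ and what are the parameters $(s,r)$ of each of them. First I would observe that the total number of inner connected components, summed over all clouds, is $O(s_{\text{tot}} + n)$: a component is inner either because its degree is at least three, or because it contains a representative. The number of degree-$\geq 3$ components is at most $s_{\text{tot}}/3$ (since the sum of the degrees of all inner connected components equals $s_{\text{tot}}$, and each degree-$\geq 3$ component contributes at least $3$), and the number of components containing a representative is at most $|V_{G'}| \leq n$ (each cloud in $G'$ has exactly one representative). Hence the number of inner connected components is at most $s_{\text{tot}}/3 + n = O(s_{\text{tot}} + n)$.

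Next I would encode the desired data as a sequence. For each cloud $g$ we record the number $m_g$ of inner connected components in $g$, and then for each such component a pair $(s_i, r_i)$. Writing $m \triangleq \sum_{g} m_g = O(s_{\text{tot}} + n)$, the number of ways to distribute $m$ components among the $n$ clouds is at most $\binom{m+n-1}{n-1} \leq 2^{O(m+n)} = n^{O(\varepsilon(s_{\text{tot}}+n))}$ — here the last equality uses that $2^{O(m+n)} = 2^{O(s_{\text{tot}}+n)}$, and $2^{x} = n^{x/\log n}$; since every relevant quantity ($s_{\text{tot}}$ is itself at most $O(n \cdot n^{O(\varepsilon)})$ times $\varepsilon\log n$ in magnitude, being a sum over $O(n^{1+O(\varepsilon)})$ surprising-edge endpoints, but more directly it suffices that $1/\log n = O(\varepsilon/\log n)$ absorbs into the $O(\varepsilon(\cdot))$ exponent once we carry a spare factor) the conversion is valid. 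Actually the cleanest route: $2^{O(s_{\text{tot}}+n)} \le n^{O(s_{\text{tot}}+n)}$ trivially, but that is too weak; instead note $2^{O(s_{\text{tot}}+n)} = n^{O((s_{\text{tot}}+n)/\log n)}$ and $1/\log n \le \varepsilon$ for $n$ large enough depending on $\varepsilon$, giving $n^{O(\varepsilon(s_{\text{tot}}+n))}$.

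Then I would bound the choices of $(s_i,r_i)$ per component. Both $s_i$ and $r_i$ are at most the number of distinct inter-cloud edges touching the component, which by Lemma~\ref{lem:versInCloud} is at most $n^{O(\varepsilon)}$ (a component has at most $n^{O(\varepsilon)}$ vertices, each touching at most $d$ inter-cloud edges). So each pair $(s_i,r_i)$ has at most $n^{O(\varepsilon)}$ options, and across all $m = O(s_{\text{tot}}+n)$ components this contributes $(n^{O(\varepsilon)})^{O(s_{\text{tot}}+n)} = n^{O(\varepsilon(s_{\text{tot}}+n))}$. Multiplying the two counts — the distribution of components among clouds and the per-component parameters — still gives $n^{O(\varepsilon(s_{\text{tot}}+n))}$, which is the claimed bound.

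The main obstacle is the bookkeeping to ensure all of the sloppy "$O(\cdot)$ in the exponent" manipulations are legitimate, in particular the step converting $2^{O(s_{\text{tot}}+n)}$ into $n^{O(\varepsilon(s_{\text{tot}}+n))}$; this is where we must invoke "$n$ large enough depending on $\varepsilon$" so that $1/\log n \le \varepsilon$. Everything else is a direct consequence of the degree accounting ($\sum \deg = s_{\text{tot}}$, at most one representative per cloud of $G'$) together with the already-established $n^{O(\varepsilon)}$ bound on the size of an inner connected component from Lemma~\ref{lem:versInCloud}.
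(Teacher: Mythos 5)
Your proof is correct and reaches the same bound, but it takes a different route from the paper. The paper encodes the entire data directly as a ternary string over $\{0,1,2\}$ of length $O(s_{\text{tot}}+n)$ (a $0$ for each unit of degree, a $2$ to start a new component, a $1$ to advance to the next cloud) and counts such strings by $3^{O(s_{\text{tot}}+n)} \le n^{O(\varepsilon(s_{\text{tot}}+n))}$. You instead split the count into two factors: a stars-and-bars count for distributing $m = O(s_{\text{tot}}+n)$ components among the $n$ clouds, and a per-component count of $(s_i,r_i)$ pairs using the $n^{O(\varepsilon)}$ bound on the size of each component from Lemma~\ref{lem:versInCloud}. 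Both routes hinge on the same final observation that $1/\log n \le \varepsilon$ for $n$ large enough depending on $\varepsilon$, so constants raised to an $O(s_{\text{tot}}+n)$ power become $n^{O(\varepsilon(s_{\text{tot}}+n))}$. One thing your argument does that the paper's ternary encoding does not make explicit is account for the $r_i$'s: the paper's string has a $0$ only per degree unit (exactly $s_{\text{tot}}$ zeros), so it encodes the $s_i$'s but not visibly the $r_i$'s, whereas you bound the per-component pair $(s_i,r_i)$ directly. In that sense your proof is arguably a cleaner match to what the lemma statement actually asks for.

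One small slip in your justification: you claim $r_i$ is at most the number of distinct inter-cloud edges touching the component, but a representative is a \emph{vertex} (a start or end vertex of some $P_\ell$), not an inter-cloud edge, so that inequality is not the right reason. The bound $r_i \le n^{O(\varepsilon)}$ still holds, and for the reason you already give in your parenthetical: the component has at most $n^{O(\varepsilon)}$ vertices (Lemma~\ref{lem:versInCloud}), and $r_i$ cannot exceed the vertex count. So the conclusion is unaffected, but you should cite the vertex-count bound rather than the edge-count bound when bounding $r_i$.
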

\begin{proof}
Any option of choosing the number of inner connected components, for all the clouds in the graph $G$, can be represented by a sequence of  $\{ 0,1,2\}$, where:
$0$ indicates a single degree of an inner connected component, $1$ indicates changing to the next cloud, and $2$ indicates  a new inner connected component in the current cloud.
The restrictions on any above sequence are the following:
$(1)$ there are exactly $s_{\text{tot}}$ $0$s in the sequence; $(2)$ there are exactly $n-1$ $1$s in the sequence; and $(3)$ there are at most $s_{\text{tot}}+n$ $2$s (recall that an inner connected component has degree at least three or it contains a representative).
Thus, the number of sequences is at most $\sum _{i=0} ^{s_{\text{tot}}+n} 3^{s_{\text{tot}}+n-1+i} \leq 3^{2s_{\text{tot}}+2n} \leq n^{O(\varepsilon(s
_{\text{tot}}+n))}$.
\end{proof}

\begin{lemma}\label{Lemma:CertCount}
There are at most $n^{n+O(\varepsilon (s_{\text{tot}} + n))}$ proper certificates.
\end{lemma}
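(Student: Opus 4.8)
\textbf{Proof plan for Lemma \ref{Lemma:CertCount}.}
The plan is to bound the number of proper certificates by multiplying together the number of choices for each of the four ingredients of a certificate (Definition \ref{def:cert}), conditioned on a fixed value of $s_{\text{tot}}$ and on a fixed choice of the data from Lemma \ref{Lemma:CompDivCount}. First I would count ingredient \ref{cert0}: the subgraph $G'$ of $G$ is determined by a choice of a vertex subset $V_{G'}\subseteq V_G$ together with a subset of the (constantly many per vertex) incident edges of $G$, so the number of options is at most $2^{O(n)}=n^{O(\varepsilon n)}$ once $n$ is large enough relative to $\varepsilon$ (or, slightly more carefully, at most $\binom{|E_G|}{\leq\varepsilon|E_G|}2^n\le n^{O(\varepsilon n)}$ since $|E_{G'}|\ge(1-\varepsilon)|E_G|$ and $|E_G|=O(n)$). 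Next, ingredient \ref{cert3}, the list $\{\bar f(g)\}_{g\in V_{G'}}$: each $\bar f(g)\in X=V_G\times V_H$, but by the third condition of Definition \ref{def:Split} the cloud $\pi\circ\bar f(g)$ lies within $\varepsilon\log n$ edges of $g$ in the bounded-degree graph $G$, giving at most $d^{\varepsilon\log n}=n^{O(\varepsilon)}$ choices for the cloud, and at most $|V_H|=n$ choices for the identity within the cloud; over all $g\in V_{G'}$ this is $(n^{1+O(\varepsilon)})^n=n^{n+O(\varepsilon n)}$. This single factor is the source of the dominant $n^{n}$ term in the statement.

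For ingredient \ref{cert1}, the representations of all inner connected components, I would combine Lemma \ref{Lemma:CompDivCount} with Lemma \ref{Lemma:CompRepCount}. Having fixed (via Lemma \ref{Lemma:CompDivCount}, at a cost of $n^{O(\varepsilon(s_{\text{tot}}+n))}$) the number of inner connected components and, for each one, its degree $s$ and number of representatives $r$, Lemma \ref{Lemma:CompRepCount} says each such component admits at most $n^{O(\varepsilon(s+r))}$ representations. Multiplying over all inner connected components, the exponent becomes $O(\varepsilon)\sum(s+r)$. Now $\sum s = s_{\text{tot}}$ by definition, and $\sum r$ — the total number of representatives summed over inner components — is at most the total number of path endpoints, which is $2M=2|E_{G'}|\le 2|E_G|=O(n)$; so $\sum(s+r)=O(s_{\text{tot}}+n)$ and this ingredient contributes at most $n^{O(\varepsilon(s_{\text{tot}}+n))}$.

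Finally, ingredient \ref{cert2}, the skeleton $\text{skel}(\{Q_\ell\}_{\ell=1}^M)$. By construction the skeleton retains indices only at: the two endpoints of each path $Q_\ell$ (there are $M=O(n)$ paths, so $O(n)$ such endpoints), and the first-occurrence target of each surprising edge that leads into an inner connected component (there are at most $s_{\text{tot}}$ surprising edges). All other vertices are index-free, so the skeleton is described by: the sequence of edge-labels along each $Q_\ell$ (each $Q_\ell$ has at most $\varepsilon\log n$ edges, each edge carrying a label from a group of size $O(n)$, giving at most $n^{O(\varepsilon)}$ per path and $n^{O(\varepsilon n)}$ over all $M=O(n)$ paths), together with the retained indices (each an element of $V_G\times\{1,\dots,n^{O(\varepsilon)}\}$ by Lemma \ref{lem:versInCloud}, so at most $n^{1+O(\varepsilon)}$ choices each, over $O(n)+s_{\text{tot}}$ retained vertices, giving $n^{O(n+s_{\text{tot}})}$), and the combinatorial bookkeeping of which path-positions are retained (at most $2^{O(\varepsilon n\log n + s_{\text{tot}})}$, absorbed into the exponent). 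Multiplying the four ingredient counts yields $n^{O(\varepsilon n)}\cdot n^{n+O(\varepsilon n)}\cdot n^{O(\varepsilon(s_{\text{tot}}+n))}\cdot n^{O(n+s_{\text{tot}})}$, which simplifies to $n^{n+O(\varepsilon(s_{\text{tot}}+n))}$ provided we are careful that the $n^{O(n)}$-type contributions from the skeleton indices are genuinely of the form $n^{O(\varepsilon n)}$ — this is where the main subtlety lies: the number of skeleton-retained indices must be shown to be $O(\varepsilon n)$ rather than $\Theta(n)$, i.e. one has to check that the $O(n)$ endpoint-vertices of the $Q_\ell$'s are already recorded (with full identity) inside ingredient \ref{cert3} and hence need not be recounted here, leaving only the $s_{\text{tot}}$ surprising-edge targets plus a negligible remainder. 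The main obstacle, then, is not any single estimate but the careful accounting that ensures every $n^{\Theta(n)}$ factor is either the one honest $n^{n}$ coming from $\{\bar f(g)\}_{g\in V_{G'}}$ or else genuinely carries an $\varepsilon$ in its exponent.
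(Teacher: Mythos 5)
Your decomposition matches the paper's, and your handling of ingredients \ref{cert0}, \ref{cert1}, and \ref{cert3} is essentially correct. The gap is precisely where you flag it, and the fix you sketch is not enough.

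You observe that the retained indices cost $n^{1+O(\varepsilon)}$ each (cloud $g$ plus running index $i$), giving $n^{O(n+s_{\text{tot}})}$, and propose to repair this by arguing the $O(n)$ endpoint-vertices are already recorded in ingredient \ref{cert3}. Even if one grants that, you are still left with the $s_{\text{tot}}$ surprising-edge targets, each costing $n^{1+O(\varepsilon)}$ if you must record their cloud, which gives $n^{s_{\text{tot}}(1+O(\varepsilon))}$. That is \emph{not} of the form $n^{O(\varepsilon(s_{\text{tot}}+n))}$: nothing bounds $s_{\text{tot}}$ by $O(\varepsilon n)$ (indeed in the union bound of Theorem \ref{thrm:NoSplit} one sums over $s_{\text{tot}}$ up to $\Theta(b_1(R))$, and $b_1(R)=\Omega(n)$). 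So your proposed remedy leaves a factor of $n^{\Theta(s_{\text{tot}})}$ unaccounted for. The observation you actually need — stated in the paper right after Definition \ref{def:cert} and used in its proof of Lemma \ref{Lemma:CertCount} — is that from ingredient \ref{cert3} together with the edge labels $\LX$ of the skeleton one can \emph{reconstruct the cloud of every vertex in the skeleton}, not just of the endpoints: inter-cloud labels are group elements of $G$, so once you know the cloud of the start vertex of $Q_\ell$ (which ingredient \ref{cert3} gives), the label sequence along $Q_\ell$ determines the cloud at every step. Hence the only data that must be counted per retained vertex is the running index $i\in\{1,\dots,n^{O(\varepsilon)}\}$ (Lemma \ref{lem:versInCloud}), costing $n^{O(\varepsilon)}$ each over the $O(n)+s_{\text{tot}}$ retained vertices, for a total of $n^{O(\varepsilon(n+s_{\text{tot}}))}$. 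With this correction, the rest of your accounting goes through and yields the claimed bound.
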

\begin{proof}
First, we need to bound the number of graphs $G'$, see \ref{cert0} in Definition \ref{def:cert}.
Since $G'$ is  a subgraph of $G$ which contains $O(n)$ vertices and edges in total there are at most $2^{O(n)}\leq n^{O(\varepsilon n)}$ such graphs (here we used that $2\leq n^{O(\varepsilon)} $ for large enough $n$).
Second, let us focus on the number of options for the second ingredient of a certificate, see \ref{cert1} in Definition \ref{def:cert}.
For any choice of $s$ and $r$ for all inner connected components and the clouds they belong to, Lemma \ref{Lemma:CompRepCount} provides an upper bound that is of the form: $ N(s_1,r_1)\cdot N(s_2,r_2)\cdot \ldots$, given a list of $s_i$s and $r_i$s and to which cloud each pair of $s_i$ and $r_i$ refers to.
This upper bound is: $$ N(s_1,r_1)\cdot N(s_2,r_2)\cdot \ldots \leq n^{O(\varepsilon (\sum _i {s_i}+\sum _i {r_i}))} \leq n^{O(\varepsilon (s_{\text{tot}}+n))}.$$
The last inequality follows from the fact that there are at most $n$ representatives, {\em i.e.}, the sum of the $r_i$s is at most $n$.
Thus, applying Lemma \ref{Lemma:CompDivCount} provides a following total upper bound on all inner connected components:
$$ n^{O(\varepsilon (s_{\text{tot}}+n))}\cdot n^{O(\varepsilon (s_{\text{tot}}+n))} \leq  n^{O(\varepsilon (s_{\text{tot}}+n))} .$$
This upper bounds the number of options for the second ingredient of a certificate (\ref{cert1} in Definition \ref{def:cert}).

Let us now bound the number of options for the fourth ingredient in the definition of a certificate, see \ref{cert3} in Definition \ref{def:cert}.
The third requirement in Definition \ref{def:Split} implies that there are at most $n^{O(\varepsilon)}$ options for choosing $\pi \circ \bar{f}(g)$, for every $g\in V_{G'}$.
This implies that for every $g\in V_{G'}$, the number of options for choosing $\bar{f}(g)$ is at most $n^{1+O(\varepsilon)}$.
Since there are at most $n$ vertices in $V_{G'}$, the total number of options for the fourth ingredient of a certificate is upper bounded by $n^{n+O(\varepsilon n)}$.

Finally, let us bound the number of possible $ \text{skel}(\{ Q_{\ell}\}_{\ell=1}^M)$, the third ingredient of a certificate (see \ref{cert2} in Definition \ref{def:cert}).
Fix and edge $(g_1,g_2)\in E_{G'}$, which corresponds to a path $P_\ell$ in $G_X$ between $\bar{f}(g_1)$ and $\bar{f}(g_2)$ that contains at most $\varepsilon \log{n}$ edges (recall the second requirement of Definition \ref{def:Split} and the fact that every edge in $E_X$ has length at least $\log ^{\nicefrac[]{1}{3}}n$).
Each edge in the skeleton obtained from the formal transformation $Q_{\ell} $ of $P_{\ell}$ has $2d$ options for a label in $\LX$, thus a total of $ \sum _{i=0}^{\varepsilon \log{n}} (2d)^i \leq d^{O(\varepsilon \log{n})} = n^{O(\varepsilon)}$ for the edge labels of the skeleton of a single $Q_{\ell}$ and $n^{O(\varepsilon n)}$ options for all $\{ Q_{\ell}\}_{\ell=1}^M$.

All that remains is to bound the number of options for the indices $\text{ind}$ (of Algorithm \ref{alg:FormalTrans}) of the vertices that should have an index $\text{ind}$ (according to the definition of a skeleton). Recall that this index $\text{ind}$ contains the cloud and a running number.
It is important to note that given the fourth ingredient of the certificate (which specifies the clouds of the representatives), which we already counted, and the edge labels of $\text{skel}(\{ Q_{\ell}\} _{\ell=1}^M)$, which we counted as well, it is possible to reconstruct the cloud of each vertex in $\text{skel}(\{ Q_{\ell}\} _{\ell=1}^M)$.
Hence, we only need to bound the number of options for assigning the running index of $\text{ind}$ (which we denoted by $i$).

There are at most $s_{\text{tot}}$ vertices that are the target of a surprising edge.
Additionally, there are at most $n$ vertices that are the start or end vertex of a path in $\{ Q_{\ell}\}_{\ell=1}^M$ (as they are the representatives).
Thus, the total number of vertices for which index $\text{ind}$ (of Algorithm \ref{alg:FormalTrans}) should appear in $\text{skel}(\{ Q_{\ell}\}_{\ell=1}^M)$ is at most $s_{\text{tot}}+n$.
We need to bound the number of options of selecting these vertices, and for each such selection the number of options for the running indices.
As there are at most $n \varepsilon \log(n)$ vertices in the skeleton on total, there are at most $2^{n\varepsilon \log{n}}=n^{O(\varepsilon n)}$ options for choosing these vertices (even though not all of these options may have the right number of distinguished vertices).
Recall that from lemma \ref{lem:versInCloud} these running indices go up to $n^{O(\varepsilon)}$. Hence, we can conclude that the number of options for determining the running indices is at most $n^{O(\varepsilon (n+s_{\text{tot}}))}$.

The proof is concluded by aggregating all four components of the Definition \ref{def:cert}.
\end{proof}

\subsection{Bounding Probability of a Certificate}
In this section we bound the probability, over the random choice of $G_X$ from $\ExtGH$, of obtaining a given fixed proper certificate.

\vspace{5pt}
\noindent {\bf{Interlude in Linear Algebra.~~~~~}}
Recall that the notation $\mathbb{F}_2^A$ denotes the vector space whose coordinates are indexed by $A$ over $\mathbb{F}_2$.
Given a graph $F=(V_F,E_F)$, we denote by $\delta _{F}(v)$ the collection of edges in $ F$ that touch $v$.
We define the following linear transformation $\eta_{F}:\mathbb{F}_2^{E_{F}}\rightarrow \mathbb{F}_2^{V_{F}}$ as follows:
$\left(\eta _{F}(\mathbf{x})\right) _v\triangleq \sum _{e\in \delta _{F}(v)}x_e$, for every $v\in V_{F}$ and every $ \mathbf{x}\in \mathbb{F}_2^{E_{F}}$.
The following lemma is well known, one can refer to, {\em e.g.}, Diestel \cite{D05} Lemma $1.9.6$.
\begin{lemma}\label{Lemma:euler}
The following holds:
$$dim \ker{\eta_{F}} = \lvert E_F \rvert - \lvert V_F \rvert + \#(\text{connected components of F}).$$
\end{lemma}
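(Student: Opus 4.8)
The final statement to prove is Lemma~\ref{Lemma:euler}, the standard rank--nullity computation for the boundary map $\eta_F : \mathbb{F}_2^{E_F} \to \mathbb{F}_2^{V_F}$ of a graph $F$. Since the excerpt explicitly flags this as well known (citing Diestel), the plan is to give a short self-contained proof via rank--nullity on $\eta_F$, reducing everything to identifying $\operatorname{rank} \eta_F = |V_F| - \#(\text{components of }F)$.

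\textbf{Plan.} First I would invoke the rank--nullity theorem: $\dim \ker \eta_F = |E_F| - \operatorname{rank}\eta_F$, so it suffices to show $\operatorname{rank}\eta_F = |V_F| - c$, where $c = \#(\text{connected components of }F)$. I would prove this by analyzing $\eta_F$ component by component: since $\eta_F$ is block-diagonal with respect to the decomposition of $F$ into its connected components (an edge touches only vertices in its own component), it is enough to show that for a \emph{connected} graph the rank of $\eta_F$ is $|V_F| - 1$. For the connected case, I would argue in two directions. (i) The image of $\eta_F$ is contained in the ``even-weight'' hyperplane $\{\mathbf{y} \in \mathbb{F}_2^{V_F} : \sum_v y_v = 0\}$, because every edge contributes to exactly two vertex-coordinates, so $\sum_v (\eta_F(\mathbf{x}))_v = \sum_{e} 2 x_e = 0$ in $\mathbb{F}_2$; hence $\operatorname{rank}\eta_F \le |V_F| - 1$. (ii) Conversely, taking any spanning tree $\mathcal{T}$ of the connected graph $F$, the images $\{\eta_F(\{e\})\}_{e \in \mathcal{T}}$ of the single-edge indicator vectors are linearly independent: pick a leaf of $\mathcal{T}$, its incident tree-edge is the unique tree-edge touching that leaf, so no nontrivial $\mathbb{F}_2$-combination of the others can cancel it, and one peels off leaves inductively. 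Since $|\mathcal{T}| = |V_F| - 1$, this gives $\operatorname{rank}\eta_F \ge |V_F| - 1$. Combining, $\operatorname{rank}\eta_F = |V_F| - 1$ in the connected case, and summing the ranks over the $c$ components gives $\operatorname{rank}\eta_F = |V_F| - c$ in general.

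\textbf{Assembling.} Plugging $\operatorname{rank}\eta_F = |V_F| - c$ into rank--nullity yields $\dim \ker \eta_F = |E_F| - |V_F| + c$, which is exactly the claimed formula. (Alternatively one could just cite Diestel's Lemma~1.9.6 verbatim, as the excerpt already does, and skip the argument entirely.)

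\textbf{Main obstacle.} There is essentially no obstacle here — the statement is a textbook fact. The only mild care needed is the leaf-peeling argument for linear independence of the spanning-tree edge images over $\mathbb{F}_2$ (equivalently, that the incidence matrix of a tree has full row rank minus one), and noting that working over $\mathbb{F}_2$ is what makes ``appears an odd number of times'' the right notion of cycle space, consistent with Definition~\ref{Def:CycleHomomorphism}; but both points are routine. The step most worth stating cleanly is the block-diagonal reduction to the connected case, since that is what introduces the $\#(\text{connected components})$ term.
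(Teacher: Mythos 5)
Your proof is correct and is the standard textbook argument (rank--nullity plus the observation that the incidence matrix of a connected graph over $\mathbb{F}_2$ has rank $|V_F|-1$, proved via the even-weight hyperplane bound and spanning-tree leaf-peeling). The paper itself does not supply a proof but simply cites Diestel's Lemma~1.9.6, so there is no divergence of approach to compare; your write-up is a clean self-contained version of the cited fact.
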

From now on, we will denote $dim \ker{\eta_{F}}$ as $b_1(F)$ (note that this is the rank of the first homology with coefficients in $ \mathbb{Z}_2$ which are analogues Betti numbers).

\vspace{5pt}
\noindent {\bf{Surprising Edges and their Probabilities.~~~~~}}
Recall that $s_{\text{tot}}$ is the sum of the degrees of inner connected components.
\begin{lemma}\label{Lemma:stotICC}
$\frac{1}{6} s_{tot} - \frac{1}{3}n \leq b_1(R) \leq \frac{1}{2}s_{tot}$.
\end{lemma}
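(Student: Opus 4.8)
The statement relates the first Betti number $b_1(R)$ of the inner connected components graph to $s_{\text{tot}} = 2|E_R|$ and $n$. The natural tool is Lemma \ref{Lemma:euler}, which gives $b_1(R) = |E_R| - |V_R| + \#(\text{connected components of }R)$. So the whole proof reduces to controlling $|V_R|$ (the number of inner connected components) and the number of connected components of $R$ in terms of $|E_R| = s_{\text{tot}}/2$ and $n$. I would start from this identity and bound each of the two terms.

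\textbf{Upper bound.} For the inequality $b_1(R) \le \frac{1}{2}s_{\text{tot}}$, it suffices to show $|V_R| \ge \#(\text{connected components of }R)$, which is trivially true for any graph (each connected component contains at least one vertex). Then $b_1(R) = |E_R| - |V_R| + \#(\text{cc}) \le |E_R| = \frac{1}{2}s_{\text{tot}}$. This direction is essentially immediate.

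\textbf{Lower bound.} For $b_1(R) \ge \frac{1}{6}s_{\text{tot}} - \frac{1}{3}n$, I need an upper bound on $|V_R|$ and, separately, control on $\#(\text{cc of }R)$. The key structural fact is that every inner connected component has degree at least three \emph{or} contains a representative (by definition). Let $V_R = V_R^{\ge 3} \cupdot V_R^{\text{rep}}$ where $V_R^{\ge 3}$ are the components of degree $\ge 3$ that contain no representative and $V_R^{\text{rep}}$ are those containing a representative. Since there are at most $n$ representatives (one per cloud in $V_{G'} \subseteq V_G$), we get $|V_R^{\text{rep}}| \le n$. For $V_R^{\ge 3}$: summing degrees, $s_{\text{tot}} = \sum_{C \in V_R} \deg(C) \ge 3|V_R^{\ge 3}|$, so $|V_R^{\ge 3}| \le \frac{1}{3}s_{\text{tot}}$. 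Hence $|V_R| \le \frac{1}{3}s_{\text{tot}} + n$. For the number of connected components of $R$: each connected component of $R$ must contain at least one edge of $E_R$ unless it is an isolated vertex of $R$; an isolated inner connected component has degree $0$ in $R$, hence must contain a representative, so there are at most $n$ isolated components, and the non-isolated components number at most $|E_R| = \frac{1}{2}s_{\text{tot}}$. Actually a cleaner route: $\#(\text{cc of }R) \le |V_R^{\text{rep}}| + (\text{number of components meeting } V_R^{\ge 3})$, and any component containing a degree-$\ge 3$ vertex has at least $2$ edges incident (in fact, one can more crudely bound $\#(\text{cc}) \le |V_R|$, which is too weak, so the refinement via representatives and edge counts is what's needed). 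Plugging into Lemma \ref{Lemma:euler}: $b_1(R) = |E_R| - |V_R| + \#(\text{cc}) \ge \frac{1}{2}s_{\text{tot}} - (\frac{1}{3}s_{\text{tot}} + n) + 0 = \frac{1}{6}s_{\text{tot}} - n$; to sharpen the $-n$ to $-\frac{1}{3}n$ I would account more carefully for the interplay between the representative-containing components (which also contribute to $\#(\text{cc})$) — roughly, a representative-containing component that is isolated contributes $+1$ to both $|V_R|$ and $\#(\text{cc})$, cancelling, so only the non-isolated representative components cause net loss, and their count can be absorbed more tightly.

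\textbf{Main obstacle.} The upper bound is trivial; the delicate part of the lower bound is getting the constant on $n$ down to $\frac{1}{3}$ rather than $1$. This requires carefully bookkeeping which components contribute to $|V_R|$, to $\#(\text{cc of }R)$, and how the degree-$\ge 3$ condition forces enough edges. I expect the crux to be the combinatorial lemma that in a multigraph where every vertex has degree $\ge 3$ except for at most $n$ exceptional vertices, one has $|V| - \#(\text{cc}) \le \frac{1}{3}|2E| + \frac{1}{3}n$ or a similar inequality — essentially an averaging/discharging argument on degrees, where exceptional (low-degree, representative-containing) vertices are the only source of slack and each can be charged $O(1)$. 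Once that inequality is in hand, substituting into Lemma \ref{Lemma:euler} finishes the proof.
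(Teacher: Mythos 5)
Your approach is exactly the paper's: apply Lemma \ref{Lemma:euler}, note that $\#(\text{cc of }R)\ge 0$ gives $b_1(R)\ge |E_R|-|V_R|$, and then bound $|V_R|$ by splitting into representative-containing components (at most $n$) and degree-$\ge 3$ components (at most $\tfrac{1}{3}s_{\text{tot}}$ by summing degrees). Your upper bound $b_1(R)\le |E_R|=\tfrac{1}{2}s_{\text{tot}}$ via $\#(\text{cc})\le |V_R|$ is also precisely what the paper does.

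The ``main obstacle'' you flag --- sharpening the $-n$ to $-\tfrac{1}{3}n$ --- is in fact not resolved by the paper either. The paper's own chain of inequalities, read carefully, proceeds from $s_{\text{tot}}\ge 3(|V_R|-n)$, i.e.\ $|V_R|\le \tfrac{1}{3}s_{\text{tot}}+n$, to conclude
$$|E_R|-|V_R| \;\ge\; \tfrac{1}{2}s_{\text{tot}}-\bigl(\tfrac{1}{3}s_{\text{tot}}+n\bigr) \;=\; \tfrac{1}{6}s_{\text{tot}}-n,$$
but then writes the right-hand side as $\tfrac{1}{6}s_{\text{tot}}-\tfrac{1}{3}n$ --- an arithmetic slip (the intermediate expression $\tfrac{1}{2}s_{\text{tot}}-\tfrac{1}{3}s_{\text{tot}}+\tfrac{1}{3}n$ in the paper neither equals $\tfrac{1}{6}s_{\text{tot}}-\tfrac{1}{3}n$ nor follows from $|V_R|\le\tfrac{1}{3}s_{\text{tot}}+n$). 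So you should not look for a cleverer discharging argument: the honest constant from this argument is $-n$, not $-\tfrac{1}{3}n$, and a degree-$\ge 2$ guarantee for representative components (which would recover $-\tfrac{1}{3}n$) does not hold --- a representative component can have degree $1$ or even $0$ in $R$. Fortunately this is harmless downstream: Theorem \ref{thrm:NoSplit} only needs $b_1(R)=\Theta(s_{\text{tot}})$ once $b_1(R)=\Omega(n)$ is known from Lemma \ref{Lemma:BettiBound}, and $b_1(R)\ge\tfrac{1}{6}s_{\text{tot}}-n$ suffices for that.
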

\begin{proof}
We count the number of edges of $R$ in two different ways.
There are two types of vertices in $R$: vertices which have at least one representative and vertices which have no representative but are of degree at least $3$.
As the number of representatives is at most $n$, we can conclude that $s_{tot} \geq 3\left(|V_R|-n\right)$. As the sum of degrees is equal to twice the number of edges we have $s_{tot} = 2 |E_R|$.
Clearly, the number of connected components of $G_R$ is at most $|V_R|$.
Combining these equations we get:
$\frac{1}{2}s_{tot} = |E_R| \geq |E_R|-(|V_R|-\# (\text{connected components of }G_R)) \geq |E_R|-|V_R| \geq \frac{1}{2}s_{tot} - \frac{1}{3}s_{tot} + \frac{1}{3}n = \frac{1}{6} s_{tot} - \frac{1}{3}n$, where last inequality follows from the above two bounds.
The proof is concluded by recalling that $b_1(R)=|E_R|-(|V_R|-\# (\text{connected components of }G_R))$, which follows from Lemma \ref{Lemma:euler}.
\end{proof}

The following lemma provides an upper bound on the probability of obtaining a given proper certificate, as a function of $b_1(R)$.
One should recall that given a proper certificate, one can reconstruct $R$ (Lemma \ref{lem:Rreconstruction}).
The following lemma is the only place in our analysis where the fact that the matchings in $G_X$ are chosen uniformly at random is used.

\begin{lemma}\label{Lemma:probBound}
Given a proper certificate the probability, over the random choice of $G_X$, of obtaining it is at most $(2/n)^{b_1(R)}$.
\end{lemma}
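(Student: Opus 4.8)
The plan is to reconstruct $G_X$ edge by edge from the data contained in a proper certificate, and to track at each step the probability (over the random choice of matchings) that the next reconstruction step succeeds. The key point is that a certificate, via Lemma \ref{lem:Rreconstruction}, determines the inner-connected-component graph $R$, and together with the fourth ingredient $\{\bar f(g)\}_{g\in V_{G'}}$ and the skeleton it pins down the true identity (in the group $H$) of the representatives and — by propagating the edge labels $\LX$ along the skeleton — of every vertex appearing in the skeleton. So, first I would describe an exploration of the formal transformation $\{Q_\ell\}_{\ell=1}^M$: scan the paths in order, and within each path scan from its start vertex. At a vertex whose true identity is already determined, reading off the label $\LX(e)$ of the next edge $e$ of the path determines the true identity of the other endpoint of $e$ \emph{within its own cloud} — but if $e$ is an inter-cloud edge corresponding to an edge $(g_1,g_2)\in E_G$, which of the $|V_H|$ vertices of the cloud $g_2$ it is matched to is exactly one coordinate of the random perfect matching placed on that edge of $G$, which has not yet been exposed.

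Second, I would argue that each time the scan crosses an inter-cloud edge into a vertex of the destination cloud whose identity is \emph{not} already forced by previously-exposed information, the probability that the random matching sends the current vertex to precisely that vertex is at most $1/(n - (\text{number of vertices of the destination cloud already matched along this }G\text{-edge}))$, and since every cloud contains at most $n^{O(\varepsilon)} \ll n/2$ exposed vertices by Lemma \ref{lem:versInCloud}, this is at most $2/n$. The crucial accounting is: how many such "new, unforced" crossings are there? A crossing is forced exactly when the inter-cloud edge it uses was already traversed earlier in the scan (then the matching value is already known), or when the destination vertex lies in an inner connected component and its identity is determined relative to a distinguished vertex already pinned down. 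The \emph{surprising} edges are precisely the first and last edges of subpaths corresponding to edges of $R$, i.e. the crossings that genuinely reveal new matching information; by the definition of $s_{\text{tot}}$ as the sum of degrees of inner connected components there are $s_{\text{tot}}$ surprising-edge endpoints, but the "first occurrence" bookkeeping in the skeleton means each edge of $R$ contributes essentially one genuinely new random choice, giving roughly $b_1(R)$ independent constraints once one quotients by the tree of $R$ (spanning-tree edges of $R$ are consistent with the already-determined representatives, while the remaining $b_1(R)$ independent cycles each impose a fresh $2/n$ constraint). Multiplying these independent-in-sequence bounds yields $(2/n)^{b_1(R)}$.

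The main obstacle I expect is making the "forced vs. unforced" dichotomy fully rigorous — specifically, showing that along a spanning tree of $R$ the matching values are \emph{already determined} by the representatives and the skeleton (no probability lost there), while each of the $b_1(R)$ independent cycles of $R$ closes up a loop whose consistency is a genuinely new event of probability $\le 2/n$, and that these events can be exposed one at a time so that conditioning is harmless. This requires carefully ordering the exposure (e.g. expose matching coordinates only when the scan first needs them, use a BFS/DFS of $R$ so that tree edges come before non-tree edges), checking that the already-exposed coordinates within the destination cloud never exceed $n^{O(\varepsilon)}$ so the conditional probability is $\le 2/n$, and using Lemma \ref{lem:Rreconstruction} to guarantee that the certificate really does let us perform this exploration deterministically up to the random matching choices. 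The inequalities of Lemma \ref{Lemma:stotICC} are not needed here but will be combined with this bound afterward; within this lemma one only needs $b_1(R)=|E_R|-|V_R|+\#(\text{components})$ to identify $b_1(R)$ as the number of independent cycles of $R$.
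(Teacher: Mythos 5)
Your overall plan is the same as the paper's: reconstruct $R$ from the certificate (Lemma \ref{lem:Rreconstruction}), scan it with a DFS, expose matching information as you go, charge a factor of at most $2/n$ for each of the $b_1(R)$ back edges of $R$ using Lemma \ref{lem:versInCloud} to control the denominator (at most $n^{O(\varepsilon)}$ previously exposed endpoints per cloud), and identify $b_1(R)$ with $|E_R|-|V_R|+\#(\text{components})$ via Lemma \ref{Lemma:euler}. Your final accounting — that deleting from $R$ the edges carrying a constraint leaves a spanning forest, so the number of constrained crossings is exactly $b_1(R)$ — is also the paper's concluding step.

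However, the sub-claim you flag as the main obstacle is stated backwards and, as stated, is not what you should try to prove: ``along a spanning tree of $R$ the matching values are already determined by the representatives and the skeleton (no probability lost there).'' If those tree-edge matching values really were determined by the certificate, then verifying that the random matchings take those prescribed values would itself cost probability; being ``determined'' and being ``free'' are in tension. The paper sidesteps this entirely by \emph{conditioning}, not by arguing that tree-edge matchings are determined or forced. Concretely: it designates exactly one inter-cloud edge per back edge of $R$ as a \emph{constraint edge}; for each matching $M_{(g,g')}$ it lets $\alpha_{(g,g')}$ be the number of non-constraint matching edges appearing in the scan and $\beta_{(g,g')}$ the number of constraint ones; it then conditions on an arbitrary non-zero-probability event fixing the endpoints of all the $\alpha$ non-constraint edges, observes that under this conditioning the identities of all certificate vertices are pinned down, bounds the conditional probability that the $\beta_{(g,g')}$ constraint edges are correct by $\prod_{i=1}^{\beta_{(g,g')}} 1/(n-\alpha_{(g,g')}-i+1) \le (2/n)^{\beta_{(g,g')}}$, multiplies over the independent matchings, and finally removes the conditioning by the law of total probability since the bound is uniform over the conditioning event. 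This is cleaner than ``expose one at a time so that conditioning is harmless'': you never need a ``forced vs.\ unforced'' dichotomy for individual crossings, only the tree/back-edge split in $R$, and the conditioning handles all tree-edge matchings (including any additional constraints they might in fact impose, which would only help the upper bound) in one stroke.
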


\begin{proof}
We define a process of scanning $R$ whose result is a collection of inter-cloud edges in $G_X$.
We denote this collection by {\em constraint edges}.
Our proof is based on upper bounding the probability that these chosen constraint edges are obtained when sampling $G_X$.

First, let us describe the scanning process of $R$.
We start from an arbitrary vertex in $V_R$, and perform a DFS algorithm of $R$.
During this DFS scan, we expose the element in the group $H$ is its Cayley graph, {\em i.e.}, the true identity, of all vertices in the certificate (the identity $g$ of the cloud is already known).
This is performed as follows:
\begin{enumerate}
    \item Every time we enter a new inner connected component for the very first time we expose the true identity of all vertices in the inner connected component. This can be achieved by knowing the identity of a single vertex in the inner connected component, {\em e.g.}, the vertex we entered with, and then inferring all other identities using the certificate which contains the representation of the inner connected component.
    \item Every time we traverse an edge in  $E_R$ we expose the true identities of all vertices in its corresponding path in $G_X$. If the final vertex of the path is a vertex whom we already know its identity, then we add the matching edge leading to it to the collection of constraint edges.
\end{enumerate}
It is important to note that the above DFS scan might expose some partial information on the random matching between two neighboring clouds, {\em  i.e.}, when the DFS traverses an inter-cloud edge in $G_X$ we expose the information that the true identities of its two endpoints are matched.
Note that since the certificate is proper, the information exposed from the DFS scan can never be in a contradicting state with the certificate, {\em e.g.}, two different vertices in the same cloud have an identical true identity.

Each constraint edge $((g,h),(g',h'))$ gives a constraint on the uniform random matching between the clouds $g$ and $g'$.
This constraint is of the form: $(g,h)$ is matched to $(g',h')$.
We partition the constraint edges according to the matching they belong to.
For every matching $M_{(g,g')}$ (where $(g,g')\in E_G$), recall that $M_{(g,g')}$ is a uniform random matching in a complete bipartite graph with $V_H$ on the left and $V_H$ on the right.
We denote by $ \alpha_{(g,g')}$ the number of edges in the certificate that belong to $M_{(g,g')}$ and are not constraint edges, and by $\beta_{(g,g')}$ the number of constraint edges in $M_{(g,g')}$.
In what follows we condition on any (non-zero probability) event that dictates the true identities of the endpoints of all $\alpha _{(g,g')}$ matching edges that are in the certificate but are not constraint edges. 
We note that under the above conditioning, the probability of the $ \beta _{(g,g')}$ constraint edges of $M_{(g,g')}$ to correspond to the exposed true identities of its endpoints equals: $$\prod_{i=1}^{\beta_{(g,g')}} \frac{1}{n - \alpha_{(g, g')} - i +  1} .$$
We note that:
\begin{align}
 \prod_{i=1}^{\beta_{(g,g')}} \frac{1}{n - \alpha_{(g, g')} - i +  1} \le \prod_{i=1}^{\beta_{(g,g')}} \frac{1}{n - \alpha_{(g, g') }- \beta_{(g,g')}} \le \left(\frac{2}{n}\right)^{\beta_{(g,g')}} . \label{inequality}  
\end{align}
The last inequality follows for a large enough $n$ and from Lemma \ref{lem:versInCloud} which implies that $\alpha_{(g, g')} + \beta_{(g,g')} \le n^{O(\varepsilon)}$.

Similarly to the above, let us now condition on any (non-zero probability) event that dictates the true identities of the endpoints of all $\alpha _{(g,g')}$ matching edges that are in the certificate but are not constraint edges and for all matchings $(g,g')\in E_G$.
The event that we wish to upper bound its probability is that we obtain the given fixed proper certificate.
We note that the above event that we condition on, implies the true identities of all vertices in the certificate, and thus for every constraint edge in every matching we obtain the desired true identities of its endpoints.
The probability of the event that every constraint edge in every matching corresponds to the desired true identities of its endpoints, upper bounds the probability of obtaining the given fixed proper certificate.
Thus, since all matchings are chosen independently, we can multiply (\ref{inequality}) over all matchings and obtain the following upper bound: 
$(2/n)^{\sum _{(g,g')\in E_G} \beta_{(g,g')}} $.
Since this bound does not depend on the event we conditioned on, we can use the law of total probability and conclude that this upper bound holds unconditionally.

To conclude the proof we will prove that $\sum _{(g,g')\in E_G}\beta_{(g,g')}= b_1(R)$. Removing from $R$ all edges whose corresponding path  in $G_X$ contains a constraint edge, leaves a spanning tree for each connected component of $R$. Thus, the number of remaining edges is $|V_R|$ minus the number of the connected component of $R$, which equals by Lemma \ref{Lemma:euler} to $|E_R| - b_1(R)$.
On the other hand this number equals also to $|E_R| - \sum _{(g,g')\in E_G}\beta_{(g,g')}$, as there is at most a single constraint edge in every path in $G_X$ that corresponds to an edge in $R$.
This equality concludes the proof.
\end{proof}

\vspace{5pt}
\noindent {\bf{Putting it All Together.~~~~~}}
Let us finalize our analysis, by providing a lower bound on $b_1(R)$.
This will result in an absolute upper bound on the probability of obtaining a given proper certificate.
This is captured by the following lemma, which is the only place in our proof that we use the definition of cycle-homeomorphism.

\begin{lemma}\label{Lemma:BettiBound}
For every certificate $b_1(R) \geq \left(1-O(\varepsilon)\right)\left(\frac{d}{2} - 1\right)\cdot n $.
\end{lemma}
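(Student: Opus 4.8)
The goal is a lower bound $b_1(R) \ge (1-O(\varepsilon))(\tfrac{d}{2}-1)n$, so the natural strategy is to exhibit many independent cycles in $R$, or equivalently to show $\ker \eta_R$ is large. Since $b_1$ counts independent cycles (mod $2$), I would build a linear map from the cycle space of $G'$ (more precisely a large subspace of $\ker \eta_{G'}$) into $\ker \eta_R$ and argue it is injective up to an $O(\varepsilon)$ loss. The first step is to recall that $G$ is a $d$-regular expander of girth $\Omega(\log n)$, so $b_1(G) = |E_G| - |V_G| + 1 = (\tfrac{d}{2}-1)n + 1$, and since $|E_{G'}| \ge (1-\varepsilon)|E_G|$ with $|V_{G'}| \le |V_G| = n$, Lemma~\ref{Lemma:euler} gives $b_1(G') \ge (1-O(\varepsilon))(\tfrac{d}{2}-1)n$. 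So it suffices to inject (a large part of) $\ker\eta_{G'}$ into $\ker\eta_R$.

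The heart of the argument is the cycle-homeomorphism property (Definition~\ref{Def:CycleHomomorphism}), which by Definition~\ref{def:Split}(1) holds for $(\pi\circ\bar f, P_{\pi\circ\bar f})$ with respect to $G'$. Each edge $(g_1,g_2)\in E_{G'}$ corresponds to the path $P_\ell$ in $G_X$ between $\bar f(g_1)$ and $\bar f(g_2)$; in the inner-connected-component graph $R$ this path decomposes into a walk through a sequence of inner connected components joined by surprising-edge subpaths, i.e. it becomes a walk $W_\ell$ in $R$. I would define a linear map $\Phi:\mathbb{F}_2^{E_{G'}} \to \mathbb{F}_2^{E_R}$ sending the indicator of edge $(g_1,g_2)$ to the mod-$2$ sum of the edges of $R$ traversed by $W_\ell$. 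Because each $W_\ell$ is a walk from the component of $\bar f(g_1)$ to the component of $\bar f(g_2)$, the boundary of $\Phi(e)$ in $R$ is (the images of) the two endpoint components; hence for any cycle $C \subseteq E_{G'}$ with $\eta_{G'}(\mathbf{1}_C)=0$ the boundaries telescope and $\Phi(\mathbf{1}_C) \in \ker\eta_R$. This gives a linear map $\ker\eta_{G'} \to \ker\eta_R$; it remains to bound its kernel.

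To control $\ker \Phi$ I would use the cycle-homeomorphism identity together with the high girth of $G$. Project $R$ back down to $G$ via $\pi$: the walk $W_\ell$ projects to the path $P_{\pi\circ\bar f}((g_1,g_2))$ in $G$ (up to the intra-cloud detours inside inner connected components, which project to closed walks and so vanish mod $2$). Thus there is a commuting square relating $\Phi$ to the map $P_{\pi\circ\bar f}$ on edge-spaces of $G$, and the cycle-homeomorphism condition says precisely that on any simple cycle $C\subseteq E_{G'}$ the mod-$2$ image $\sum_{e\in C}\mathbf{1}_{P_{\pi\circ\bar f}(e)}$ equals $\mathbf{1}_C$ — in particular it is nonzero whenever $C\neq\emptyset$. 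Since simple cycles span $\ker\eta_{G'}$, this forces the composition $\ker\eta_{G'}\xrightarrow{\Phi}\ker\eta_R\to \mathbb{F}_2^{E_G}$ to be injective, hence $\Phi|_{\ker\eta_{G'}}$ is injective, and therefore $b_1(R) = \dim\ker\eta_R \ge \dim\ker\eta_{G'} = b_1(G') \ge (1-O(\varepsilon))(\tfrac{d}{2}-1)n$.

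\textbf{Main obstacle.} The delicate point is justifying that the ``intra-cloud detour'' discrepancies between $W_\ell$ (in $R$) and $P_{\pi\circ\bar f}(e)$ (in $G$) really do cancel mod $2$, i.e. that the projection-to-$G$ map is well-defined on the level of $\ker\eta_R$ and interacts correctly with $\Phi$; this requires unwinding the definitions of inner connected component, surprising edge, and skeleton, and using that each $Q_\ell$ is a \emph{simple} path of at most $\varepsilon\log n$ edges while $G$ (and $H$) have girth $\omega(\varepsilon\log n)$, so that the only closed sub-walks that appear are forced to be null mod $2$. Once that bookkeeping is in place, combining it with Lemma~\ref{Lemma:euler} and the expander/girth bound on $b_1(G')$ finishes the proof.
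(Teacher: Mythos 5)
Your proposal is correct and follows essentially the same route as the paper: your map $\Phi$ is the paper's $\sigma:\mathbb{F}_2^{E_{G'}}\to\mathbb{F}_2^{E_R}$, your ``projection back down to $G$'' is the paper's $\pi:\mathbb{F}_2^{E_R}\to\mathbb{F}_2^{E_G}$, and both arguments use the commuting square $\tilde\sigma\circ\eta_{G'}=\eta_R\circ\sigma$ plus the cycle-homeomorphism identity $\pi\circ\sigma(\mathbf{1}_C)=\mathbf{1}_C$ on simple cycles to conclude $\sigma$ is injective on $\ker\eta_{G'}$, giving $b_1(R)\ge b_1(G')\ge(1-O(\varepsilon))(\tfrac{d}{2}-1)n$. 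The ``main obstacle'' you flag (intra-cloud detours cancelling mod $2$ under projection) is handled the same way in the paper, via the observation that intra-cloud edges project to nothing in $G$ and the footnote allowing $\pi(\mathbf{1}_e)$ to be a path plus disjoint cycles.
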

\begin{proof}
In the proof we focus on $\eta _{R}$, $\eta _G$, and $\eta _{G'}$, and how they interact.
The heart of the proof is that the size of the kernel of $\eta _{R}$ is at least as large as the size of the kernel of $ \eta _{G'}$.

The proof requires the following additional linear transformation $ \pi :\mathbb{F}_2^{E_{R}}\rightarrow \mathbb{F}_2^{E_G}$.
We define $\pi$ by how it operates on the basis of $\mathbb{F}_2^{E_{R}}$.
Note that each base element of $\mathbb{F}_2^{E_{R}}$ corresponds to an edge $e\in E_{R} $, which in turn
corresponds to a subpath in $\text{skel}(\{ Q_{\ell}\} _{\ell = 1}^M)$.
Thus, $e\in E_{R}$ defines a path in $G$ by the natural projection that assigns every $(g,h)$ to $g$. 
Therefore, $\pi (\mathbf{1}_e)\in \mathbb{F}_2^{E_G}$ equals in its $e'\in E_G$ coordinate the parity of the number of occurrences of $e'$ in the above defined path.\footnote{Note that if the path in $G$ that corresponds to $e\in E_{R}$ is not simple, then $\pi (\mathbf{1}_e)$ might equal a path and and an additional collection of disjoint cycles.}

We can also define $\tilde{\pi}:\mathbb{F}_2^{V_{R}}\rightarrow \mathbb{F}_2^{V_G}$ by defining how it operates on the basis of $\mathbb{F}_2^{V_{R}}$: $\tilde{\pi}(\mathbf{1}_v)$ equals $1$ in the coordinate that corresponds to the cloud the inner connected component $v$ resides in and $0$ otherwise.
A crucial, yet simple, observation is that $\tilde{\pi}(\eta _{R}(\mathbf{x})) = \eta _G(\pi (\mathbf{x}))$, for every $\mathbf{x}\in \mathbb{F}_2^{E_{R}}$.
Thus, a consequence of this crucial observation is that $\sigma$ sends elements in the kernel of $\eta _{G'}$ to elements in the kernel of $\eta _{R}$.

Similarly to the definitions of $\pi$ and $\tilde{\pi}$, we can define $\sigma:\mathbb{F}_2^{E_{G'}}\rightarrow \mathbb{F}_2^{E_R} $ and $\tilde{\sigma}:\mathbb{F}_2^{V_{G'}}\rightarrow \mathbb{F}_2^{V_{R}} $.
For each base element of $e '\in \mathbb{F}_2^{E_{\text{G'}}} $, we take the path in $R$ between the two inner connected components that contain the representatives of the two endpoints of $e'$ that originated from the appropriate $Q_{\ell}$.
Thus, we can define $\sigma(e')$ as the sum in $\mathbb{F}_2^{E_{R}}$ over edges appearing in this path.
Furthermore, for each base element $v\in V_{G'}$ we can define $\tilde{\sigma}(\mathbf{1}_v)$ to be the inner connected component, {\em i.e.}, a vertex in $V_R$, containing the representative of $v$. 
As before, one can note that $ \tilde{\sigma}(\eta _{\text{G'}}(\mathbf{x})) = \eta _{R}(\sigma (\mathbf{x}))$, for every $\mathbf{x}\in \mathbb{F}_2^{E_{G'}}$.
The above transformations can be captured in the following commutative diagram:

\[\begin{tikzcd}
& \mathbb{F}_2^{E_{G'}} \ar[r, "\eta_{G'}"] \ar[d, "\sigma"] & \mathbb{F}_2^{V_{G'}}  \ar[d, "\tilde{\sigma}"]\\
& \mathbb{F}_2^{E_{R}} \ar[r, "\eta_{R}"] \ar[d, "\pi"] & \mathbb{F}_2^{V_{R}} \ar[d, "\tilde{\pi}"]\\
& \mathbb{F}_2^{E_{G}} \ar[r, "\eta_{G}"] & \mathbb{F}_2^{V_{G}}\\
\end{tikzcd}\]

Note that if we prove that $dim\ker \eta _{R} \geq dim \ker \eta _{G'}$ then the proof is concluded.
Lemma \ref{Lemma:euler} implies that:
 $$ dim \ker \eta _{G'}= |E'|-|V'|+\text{\#(connected components of $G'$)}.$$ 
 We note that: $ |E'|-|V'|+\text{\#(connected components of $G'$)} \geq (1-O(\varepsilon)) (d/2-1)n$.
This inequality is true since $|E'|\geq (1-O(\varepsilon))|E|$ (see Definition \ref{def:Split}), $|V'|\leq |V|$, and the number of connected components is non negative.
Thus:
\begin{align*}
 dim \ker \eta _{G'} & \geq (1-O(\varepsilon))|E| - |V| \\   
 & \geq (1-O(\varepsilon))\frac{d}{2} \cdot n - n\\
 & = \left(1-O(\varepsilon)\right)\left(\frac{d}{2} - 1\right)\cdot n .
\end{align*}

Given the above inequality, let us now focus on proving that $dim\ker \eta _{R} \geq dim \ker \eta _{G'}$.
In order to do that, it suffices to show that $\sigma$ is injective when restricted to $\ker \eta _{\text{G'}}$.
As the certificate corresponds to a split which is a cycle-homomorphism (see Definitions \ref{Def:CycleHomomorphism} and \ref{def:Split}), we claim that for any given cycle $C\subseteq E_{G'}$: $ \pi \circ \sigma (\mathbf{1}_C)=\mathbf{1}_C$.
This follows directly from the definition of cycle-homeomorphism.
We note that every element in $\ker \eta _{G'}$ can be written as a sum of simple cycles.
Thus, $\pi \circ \sigma$ coincides with the natural injection for all elements in $\ker \eta_{G'}$ and in particular $\pi \circ \sigma$ is injective restricted to $\ker \eta_{G'}$.
Hence, $\sigma$ is injective restricted to $\ker \eta_{G'}$, which in turn implies that $dim\ker \eta _{R} \geq dim \ker \eta _{G'}$.
\end{proof}

\subsection{Proof of Theorem \ref{thrm:NoSplit}}
\begin{proof}[Proof of Theorem \ref{thrm:NoSplit}]
Let us prove that there is a graph $G_X$ that can be sampled from $\ExtGH$ that has no proper certificate.
We start by fixing the value of $b_1(R)$ and asking how many proper certificates can attain this fixed value (recall that given a certificate one can reconstruct $R$ from it by Lemma \ref{lem:Rreconstruction}).
From Lemma \ref{Lemma:BettiBound} we can deduce that $b_1(R) = \Omega(n)$, thus Lemma \ref{Lemma:stotICC} implies $b_1(R) = \Theta(s_{tot})$.
Thus, fixing a value of $b_1(R)$, gives us $O(b_1(R))$ options for the value of $s_{tot}$ that cannot exceed $O(b_1(R))$.
Using Lemma \ref{Lemma:CertCount}, we can infer that there are at most
$\sum _{s_{\text{tot}}=0}^{O(b_1(R))} n^{n+O(\varepsilon (n + s_{\text{tot}}))} \leq n^{n+O(\varepsilon b_1(R))}$ options for a proper certificate with the given $b_1(R)$.

Now let us consider the probability of obtaining a given proper certificate.
Lemma \ref{Lemma:probBound} upper bounds the probability of obtaining a given proper certificate by $(2/n)^{b_1(R)}$.
Recalling that $b_1(R) = \Omega(n)$ and $2 = o(n^\varepsilon)$ (since $\varepsilon$ is a constant) we can upper bound $(2/n)^{b_1(R)}$ by  $n^{-(1-\varepsilon) b_1(R)}$.
Lemma \ref{Lemma:BettiBound} gives a lower bound on $b_1(R)$, let's denote it by $B_1$.
Combining these two bound via a simple union bound, one can obtain that the probability there exists an proper certificate that can be obtained from a graph $G_X$ sampled from $\ExtGH$ is at most:
\begin{align*}
 \sum_{b_1(R)=B_1}^{\infty} n^{n+O(\varepsilon b_1(R))} \cdot n^{-(1-\varepsilon)b_1(R)} = & \sum_{b_1(R)=B_1}^{\infty} n^{n - (1-O(\varepsilon))b_1(R)}   \\
 \leq & n^{n - (1-O(\varepsilon))B_1} .
\end{align*}
The last inequality follows from the facts that: $(1)$ a geometric series with a quotient which is less than half sums up to at most twice the first element in the series; and $(2)$ $2 = o(n^{\varepsilon B_1})$.
Thus, to conclude the proof it suffices to prove that $ n^{n - (1-O(\varepsilon))B_1} < 1$.
The latter is equivalent to $n < (1-O(\varepsilon))B_1 $.
The proof is concluded since for every $d\geq 3$ and small enough constant $\varepsilon $ the following is true:
$ (1-O(\varepsilon))B_1= \left(1-O(\varepsilon)\right)\left(\frac{d}{2} - 1\right)\cdot n$.
\end{proof}

\section{Discussion and Future Research}
In this section we discuss some aspects of our construction, as well as future research.

First, let us focus on our construction.
Regarding the \textbf{use of Cayley graphs}, we believe that it is superfluous to assume that $G$ and $H$ are Cayley graphs. We use this extra structure to reduce the information needed describing the inner connected components and thus strengthening our bound on the number of certificates. Alternatively, instead of assuming that $G$ and $H$ are Cayley graphs, one can add the identity (in $H$) of one of the vertices in the component. This introduces problems when considering components of degree $3$.
    However, such components have less vertices than the girth of $H$ and thus have a simple combinatorial structure which can be counted separately.
    
Considering our \textbf{parameters selection}, there are $3$ main parameters that need to be chosen: length of intra-cloud edges, length of inter-cloud edges, and length of edges connecting a terminal to its neighboring non-terminal vertex.
    We note that our choice of parameters is optimal for this construction.
    This can be seen as there are three natural integral solutions, each provides a different lower bound on the value of the integral solution.
    The first assigns all non-terminals to the same terminal, the second assigns every non-terminal to its (single) neighboring terminal, and the third assigns all vertices in a cloud to a random terminal inside the cloud.
    The parameters are chosen as to balance the bounds provided by the above solutions.

Second, let us briefly mention specific future research direction.
We conjecture that \textbf{recursively repeating} our randomized graph extension with $t$ graphs (with appropriate edge lengths) will give an integrality gap of $\Omega(\log^{\nicefrac{t}{(t+1)}}{k})$.
    Analyzing this more general construction remains an open question.
When considering the \textbf{earthmover relaxation}, in order to provide an integrality gap for this relaxation via our approach, one needs to adapt Section \ref{sec:IntegralSol}.
    Specifically, one needs to present an instance for which the existence of a cheap integral solution implies a split.
\section*{Acknowledgments}
The authors would like to thank Yuval Rabani for sharing with them the algorithm and its proof that when $D$ is the shortest path metric of a high girth expander {\EXT} admits an approximation of $O(1)$.
The authors are also grateful to Yuval Filmus and Prahladh Harsha for pointing to them the relevant literature relating to lifts of graphs.
Moreover, the authors would like to thank Roy Meshulam for insightful discussions.
Finally, the authors would like to thank the anonymous reviewers for helpful remarks regarding the presentation of the paper.

This research was supported by European Horizon 2020 research programm under grant agreement 852870, and ISF 1336/16.
\clearpage

\bibliographystyle{alpha}
\bibliography{references}

\end{document}